\newcommand{\vecp}{\vec{p}}
\newcommand{\cE}{\mathcal{E}}
\DeclareMathOperator*{\amax}{argmax}
\newcommand{\ind}[1]{\mathbbm{1}\left\{ #1\right\}}
\newcommand{\cD}{\mathcal{D}}
\newcommand{\rew}{\text{reward}}
\DeclareMathAlphabet{\mathcal}{OMS}{zplm}{m}{n}
\newcommand{\cK}{\mathcal{K}}
\renewcommand{\vec}[1]{\bm{#1}}
\renewcommand{\P}{\mathbb{P}}
\title{Disincentivizing Polarization in Social Networks}
\author{
	Christian Borgs \\ UC Berkeley \\ \texttt{borgs@berkeley.edu}
 \and 
	Jennifer Chayes \\ UC Berkeley \\ \texttt{jchayes@berkeley.edu}
    \and
	Christian Ikeokwu \\ UC Berkeley \\ \texttt{christian\_ikeokwu@berkeley.edu}
  \and
	Ellen Vitercik \\ Stanford University \\ \texttt{vitercik@stanford.edu}}
\begin{document}

\maketitle

\begin{abstract}
On social networks, algorithmic personalization drives users into filter bubbles where they rarely see content that deviates from their interests. We present a model for content curation and personalization that avoids filter bubbles, along with algorithmic guarantees and nearly matching lower bounds. In our model, the platform interacts with $n$ users over $T$ timesteps, choosing content for each user from $k$ categories. The platform receives stochastic rewards as in a multi-arm bandit. To avoid filter bubbles, we draw on the intuition that if some users are shown some category of content, then all users should see at least a small amount of that content. We first analyze a naive formalization of this intuition and show it has unintended consequences: it leads to ``tyranny of the majority'' with the burden of diversification borne disproportionately by those with minority interests. This leads us to our model which distributes this burden more equitably. We require that the probability any user is shown a particular type of content is at least $\gamma$ times the average probability all users are shown that type of content. Full personalization corresponds to $\gamma = 0$ and complete homogenization corresponds to $\gamma = 1$; hence, $\gamma$ encodes a hard cap on the level of personalization. We also analyze additional formulations where the platform can exceed its cap but pays a penalty proportional to its constraint violation. We provide algorithmic guarantees for optimizing recommendations subject to these constraints. These include nearly matching upper and lower bounds for the entire range of $\gamma \in [0,1]$ showing that the cumulative reward of a multi-agent variant of the Upper-Confidence-Bound algorithm is nearly optimal. Using real-world preference data, we empirically verify that under our model, users share the burden of diversification and experience only minor utility loss when recommended more diversified content.
\end{abstract}

\section{Introduction}
Over the past decade, large internet platforms have amassed an unprecedented level of social and political power. Research has shown that the feedback loops generated by algorithmic recommendations increase polarization~\citep{Jiang2019:degenerate, Krueger2020:hidden,Stoica2019;hegemony}.
Echo chambers created by algorithmic recommendations on these platforms can have a wide range of adverse effects, such as amplifying and creating glass ceilings for minorities~\citep{Stoica2018:algorithmic}, as well as limiting exposure and job recommendations~\cite{Fabbri2022:exposure}. They also lead to disinformation and propaganda  being disproportionately spread to minoritized  groups~\cite{Freelon2022:black}.

In this paper, we propose an approach to content recommendation that simultaneously preserves the positive aspects of personalization while avoiding the pitfalls of filter bubbles.

We do so by introducing a model that ensures that if some users are served a particular category of content, then all users will see at least a small amount of that content.
For example, if a network includes individuals across a political spectrum, then every user will be exposed to at least a small amount of news from opposing perspectives.
This allows a platform to present diverse content without forcing content on its users that no one is interested in.
This approach builds upon seminal work by \citet{Celis19:Controlling} who initiated the study of algorithmic approaches to reducing polarization. However, our approach to avoiding filter bubbles is different and our analysis techniques diverge significantly, as detailed in Section~\ref{sec:related}.

We model a platform recommending content to users with a standard multi-armed bandit formulation.
There are $k$ categories of content---such as fashion, sports, left-learning political content, right-leaning political content, and so on---and $n$ users. For each user and content category, the platform receives a stochastic reward from an unknown distribution for showing the user content from that category, measured, for example, in terms of engagement or ad revenue.
The platform interacts with the $n$ users over $T$ timesteps, at each timestep choosing a distribution over content categories for each user. The platform's goal is to maximize its cumulative reward.
Standard bandit algorithms would eventually learn for each user the category with maximal expected reward and only show them content from that category, at which point the user's content recommendations would be caught in a filter bubble.

\subsection{Our contributions}

We propose a flexible approach to disincentivizing filter bubbles that adapts to the interests of the individuals on the network. We summarize our contributions along the following two axes.

\subsubsection{Modeling contributions}\label{sec:modeling}
We first analyze an approach that requires that the distribution of content shown to any one user is not far from the distribution shown to the population, so users cannot be siloed into disjoint filter bubbles. However, we show that the optimal recommendations exacerbate \emph{tyranny of the majority}: the burden of diversification is borne by groups with minority interests (as often happens with naive approaches to diversification). A majority group will exclusively see content that they most enjoy while recommendations for minority users become far less relevant.

\paragraph{An equitable approach to preventing filter bubbles.} The intuition behind our revised approach is that in order to avoid filter bubbles \emph{and} tyranny of the majority, (1) users should primarily see content that they are most interested in (thus avoiding tyranny of the majority), and (2) if some users are shown a particular type of content, then all users should see at least a small amount of that content (thus avoiding filter bubbles). When both requirements are satisfied, users with majority interests will be exposed to content that interests minority groups and vice versa.

Formally,
for each user, we impose the following constraint: we require that the probability she is shown content from a particular category must be at least $\gamma$ times the average probability the entire population is shown content from that category, where $\gamma \in [0,1]$ is a tunable parameter. We refer to this model as \textbf{Formulation 1.} Setting $\gamma = 0$ corresponds to complete personalization and setting $\gamma = 1$ requires that everyone see the same distribution of content. Moreover, if no one on the network is interested in some type of content, there is no requirement that users be shown that content. When $\gamma \leq \frac{1}{2}$, we show that conditions (1) and (2) are met and thus the burden of diversification is borne more equally among all users.
We also provide a second formulation, called \textbf{Formulation 2}, where instead of imposing hard constraints, the platform is penalized based on the the extent to which it violates the $\gamma$ constraint.

\paragraph{Taxation without knowledge of the true content distribution.} The penalization described above depends on the true, underlying probabilities that the platform assigns to different types of content at each timestep. To augment the flexibility of our approach, we also analyze a model where an auditor only has access to a dataset describing the types of content that users were actually shown, as opposed to a description of the true distributions. In this model, the platform is penalized at the end of the $T$ timesteps based on the extent to which the \emph{empirical} distribution over content shown to each user violates the $\gamma$ constraint described above. We refer to this model as \textbf{Formulation 3.}

\subsubsection{Technical contributions}\label{sec:technical}

Since the platform does not know the reward distributions (corresponding to the users' preferences for the different types of content), it must learn a high-reward policy over the course of the $T$ rounds. We analyze the \emph{regret} of the Upper-Confidence-Bound (UCB) algorithm. The key challenges we face are providing nearly-matching lower bounds---which depend on structure exhibited by the specific constraints that we impose---and bounding the regret under Formulation 3, under which the optimal policy may be \emph{history-dependent.}

\paragraph{Regret upper bounds.} Under Formulation 1, we measure regret as the difference between (1) the cumulative reward of the optimal distribution over content that satisfies our $\gamma$ constraint and (2) the cumulative reward of the platform's learning algorithm. Crucially, the optimal distribution (1) is defined by the users' reward distributions, but these are unknown to the learning algorithm. When $\gamma = 1$, a variant of the UCB algorithm achieves a regret of $\tilde O(\sqrt{nkT})$ and for $\gamma < 1$, another variant achieves a regret of $\tilde O(n\sqrt{kT}).$
Under Formulations 2 and 3, we measure regret with respect to the optimal policy that maximizes the cumulative reward minus the penalty. Our regret bounds are $\tilde O(n\sqrt{kT})$.

\textbf{Key challenge.} Under Formulation 3, the optimal policy may be \emph{history-dependent}: it may dynamically adjust its recommendations based on the empirical distribution over content thus far, and thus the magnitude of the final penalty. This is in contrast to Formulations 1 and 2, where the optimal policy is a fixed distribution over content.

\paragraph{Regret lower bounds.} We provide a nearly-matching lower bound on regret under Formulation 1. As in the upper bound,
our lower bound transitions from an $\Omega(n)$ dependence for small $\gamma$ to an $\Omega(\sqrt{n})$ dependence for large $\gamma$. For $k= 2$ arms, we prove a lower bound of $\Omega(n\sqrt{T})$ for $\gamma < \frac{1}{2}$. Meanwhile, for all $k \geq 2$ and all $\gamma \in [0,1]$, we prove a lower bound of $\Omega(\sqrt{nkT})$.
This means that no algorithm has regret better than $\Omega(n\sqrt{T})$ for $\gamma < \frac{1}{2}$ or $\Omega(\sqrt{nkT})$ for any $\gamma \in [0,1]$.

This transition from a $\Theta(n)$ to $\Theta(\sqrt{n})$ dependence elucidates a tension between the reward of the optimal policy and the ability of the learning algorithm to compete with the optimal policy.
As $\gamma$ grows, the set of distributions that the platform can show the user while still satisfying the $\gamma$ constraint shrinks. Thus, the optimal policy comes from an increasingly restricted set so the regret benchmark is smaller. Likewise, as $\gamma$ grows, the learner has to use an increasingly restricted set of policies to compete with the optimal policy. Since regret shrinks as $\gamma$ grows, we show that 
the optimal policy's reward diminishes at a faster rate than the learner's handicap in competing with the optimal policy.

\textbf{Key challenge.} Lower bounds for bandit problems typically follow by identifying two worst-case problem instances that are similar enough that any algorithm would not be able to statistically distinguish between them, but are distinct enough to ensure that even if an algorithm has low regret on one instance, it will have high regret on the other. Simply creating $n$ copies (one for each user) of the worst-case problem instances used in standard bandit lower bounds would lead to a large statistical difference between problem instances, thus precluding an $\Omega(n)$ dependence. Our lower bound construction therefore takes advantage of structure specific to our model.

\paragraph{Experiments.} We analyze the optimal policies under the formulations from Section~\ref{sec:modeling} using real user preference data~\cite{Harper15:Movielens}. We empirically verify that when users' preferences are heterogeneous, subgroups share the burden of diversification. We also show that users experience only a minor loss in utility when recommended diversified content.

\subsection{Related work}\label{sec:related}
There has been significant interest in understanding the mechanics of how recommender systems affect large-scale opinion dynamics, and if and when they lead to polarization~\citep[e.g.,][]{Haghtalab2021:belief, Bail2018:exposure, Rychwalska2018:polarization}. Most of the analysis has focused on how recommender systems impact network structure \cite{Su2016:effect} and how 
this affects the spread of information and the opinions of members on the network. 
Recently there have been growing calls to algorithmically increase ``exposure diversity'' and combat filter bubbles~\cite{Bozdag2015:breaking, Elahi2022:towards, Helberger2018:exposure}. \citet{Castells2021:novelty} discuss  methodologies and metrics to assess recommendation diversity, and \citet{Halpern2023:optimal} analyze the trade-off between diversity and engagement in recommendation algorithms.

The most related research to ours is seminal work by \citet{Celis19:Controlling}, who initiated the study of algorithmic approaches to reducing polarization. There are a variety of differences between our work and theirs, highlighted below.
\begin{itemize}
    \item \emph{Modeling approach.} \citet{Celis19:Controlling} suggest that a regulator should place pre-determined, fixed upper and lower bounds on the probability that each arm is played so that no user can exclusively see one type of content. Choosing bounds for each type of content, however, may be challenging.
(For example, how should bounds on fashion content and major world events compare?) 
Moreover, if no user is interested in a type of content, it may not make sense to force all users to see it. The regulator would have to make these differential decisions, which would be a divisive and controversial task. These concerns are largely ameliorated under our model.
    \item \emph{Stronger assumption on the regulator's knowledge.} \citet{Celis19:Controlling} assume the regulator can control the exact probabilities that the platform shows different types of content to users. In contrast, in our Formulation 3, we propose a tax based on the content that the platform actually showed the user. As we describe in Section~\ref{sec:technical}, this introduces technical challenges in providing a no-regret algorithm for the platform.
    \item \emph{Lower bounds.} Our nearly-matching lower bounds help develop a complete understanding of this problem.
\end{itemize}

Since the multi-armed bandit problem was proposed~\cite{Thompson1933:likelihood}, many variants 
have been studied, such as bandits with budgets~\cite{Badanidiyuru13:Bandits, Slivkins19:Introduction, Agrawal14:Bandits}, bandits with constraints~\cite{Pacchiano21:Stochastic, Amani19:Linear, Moradipari21:Safe, Dani08:Stochastic}, and bandits with floors on content~\cite{ Wu:16conservative, Claure20:Multi}.
Only a few variants~\cite[e.g.,][]{Hossain2021:fair} study multi-agent settings. However, they usually still involve a common reward like in the classical multi-armed bandit problem. There has also been recent work on fairness in multi-armed bandits~\cite[e.g.,][]{Hossain2021:fair, Joseph2016:fairness} but none of these focus on the issues of filter bubbles and polarization in social networks.

\section{Notation and model}\label{sec:notation}
We use $\cP^{d-1} = \{\vec{x} \in [0,1]^d : \norm{\vec{x}}_1 = 1\}$ to denote the probability simplex and $[k]$ to denote the set $[k] = \{1, 2, \dots, k\}.$

\paragraph{Problem definition.} There are $n$ users and $k$ categories of content---for example, fashion, sports, right-leaning news, left-leaning news, and so on---each modeled as an \emph{arm} of a \emph{$k$-armed bandit}. An instance of our problem, denoted $\nu = \left\{\cD_{i,j} : i \in [n], j \in [k]\right\}$, is defined by reward distributions $\cD_{i,j}$ over $[0,1]$ with density function $f_{i,j}: [0,1] \to \R_{\geq 0}$. This distribution  models the platform's reward for showing user $i$ content from category $j$, measured in terms of engagement or ad revenue, for example. The set of all instances $\nu$ is denoted $\cE^{n,k}$.
The mean of user $i$'s reward distribution for arm $j$ is denoted $\mu_{i,j} \in [0,1]$, with $\vec{\mu}_i = \left(\mu_{i,1}, \dots, \mu_{i,k}\right)$. The instance $\nu$ is unknown to the platform.

\paragraph{Interaction between platform and users.} This interaction takes place over  $T$ timesteps.
At each timestep $t \in [T]$:
\begin{enumerate}
\item The platform selects an \emph{action},
which is a distribution over arms for each user. This distribution corresponds to a random variable $\vec{A}_t \in [k]^n$ over arm choices for each of the $n$ users.
We use the notation $\vec{a}_t \in [k]^n$ to denote the specific set of arms the platform plays on round $t$, so it is a realization of the random variable $\vec{A}_t.$
\item Given the set of arms $\vec{a}_t =  \left(a_{t,1}, \dots, a_{t,n}\right) \in [k]^n$, the platform receives a reward for each user. The reward for user $i$ is drawn from the distribution $\cD_{i, a_{t,i}}$. We use the random variable $\vec{X}_t = \left(X_{t,1}, \dots, X_{t,n}\right) \in [0,1]^n$ to denote the platform's reward on round $t$. We also use $\vec{x}_t \in [0,1]^n$ to denote a realization of this random variable.
\end{enumerate}

\paragraph{Platform's learning algorithm.} The platform uses a learning algorithm, or \emph{policy}, $\pi$ to decide the distribution over arms at each timestep.
On timestep $t \in [T]$, the (randomized) policy $\pi$ takes as input the history $\vec{h}_{t-1} = \left(\vec{a}_1, \vec{x}_1, \dots, \vec{a}_{t-1}, \vec{x}_{t-1}\right) \in ([k]^n \times [0,1]^n)^{t-1}$ and returns the set of arms $\vec{a}_t \in [k]^n$ that will be played on round $t$. The conditional probability that $\vec{A}_t = \vec{a}_t$ given the history $\vec{A}_1 = \vec{a}_1, \vec{X}_1 = \vec{x}_1, \dots, \vec{A}_{t-1} = \vec{a}_{t-1}, \vec{X}_{t-1} = \vec{x}_{t-1}$ is denoted $\pi(\vec{a}_t \mid \vec{a}_1, \vec{x}_1, \dots, \vec{a}_{t-1}, \vec{x}_{t-1})$, or more compactly as $\pi(\vec{a}_t \mid \vec{h}_{t-1})$. The notation $\Pi^{n,k}$ denotes the set of all policies $\pi$.

\paragraph{Distribution over outcomes.} Since the reward distributions are independent, the conditional distribution of the reward $\vec{X}_t \in [0,1]^n$ given $\vec{A}_{t} = \vec{a}_t =  \left(a_{t,1}, \dots, a_{t,n}\right) \in [k]^n$ has density function \[f_{\vec{a}_t}\left(\vec{x}_t\right) = \prod_{i = 1}^n f_{i, a_{t,i}}\left(x_{t,i}\right).\]
The interaction between the policy $\pi$ and the instance $\nu$ induces a distribution $\P_{\pi \nu}$ over outcomes with density function \begin{equation}f_{\pi \nu}\left(\vec{a}_1, \vec{x}_1, \dots, \vec{a}_T, \vec{x}_T\right) = \prod_{t = 1}^T \pi(\vec{a}_t \mid \vec{a}_1, \vec{x}_1, \dots, \vec{a}_{t-1}, \vec{x}_{t-1})f_{\vec{a}_t}(\vec{x}_t).\label{eq:PDF}\end{equation}

\paragraph{Platform's goal.} The platform's overall goal is to choose a policy $\pi$ that optimizes its total reward \begin{equation}\E_{\pi \nu} \left[\sum_{i = 1}^n \sum_{t = 1}^T X_{i,t}\right].\label{eq:exp_reward} \end{equation} For each user $i \in [n]$, the optimal policy would choose the arm $j_i$ that maximizes expected reward: $j_i = \argmax_{j \in [k]}\left\{\mu_{i,j}\right\}$. Classic bandit algorithms will eventually converge to this policy. However, repeatedly showing user $i$ content from category $j_i$ traps the user in a filter bubble. In the next sections, we limit the platform's ability to form filter bubbles.

\section{A first attempt to disincentivize filter bubbles}\label{sec:first}
We begin with a naive first attempt at disincentivizing filter bubbles and show that it has the harsh unintended consequence of exacerbating ``tyranny of the majority'': the burden of diversification is borne by those with minority interests. Interestingly, this issue mirrors real-world attempts at diversification where the labor associated with diversification is put disproportionately on members of the underrepresented groups.

To motivate this first attempt, we observe that in a network with severe filter bubbles, members are partitioned into groups which are exposed to disparate types of content. Thus, our first attempt at avoiding filter bubbles ensures that the content recommendations are not too ``spread out.'' We formalize this intuition by requiring that each user's distribution over content is not too far from the average distribution over content shown to the entire population.

More formally, building on the notation from Section~\ref{sec:notation}, let $\pi_i(j \mid \vec{h}_{t-1})$ denote the marginal probability that the platform shows user $i$ arm $j$ on timestep $t$ given the history $\vec{h}_{t-1}$, with $\vec{\pi}_i(\vec{h}_{t-1}) = \left(\pi_i(1 \mid \vec{h}_{t-1}), \dots, \pi_i(k \mid \vec{h}_{t-1})\right)$. Next, let $\bar{\vec{\pi}}(\vec{h}_{t-1}) = \frac{1}{n}\sum_{i = 1}^n \vec{\pi}_i(\vec{h}_{t-1})$ denote the average of these marginal distributions. The $j^{th}$ component of $\bar{\vec{\pi}}(\vec{h}_{t-1})$, denoted $\bar{\pi}(j \mid \vec{h}_{t-1})$, measures the average probability that arm $j$ is shown to any user.
Under our naive first approach, we require that the distance between the vectors $\vec{\pi}_i(\vec{h}_{t-1})$ and $\bar{\vec{\pi}}(\vec{h}_{t-1})$ is small under the $\ell_{\infty}$-norm: \begin{equation}\norm{\vec{\pi}_i(\vec{h}_{t-1}) - \bar{\vec{\pi}}(\vec{h}_{t-1})}_{\infty} = \max_{j \in [k]}\left|\pi_i(j \mid \vec{h}_{t-1}) - \bar{\pi}(j \mid \vec{h}_{t-1})\right| \leq \Delta\label{eq:delta_constriants}\end{equation} for some $\Delta > 0$. (The $\ell_{\infty}$-norm could be replaced by any norm, but we use the $\ell_{\infty}$-norm for this exposition.)

We now show that the optimal policy $\vec{p}_1^{*}, \dots, \vec{p}_n^{*} \in \cP^{k-1}$ leads to tyranny of the majority, where \[\vec{p}_1^{*}, \dots, \vec{p}_n^{*} = \argmax_{\vec{p}_1, \dots, \vec{p}_n}\left\{\sum_{i = 1}^n \vec{\mu}_i \cdot \vec{p}_i : \norm{\vec{p}_i - \frac{1}{n}\sum_{i' = 1}^n \vec{p}_{i'}}_{\infty} \leq \Delta, \forall i \in [n]\right\}.\]
To illustrate the pitfalls of this approach, we analyze a setting where there are two types of content (e.g., left- and right-leaning political content) and the users can be partitioned into disjoint sets where one set only likes content from the first category (i.e., $\vec{\mu}_i = (1,0)$). Meanwhile, the other set only likes content from the second category (i.e., $\vec{\mu}_i = (0,1)$). Without loss of generality, we assume that the former set---which we denote as $N$---is the majority.

When $\Delta \geq \frac{|N|}{n}$, the constraints are meaningless and allow for full personalization: $\vec{p}_i^* = (1,0)$ if $i \in [N]$ and $\vec{p}_i^* = (0,1)$ if $i \not\in [N]$.
Therefore, we analyze the case where $\Delta < \frac{|N|}{n}$. We show that under the optimal policy, the majority group will be able to exclusively see the content that they enjoy: $\vec{p}_i^* = (1,0)$ if $i \in N$. Meanwhile, the minority group's recommendations take a hit in order to ensure that the constraints are satisfied. In particular, for all $i \not\in N$, $\vec{p}_i^* =\left(1 - \frac{n\Delta}{|N|}, \frac{n\Delta}{|N|}\right)$. The proof of the following lemma is in Appendix~\ref{app:first}.

\begin{restatable}{lemma}{naiveOpt}\label{lem:naive_opt}
Suppose that there are $k = 2$ arms and for some set $N \subseteq [n]$ with $|N| \geq \frac{n}{2}$, $\vec{\mu}_i = (1,0)$ for all $i \in N$ and $\vec{\mu}_i = (0,1)$ for all $i \not\in N$. If $\Delta < \frac{|N|}{n}$, then $\vec{p}_i^* = (1,0)$ if $i \in N$ and  $\vec{p}_i^* =\left(1 - \frac{n\Delta}{|N|}, \frac{n\Delta}{|N|}\right)$ otherwise.
\end{restatable}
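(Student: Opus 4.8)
The plan is to exploit the two-arm structure to collapse the problem to a two-variable linear program. First I would write each $\vec{p}_i = (p_i, 1-p_i)$ and set $\bar{p} = \frac{1}{n}\sum_{i'} p_{i'}$. Since $\vec{p}_i - \bar{\vec{p}} = (p_i - \bar{p},\, -(p_i - \bar{p}))$, the constraint in \eqref{eq:delta_constriants} becomes simply $\lvert p_i - \bar{p}\rvert \le \Delta$. The objective simplifies too: using $\vec{\mu}_i\cdot\vec{p}_i = p_i$ for $i \in N$ and $\vec{\mu}_i\cdot\vec{p}_i = 1 - p_i$ for $i \notin N$, the total reward equals $(n - |N|) + \sum_{i\in N} p_i - \sum_{i\notin N} p_i$. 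Hence it suffices to maximize $\sum_{i\in N} p_i - \sum_{i\notin N} p_i$ subject to $p_i \in [0,1]$ and $\lvert p_i - \bar{p}\rvert \le \Delta$.

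Next I would symmetrize. Writing $m = |N|$, I claim that without loss of generality all majority users take a common value $a$ and all minority users a common value $b$. Indeed, replacing every $p_i$ with $i \in N$ by their mean $a = \frac{1}{m}\sum_{i\in N} p_i$ leaves both $\bar{p}$ and the objective term $\sum_{i\in N} p_i$ unchanged, and preserves feasibility since $\lvert a - \bar{p}\rvert \le \frac{1}{m}\sum_{i\in N}\lvert p_i - \bar{p}\rvert \le \Delta$ by the triangle inequality and $a \in [0,1]$ as an average of values in $[0,1]$; the same averaging applies to the minority. After this reduction $\bar{p} = \frac{1}{n}(ma + (n-m)b)$ and the problem is to maximize $ma - (n-m)b$ over $a, b \in [0,1]$ with $\lvert a - \bar{p}\rvert \le \Delta$ and $\lvert b - \bar{p}\rvert \le \Delta$.

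Finally I would solve this two-variable program. Setting $\delta = a - b$, a direct computation gives $a - \bar{p} = \frac{n-m}{n}\delta$ and $b - \bar{p} = -\frac{m}{n}\delta$, so the two constraints read $\frac{n-m}{n}\lvert\delta\rvert \le \Delta$ and $\frac{m}{n}\lvert\delta\rvert \le \Delta$. Because $m \ge \frac{n}{2} \ge n - m$, the minority constraint is the binding one, yielding $\lvert\delta\rvert \le \frac{n\Delta}{m}$. Rewriting the objective in the variables $a$ and $\delta$ as $(2m - n)a + (n-m)\delta$ and noting that both coefficients are nonnegative exactly because $N$ is a majority, the optimum takes $a$ and $\delta$ as large as possible: $a = 1$ and $\delta = \frac{n\Delta}{m}$, hence $b = 1 - \frac{n\Delta}{m}$. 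The hypothesis $\Delta < \frac{|N|}{n}$ guarantees $b > 0$, so this point is feasible (and the majority constraint $\frac{n-m}{n}\delta = \frac{(n-m)\Delta}{m} \le \Delta$ holds precisely because $m \ge n - m$), matching the claimed $\vec{p}_i^* = (1,0)$ for $i \in N$ and $\vec{p}_i^* = \left(1 - \frac{n\Delta}{m}, \frac{n\Delta}{m}\right)$ for $i \notin N$. I expect the main subtlety to be the symmetrization step together with identifying that the majority hypothesis $|N| \ge \frac{n}{2}$ is exactly what makes both the binding constraint the minority's \emph{and} both objective coefficients nonnegative; this is the structural fact driving the ``tyranny of the majority.''
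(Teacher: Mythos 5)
Your proposal is correct and follows essentially the same route as the paper's proof: the same averaging/symmetrization claim reduces the problem to a two-variable linear program over a common majority value and a common minority value, with the minority's constraint identified as the binding one because $|N|\geq \frac{n}{2}$. The only difference is cosmetic---the paper finishes by enumerating the vertices of the resulting two-dimensional polytope, whereas you reparametrize by $(a,\delta)$ and argue monotonicity from the signs of the coefficients, which is a slightly cleaner way to dispatch the same final step.
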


Lemma~\ref{lem:naive_opt} illustrates that under this approach, tyranny of the majority prevails at the expense of minority interests.

\section{Equitable approaches to disincentivizing filter bubbles}\label{sec:equitable}
Motivated by Section~\ref{sec:first}, we propose three different formulations for disincentivizing filter bubbles that avoid tyranny of the majority. The intuition behind these approaches is built upon the following two pillars:
\begin{enumerate}
\item To avoid tyranny of the majority, users should primarily be recommended content they are most interested in,
\item But to avoid filter bubbles, that content must contain a flavor of the content shown to the entire population.
\end{enumerate} 
We show that it is possible to achieve both of these ends.
If both conditions are satisfied, then a policy like that of Lemma~\ref{lem:naive_opt} where the majority group sees no minority content is not possible. By the first requirement, groups with minority interests will be recommended content that they are interested in, which means that by the second requirement, the majority group's content recommendations will contain a small amount of that minority content, and vice versa.

\subsection{Formulation 1: Personalization constraint}\label{sec:constraint}
In our first formulation, we require that for each user $i\in [n]$, $\vec{\pi}_i(\vec{h}_{t-1})$ is at least $\gamma \bar{\vec{\pi}}(\vec{h}_{t-1})$ for some $\gamma \in [0,1]$: \begin{equation}\vec{\pi}_i(\vec{h}_{t-1}) \geq \gamma \bar{\vec{\pi}}(\vec{h}_{t-1}).\label{eq:gamma_constraint}\end{equation} Each user's recommendations become less personalized as $\gamma$ grows.

To illustrate the benefit of this approach over that of Section~\ref{sec:first}, we analyze the same polarized example where there is a majority group $N$ with $\vec{\mu}_i = (1,0)$ for all $i \in N$. For the minority group, $\vec{\mu}_i = (0,1)$ for all $i \not\in N.$ For all $\gamma \leq \frac{1}{2}$, we show that under the optimal policy, the majority of each group's content recommendations match their interests, but both groups see some content that appeals to the opposing group. In this case the optimal policy is defined as \begin{equation}\vec{p}_1^{*}, \dots, \vec{p}_n^{*} = \argmax_{\vec{p}_1, \dots, \vec{p}_n}\left\{\sum_{i = 1}^n \vec{\mu}_i \cdot \vec{p}_i : \vec{p}_i \geq \frac{\gamma}{n}\sum_{i' = 1}^n \vec{p}_{i'}, \forall i \in [n]\right\}.\label{eq:opt}\end{equation}
The proof of the following lemma is in Appendix~\ref{app:equitable}.

\begin{restatable}{lemma}{interpretation}\label{lem:interpretation}
    Suppose that there are $k = 2$ arms and for some set $N \subseteq [n]$, $\vec{\mu}_i = (1,0)$ for all $i \in N$ and $\vec{\mu}_i = (0,1)$ for all $i \not\in N$. For $\gamma \leq \frac{1}{2}$, the optimal policy has the following form:
    \[\vec{p}_i^{*} = \begin{cases} \left(1 - \frac{\gamma(n-|N|)}{n}, \frac{\gamma(n-|N|)}{n}\right) & \text{if } i \in N\\
    \left(\frac{\gamma |N|}{n}, 1 - \frac{\gamma |N|}{n} \right) & \text{if } i \not\in N.\end{cases}\]
\end{restatable}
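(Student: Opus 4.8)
The plan is to solve the linear program \eqref{eq:opt} directly by exploiting the two-arm structure. Write $\vec{p}_i = (p_{i,1}, p_{i,2})$ with $p_{i,1} + p_{i,2} = 1$, so that the objective equals $\sum_{i \in N} p_{i,1} + \sum_{i \notin N} p_{i,2}$. Substituting $p_{i,1} = 1 - p_{i,2}$, maximizing this is equivalent to minimizing $G := \sum_{i \in N} p_{i,2} + \sum_{i \notin N} p_{i,1}$, i.e.\ the ``wrong-arm'' mass each group is forced to carry. The essential difficulty is that the constraint $\vec{p}_i \geq \frac{\gamma}{n}\sum_{i'}\vec{p}_{i'}$ couples every user through the population average, so one cannot optimize each $\vec{p}_i$ in isolation.

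My plan to break this coupling is to introduce the scalar $s := \frac{1}{n}\sum_i p_{i,2}$ (so the average first-arm probability is $1-s$), fix its value, and optimize conditionally. For fixed $s$, the constraints decouple into the per-user box constraint $p_{i,2} \in [\gamma s,\, 1 - \gamma(1-s)]$ together with the single equality $\sum_i p_{i,2} = ns$. Since $G = (n - |N|) + 2\sum_{i \in N} p_{i,2} - ns$ after substitution, the inner problem reduces to minimizing $\sum_{i \in N} p_{i,2}$ over this box-constrained transportation polytope, which is solved greedily: push each $p_{i,2}$ with $i \in N$ down to its floor $\gamma s$, which is feasible precisely when the residual mass fits inside the boxes of the $i \notin N$ users. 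A short calculation shows this holds iff $s \leq \frac{n - |N|}{n}$, and otherwise the $i \notin N$ users saturate their ceiling $1 - \gamma(1-s)$. This produces a piecewise-linear value function $F(s) = \min\{G : \frac{1}{n}\sum_i p_{i,2} = s\}$ with an explicit formula on each of the two regimes.

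Finally I would minimize $F$ over $s \in [0,1]$. On the regime $s \leq \frac{n - |N|}{n}$ the slope of $F$ is $2\gamma|N| - n$, and on the regime $s > \frac{n - |N|}{n}$ it is $n - 2\gamma(n - |N|)$; the hypothesis $\gamma \leq \frac{1}{2}$ (together with $|N| \leq n$) makes the first slope nonpositive and the second nonnegative, so $F$ decreases then increases and is minimized exactly at $s^\star = \frac{n - |N|}{n}$. Substituting $s^\star$ into the greedy inner solution gives $p_{i,2} = \gamma\frac{n - |N|}{n}$ for $i \in N$ and $p_{i,2} = 1 - \gamma\frac{|N|}{n}$ for $i \notin N$, which is exactly the claimed optimum. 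The main obstacle is the coupling through the average, handled by the fix-$s$-then-optimize decomposition; the role of $\gamma \leq \frac{1}{2}$ is precisely to guarantee the V-shape of $F$, since for $\gamma > \frac{1}{2}$ the minimizer slides to a boundary value of $s$, producing the degenerate tyranny-of-the-majority solution instead. As an independent check, one can alternatively verify optimality by exhibiting KKT multipliers for \eqref{eq:opt} at the claimed point.
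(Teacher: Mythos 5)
Your proposal is correct, and it takes a genuinely different route from the paper's. The paper first proves a symmetrization claim (an optimal solution can be taken constant within $N$ and within $[n]\setminus N$), which collapses the problem to a two-variable LP in $(q_1,q_2)$, and then mechanically enumerates all vertices of the resulting polytope and compares objective values. You instead condition on the population average $s=\frac{1}{n}\sum_i p_{i,2}$, which decouples the constraints into per-user boxes $[\gamma s,\,1-\gamma(1-s)]$ plus one transport equality, solve the inner problem greedily, and minimize the resulting piecewise-linear value function $F(s)$ over $s\in[0,1]$. I checked the key computations: the feasibility threshold for the greedy floor assignment is indeed $s\le\frac{n-|N|}{n}$ (the inequality $s(n-\gamma|N|)\le(n-|N|)(1-\gamma(1-s))$ reduces to $sn(1-\gamma)\le(n-|N|)(1-\gamma)$), the slopes $2\gamma|N|-n$ and $n-2\gamma(n-|N|)$ are as you state, and $\gamma\le\frac12$ gives them the right signs, so $F$ is V-shaped with minimizer $s^\star=\frac{n-|N|}{n}$, at which point the greedy solution is exactly the claimed $\vec p_i^*$ (the edge cases $N=\emptyset$ and $N=[n]$ degenerate gracefully since the corresponding regime becomes a single point). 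What your approach buys is twofold: it avoids both the symmetrization lemma and the dozen-case vertex enumeration, and it isolates the role of the hypothesis $\gamma\le\frac12$ as precisely the condition making the two slopes of $F$ nonpositive and nonnegative respectively, which also explains transparently why the minimizer slides to a boundary (tyranny of the majority) when $\gamma>\frac12$. What the paper's approach buys is that the explicit vertex list immediately yields the optimal policy for \emph{all} $\gamma\in[0,1]$, not just $\gamma\le\frac12$, with no extra work.
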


Since $\gamma \leq \frac{1}{2}$, this policy ensures that users are mostly recommended content that they are interested in: $\vec{\mu}_i \cdot \vec{p}_i^{*} \geq 1 - \gamma \geq \frac{1}{2}$ for all $i \in [n].$
However, they are still shown a small fraction of content that the other set of the population is interested in. We note that when $N$ is the majority group $\left(|N| \geq \frac{n}{2}\right)$, the minority group $[n] \setminus N$ still sees more content that they are not interested in than the majority group because $\frac{\gamma|N|}{n} \geq \frac{\gamma(n-|N|)}{n}$. However, the burden of diversification is split far more equally among the two groups than in Lemma~\ref{lem:naive_opt}.
The policy mirrors a typical mode of community forum discussions where members split time between listening to the opinions of each person in the entire group (for a $\gamma$-fraction of the time) and breaking into focus groups about specific topics (for a $(1-\gamma)$-fraction of the time).

In Section~\ref{sec:regret}, we provide upper and lower bounds on the platform's \emph{regret} with respect to the optimal policies $\vec{p}_1^*, \dots, \vec{p}_n^*$ defined in Equation~\eqref{eq:opt}. Regret measures the difference between the total reward of the optimal policy and that of the platform's policy $\pi$.
In other words, for any instance $\nu$ and policy $\pi$, the expected regret is defined as  \begin{equation}R_{T,1}(\pi, \nu) = T\sum_{i = 1}^n \vec{p}_i^{*}\cdot \vec{\mu}_i - \E_{\pi\nu}\left[\sum_{i = 1}^n \sum_{t = 1}^T X_{i,t}\right].\label{eq:regret_def}\end{equation}

\subsection{Formulation 2: Personalization penalty}\label{sec:taxation}
We analyze a second formulation where there are no constraints on the platform's policy, but the platform is penalized based on the extent to  which Equation~\eqref{eq:gamma_constraint} is violated. Given a parameter $\eta \geq 0$, this penalty is defined as \[\eta \sum_{i = 1}^n \sum_{j = 1}^k \max\left\{\gamma \bar{\vec{\pi}}(j \mid \vec{h}_{t-1}) - \pi_i(j \mid \vec{h}_{t-1}), 0\right\}.\]
In other words, the platform's goal is to maximize its cumulative reward \begin{align}&\rew_2(\pi, \nu; \eta, \gamma)\nonumber\\
=\, &\E_{\pi\nu}\left[\sum_{i = 1}^n \left(\sum_{t = 1}^T X_{i,t} - \eta \sum_{j = 1}^k \max\left\{\gamma \bar{\vec{\pi}}(j \mid \vec{h}_{t-1}) - \pi_i(j \mid \vec{h}_{t-1}), 0\right\}\right)\right]\nonumber\\
=\, &\sum_{t = 1}^T \E_{\pi\nu}\left[\sum_{i = 1}^n \left(\vec{\mu}_i \cdot \vec{\pi}_i(\vec{h}_{t-1}) - \eta \sum_{j = 1}^k \max\left\{\frac{\gamma}{n} \sum_{i' =1}^n\pi_{i'}(j \mid \vec{h}_{t-1}) - \pi_i(j \mid \vec{h}_{t-1}), 0\right\}\right)\right].\label{eq:rew2}\end{align}
The policy that maximizes Equation~\eqref{eq:rew2} is history independent and can be written as $\vec{p}^* = \left(\vec{p}_1^*, \dots, \vec{p}_n^*\right)$ with $\vec{p}_i^* \in \cP^{k-1}.$ 
The expected regret of a policy $\pi$ under this formulation is $R_{T,2}(\pi, \nu) = \rew_2(\vec{p}^*, \nu; \eta, \gamma) - \rew_2(\pi, \nu; \eta, \gamma)$.

\subsection{Formulation 3: Personalization penalty on the empirical distribution}

Sections~\ref{sec:constraint} and \ref{sec:taxation} describe models in which the platform is subject to constraints or penalties based on the \emph{true} distribution over content that it shows users. However, an auditor may only have access to the \emph{realizations} of those distributions---that is, the set of arms $a_{t,i}\in [k]$ shown to each user $i$ at timestep $t$. Formulation 3 covers a setting in which a regulator penalizes the platform at the end of the $T$ timesteps based on the empirical distribution over content.
Specifically, let $\hat{p}_{i,j} = \frac{1}{T} \sum_{t = 1}^T \textbf{1}_{\{A_{t,i} = j\}}$ be the average number of times that the platform pulls arm $j$ for user $i$. At the end of the $T$ timesteps, the platform is penalized based on how small $\hat{p}_{i,j}$ is compared to  $\frac{\gamma}{n} \sum_{i'=1}^n \hat{p}_{i',j}$. In particular, given a normalizing factor $\eta$, we define a penalty that is the analogue of Equation~\eqref{eq:rew2}: \[\eta \sum_{i = 1}^n \sum_{j= 1}^k \max\left\{\frac{\gamma}{n} \sum_{i'=1}^n \hat{p}_{i',j} - \hat{p}_{i,j}, 0\right\}.\]

The platform's goal is therefore to maximize their expected total payoff minus this penalty, which is equal to \begin{align}\rew_3(\pi, \nu; \eta, \gamma) &= \E_{\pi\nu}\left[\sum_{i = 1}^n\left(\sum_{t = 1}^T X_{i,t} - \eta \sum_{j= 1}^k \max\left\{\frac{\gamma}{n} \sum_{i'=1}^n \hat{p}_{i',j} - \hat{p}_{i,j}, 0\right\}\right)\right]\nonumber\\
&= \sum_{i = 1}^n\left(\sum_{t = 1}^T \E_{\pi\nu}\left[\vec{\mu}_i \cdot \vec{\pi}_i(\vec{h}_{t-1})\right] - \eta \sum_{j= 1}^k \E_{\pi\nu}\left[\max\left\{\frac{\gamma}{n} \sum_{i'=1}^n \hat{p}_{i',j} - \hat{p}_{i,j}, 0\right\}\right]\right).\label{eq:rew3}\end{align}
Let $\pi^*$ be the policy that maximizes Equation~\eqref{eq:rew3}. The regret of $\pi$ is \[R_{T,3}(\pi, \nu) = \rew_3(\pi^*, \nu; \eta, \gamma) - \rew_3(\pi, \nu; \eta, \gamma).\]

A key difference between Equation~\eqref{eq:rew2} and Equation~\eqref{eq:rew3} is that in Equation~\eqref{eq:rew2}, the platform is penalized at every timestep whereas in Equation~\eqref{eq:rew3}, the platform is penalized at the end of the $T$ timesteps. We make this distinction because the empirical distribution over content at a single timestep would be extremely noisy.

\section{Regret analysis}\label{sec:regret}
In this section, we discuss algorithms that the platform can use to minimize regret in the three formulations from Section~\ref{sec:equitable}. We also provide a nearly-matching lower bound on regret for Formulation 1 in Section~\ref{sec:lower}.

\subsection{Regret analysis for Formulation 1}
We begin with lower bounds on regret under Formulation 1.
In Section~\ref{sec:regret_small_gamma}, we show that a variant of the UCB algorithm has regret $O(n\sqrt{Tk})$ for $\gamma < 1$ and in Section~\ref{sec:regret_gamma_one}, we show that a different variant of UCB has regret $O(\sqrt{nkT})$ for $\gamma = 1$. We then prove in Section~\ref{sec:lower} that these bounds are nearly optimal: for $\gamma \leq \frac{1}{2}$ and $k = 2$, no algorithm can achieve regret better than $\Omega(n\sqrt{T})$, and for all $k \geq 2$ and $\gamma \in [0,1]$ (including $\gamma > \frac{1}{2}$) our bound is $\Omega(\sqrt{nkT})$.

The transition from a $\Theta(n)$ to a $\Theta(\sqrt{n})$ dependence illustrates that as $\gamma$ grows, the platform is better able to compete with the optimal policy subject to the $\gamma$ constraints. As $\gamma$ grows, the platform has a smaller set of distributions that it can use to compete with the optimal policy while obeying the $\gamma$ constraints. However, for the same reason, the cumulative reward of the optimal policy shrinks as $\gamma$ grows. Intuitively, the transition from a $\Theta(n)$ to a $\Theta(\sqrt{n})$ dependence as $\gamma$ grows illustrates that the optimal policy's reward degrades faster than the platform's ability to compete with that policy.

\subsubsection{Regret upper bound when $\gamma < 1$}\label{sec:regret_small_gamma}

We analyze a multi-agent variant of the UCB algorithm, which we call \textsc{$n$-UCB}, and show that it has regret $O(n\sqrt{Tk})$ when $\gamma < 1$.

The $n$-UCB algorithm essentially runs a copy of classic UCB for each user, but coordinates amongst these $n$ UCB copies to ensure that they satisfy the global constraints. This requires $n$-UCB to play distributions over arms from the set of distributions $\left(\vec{p}_1, \dots, \vec{p}_n\right)$ that satisfy the constraints: $\vec{p}_i \geq  \frac{\gamma}{n}\sum_{i' = 1}^n \vec{p}_{i'}$ for all $i \in [n]$. This is in contrast to the classic case where UCB plays a single arm at each timestep.
For completeness, we include a full description of $n$-UCB (Algorithm~\ref{alg:ucb.f2}) and the proof of the following theorem in Appendix~\ref{app:regret_small_gamma}.

\begin{restatable}{theorem}{nUCB}\label{thm:low_regret.f2}
    Let $\pi$ be the policy of $n$-UCB. Then $R_{T,1}(\pi, \nu) = \tilde O( n\sqrt{kT})$. 
\end{restatable}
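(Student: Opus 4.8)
The plan is to adapt the standard UCB regret analysis to this multi-agent constrained setting, where the key benchmark is the optimal \emph{joint} distribution $(\vec{p}_1^*, \dots, \vec{p}_n^*)$ satisfying the $\gamma$-constraints in Equation~\eqref{eq:opt}. First I would set up the usual UCB confidence intervals: for each user $i \in [n]$ and arm $j \in [k]$, maintain an empirical mean $\hat\mu_{i,j}(t)$ from the $N_{i,j}(t)$ pulls of arm $j$ so far, together with a confidence width of order $\sqrt{\log(T)/N_{i,j}(t)}$. A standard Hoeffding/union-bound argument gives a ``good event'' on which, simultaneously for all $i, j, t$, the true mean $\mu_{i,j}$ lies within the confidence interval around $\hat\mu_{i,j}(t)$; this event holds with probability $1 - \tilde O(1/T)$ or similar, so the contribution of its complement to regret is negligible (at most $\tilde O(n)$ after accounting for the $n$ users).

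Next I would use the defining property of $n$-UCB: at each timestep it selects the joint distribution $(\vec p_1, \dots, \vec p_n)$ that maximizes the \emph{optimistic} objective $\sum_i \vec{\bar\mu}_i(t) \cdot \vec p_i$, where $\vec{\bar\mu}_i(t)$ denotes the upper-confidence-bound reward vector, subject to the same feasibility constraints $\vec p_i \ge \frac{\gamma}{n}\sum_{i'} \vec p_{i'}$. Since the optimal benchmark $(\vec p_1^*, \dots, \vec p_n^*)$ is feasible, optimism guarantees that the chosen optimistic value dominates the optimistic value of the benchmark, which in turn (on the good event) dominates its true value $\sum_i \vec\mu_i \cdot \vec p_i^*$. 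This is the standard ``optimism in the face of uncertainty'' step, and it lets me bound the per-round regret by the aggregate confidence width $\sum_i \vec w_i(t) \cdot \vec p_{i,t}$, where $\vec w_i(t)$ collects the confidence radii and $\vec p_{i,t}$ is the distribution actually played for user $i$.

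I would then bound the cumulative confidence-width term. Summing $\sum_{t=1}^T \sum_i \sum_j p_{i,j,t} \cdot \sqrt{\log(T)/N_{i,j}(t)}$ over time, I would apply the familiar pigeonhole argument that $\sum_t p_{i,j,t}/\sqrt{N_{i,j}(t)} = \tilde O(\sqrt{T})$ for each fixed $(i,j)$ pair (this is where one uses that the expected number of pulls grows and the telescoping/integral bound $\sum 1/\sqrt{m} = O(\sqrt{M})$). Summing over the $k$ arms and the $n$ users yields a bound of order $n \sqrt{kT}$ up to logarithmic factors, matching the claimed $\tilde O(n\sqrt{kT})$. A subtle point is that the constraints couple the users' distributions, so the distribution played for user $i$ is not simply the greedy optimistic arm; I would handle this by noting the confidence-width bound holds for \emph{any} feasible sequence of distributions, since the per-$(i,j)$ pigeonhole bound only uses that $p_{i,j,t} \in [0,1]$ and that $N_{i,j}$ increments in expectation by $p_{i,j,t}$.

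The main obstacle I anticipate is making the optimism argument fully rigorous despite the joint-optimization coupling, and correctly translating between the \emph{distribution-level} regret (comparing $\sum_i \vec\mu_i \cdot \vec p_{i,t}$ to $\sum_i \vec\mu_i \cdot \vec p_i^*$) and the \emph{realized-reward} regret defined in Equation~\eqref{eq:regret_def}, which involves $\E_{\pi\nu}[\sum X_{i,t}]$. Because the arms played are sampled from $\vec p_{i,t}$, I would take expectations and use that the conditional expected reward equals $\vec\mu_i \cdot \vec p_{i,t}$, so the realized-reward regret equals the expected distribution-level regret; care is needed because $N_{i,j}(t)$ is itself random, so the pigeonhole step should be stated in expectation or conditioned on the good event. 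I expect this bookkeeping, rather than any single deep inequality, to be the crux of the formal proof.
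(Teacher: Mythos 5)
Your proposal follows essentially the same route as the paper's proof: a Hoeffding-plus-union-bound good event for the UCB estimates, the optimism step using feasibility of $(\vec{p}_1^*,\dots,\vec{p}_n^*)$ to reduce the per-round regret to the played confidence widths $\sum_i \vec{p}_i^{(t)}\cdot\vec{\beta}_i^{(t)}$, and the per-$(i,j)$ pigeonhole summation yielding $\tilde O(n\sqrt{kT})$. The one subtlety you flag---that $\sum_t p_{i,j,t}/\sqrt{N_{i,j}(t)}$ is not a pointwise telescoping sum because $N_{i,j}$ only increments when arm $j$ is actually drawn---is exactly what the paper resolves, via an Azuma--Hoeffding martingale argument converting $\vec{p}_i^{(t)}\cdot\vec{\beta}_i^{(t)}$ to the realized $\beta_{i,j_i^{(t)}}^{(t)}$ at an additive cost of $\tilde O(n\sqrt{T})$, so your sketch is on target.
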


\subsubsection{Regret upper bound when $\gamma = 1$}\label{sec:regret_gamma_one}

When $\gamma = 1$, all users must be shown the same distribution of content. We can therefore reduce our problem to a single-agent bandit problem with the reward distributions $\cD_j = \sum_{i=1}^n \cD_{i,j}$ for all arms $j \in [k]$. We adapt the robust-UCB framework by~\citet{Bubeck13:Bandits} with the median-of-means estimator~\citep{Alon1996:moments}, as summarized by Algorithm~\ref{alg:robustucb.f2} in Appendix~\ref{app:regret_gamma_one}. The full proof of the following theorem is in Appendix~\ref{app:regret_gamma_one}.

\begin{restatable}{theorem}{regretGammaOne}\label{thm:regretGammaone}
    Let $\pi$ be the policy of Robust-UCB. Then $R_{T,1}(\pi, \nu) = \tilde O (\sqrt{nkT}).$
\end{restatable}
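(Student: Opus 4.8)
The plan is to reduce the $\gamma = 1$ case to a single-agent bandit problem and then invoke a robust-UCB regret guarantee on that reduced instance. When $\gamma = 1$, the constraint in Equation~\eqref{eq:gamma_constraint} forces $\vec{\pi}_i(\vec{h}_{t-1}) = \bar{\vec{\pi}}(\vec{h}_{t-1})$ for every user $i$, so all users must be shown an identical distribution over arms at each timestep. First I would make this reduction precise: since every user plays the same distribution $\vec{p} \in \cP^{k-1}$, the platform's per-timestep expected reward is $\sum_{i=1}^n \vec{\mu}_i \cdot \vec{p} = \left(\sum_{i=1}^n \vec{\mu}_i\right) \cdot \vec{p}$, which is exactly the expected reward of a single $k$-armed bandit whose arm-$j$ reward distribution is the convolution $\cD_j = \sum_{i=1}^n \cD_{i,j}$. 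Playing arm $j$ in this reduced bandit corresponds to drawing one sample from each $\cD_{i,j}$ and summing, so the aggregated reward has mean $\sum_{i=1}^n \mu_{i,j}$ and, crucially, support in $[0,n]$ rather than $[0,1]$.

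The key consequence of this reduction is that the aggregated rewards are no longer bounded in $[0,1]$ but in $[0,n]$, so a naive application of standard UCB would incur regret scaling like $n\sqrt{kT}$ — exactly the bound we are trying to beat. The central step is therefore to exploit \emph{variance} rather than range. Because the $n$ per-user rewards are independent and each lies in $[0,1]$, the variance of the aggregated arm-$j$ reward is $\sum_{i=1}^n \mathrm{Var}(\cD_{i,j}) \leq \sum_{i=1}^n \mu_{i,j}(1-\mu_{i,j}) \leq n/4$, so the standard deviation is $O(\sqrt{n})$. This is the quantitative gain: the effective ``noise scale'' is $\sqrt{n}$ rather than $n$. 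I would then instantiate the robust-UCB framework of \citet{Bubeck13:Bandits} with the median-of-means estimator of \citet{Alon1996:moments}, which achieves near-optimal regret for bandits with bounded variance (rather than bounded range). Their guarantee gives regret $\tilde{O}(\sigma \sqrt{kT})$ where $\sigma$ is the bound on the standard deviation of the arm rewards; substituting $\sigma = O(\sqrt{n})$ yields $\tilde{O}(\sqrt{nkT})$, matching the claimed bound.

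Concretely, the ordered steps are: (i) verify that the $\gamma = 1$ constraint collapses all users to a common distribution and rewrite the objective and the regret $R_{T,1}$ in Equation~\eqref{eq:regret_def} as a single-agent regret against the best fixed arm in the reduced bandit; (ii) identify the aggregated reward distributions $\cD_j$ and bound their variance by $n/4$ via independence and the $[0,1]$ range of each component; (iii) feed this variance bound into the median-of-means robust-UCB regret theorem to obtain a per-arm confidence width scaling as $\sqrt{n \log(\cdot)/N_j(t)}$ where $N_j(t)$ is the pull count; and (iv) carry out the standard UCB regret decomposition over suboptimal arms, summing the confidence widths to get $\tilde{O}(\sqrt{nkT})$.

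I expect the main obstacle to be step (ii) together with its interaction with the heavy-tailed-estimator machinery: the reduced reward is a sum of $n$ independent bounded variables, but its range is $[0,n]$, and the whole point is that robust-UCB must key off the $O(\sqrt{n})$ standard deviation rather than the $O(n)$ range. The delicate part is confirming that the median-of-means estimator's concentration guarantee in \citet{Bubeck13:Bandits} applies with the variance proxy $n/4$ — i.e., that their confidence intervals have width governed by $\sigma/\sqrt{N_j(t)}$ and not by the range — and that plugging in $\sigma = \Theta(\sqrt{n})$ is legitimate uniformly over the $T$ timesteps. Once that concentration statement is in hand, the remaining regret bookkeeping is the routine UCB argument; the essential content of the theorem is entirely in turning the range-$n$ aggregate into a variance-$n$ quantity and invoking a variance-adaptive bandit algorithm.
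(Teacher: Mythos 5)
Your proposal matches the paper's proof essentially step for step: the paper also reduces the $\gamma=1$ case to a single-agent bandit with aggregated rewards $\cD_j = \sum_{i=1}^n \cD_{i,j}$ on $[0,n]$, bounds the per-arm variance by $n/4$ via independence across users and Popoviciu's inequality (Claim~\ref{clm:var_low}), and plugs this variance proxy into the median-of-means concentration bound (Lemma~\ref{lem:median-robust}) together with Proposition~1 of \citet{Bubeck13:Bandits} to obtain confidence widths of order $\sqrt{n\log(Tk/\delta)/N_j(t)}$ and hence regret $\tilde O(\sqrt{nkT})$. Your identification of the variance-versus-range distinction as the crux is exactly the content of the paper's argument, so the proposal is correct and takes the same route.
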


\subsubsection{Regret lower bound}\label{sec:lower}
In this section, we provide nearly-matching regret lower bounds. Our first bound holds when there are $k = 2$ arms, $\gamma \leq \frac{1}{2}$, and $n$ is sufficiently large ($n > 100$). In this case, we prove a regret lower bound of $\Omega(n\sqrt{T})$. Meanwhile, for all $k \geq 2$ and $\gamma \in [0,1]$ (including $\gamma > \frac{1}{2}$), we provide a bound of $\Omega(\sqrt{nkT})$.

We begin with our main result (Theorem~\ref{thm:lower}) and show in Corollary~\ref{cor:large_n} 

\begin{theorem}\label{thm:lower}
   For all  $T \geq 4$, the regret is lower bounded as follows: \[\inf_{\pi \in \Pi^{n,2}}\sup_{\nu \in \cE^{n,2}} R_{T,1}(\pi, \nu) \geq \max\left\{\sqrt{\frac{T}{8}}\left(\frac{n}{8e} - \gamma \left(\frac{n}{8e} + \sqrt{\frac{n}{2\pi}}\right) \right), \frac{\sqrt{nT}}{16e}\right\}.\]
\end{theorem}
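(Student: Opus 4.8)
The plan is to establish the two arguments of the maximum with two independent instance families and then note that $\sup_{\nu \in \cE^{n,2}} R_{T,1}(\pi,\nu)$ dominates the regret of either family, so the sup is at least the larger value.

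\textbf{The term $\tfrac{\sqrt{nT}}{16e}$.} First I would make all $n$ users identical two-armed bandits. Then the symmetric policies $\vec p_1 = \dots = \vec p_n$ are feasible for every $\gamma \in [0,1]$, the constrained optimum of Equation~\eqref{eq:opt} coincides with ``play the better arm for all users,'' and the problem collapses to a single two-armed bandit in which the learner gathers $nT$ independent reward samples. I would take two instances with arm means $(\tfrac12,\tfrac12+\Delta)$ and $(\tfrac12+\Delta,\tfrac12)$ and $\Delta \asymp 1/\sqrt{nT}$, bound the KL divergence between the induced laws $\P_{\pi\nu},\P_{\pi\nu'}$ via the standard divergence decomposition by $nT \cdot D(\mathrm{Ber}(\tfrac12)\,\|\,\mathrm{Ber}(\tfrac12+\Delta)) = O(nT\Delta^2)=O(1)$, and apply the Bretagnolle--Huber inequality: on at least one instance the learner pulls the suboptimal arm on a constant fraction of the $nT$ user-rounds, incurring regret $\Omega(nT\Delta)=\Omega(\sqrt{nT})$. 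Choosing $\Delta = 1/\sqrt{8nT}$ and propagating the constant $\tfrac12 e^{-1/2}$ from Bretagnolle--Huber yields the factor $\tfrac{1}{16e}$.

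\textbf{The first term.} Here I would use a hypercube construction tailored to the $\gamma$-constraint. Draw a sign vector $\epsilon \in \{-1,+1\}^n$ uniformly; in instance $\nu_\epsilon$ user $i$'s better arm is arm $1$ if $\epsilon_i = +1$ and arm $2$ otherwise, each arm having mean in $\{\tfrac12, \tfrac12+\Delta\}$ with gap $\Delta \asymp 1/\sqrt{T}$, so that averaging the regret over $\epsilon$ lower-bounds the sup. For the benchmark I would invoke Lemma~\ref{lem:interpretation}: writing $m = |\{i : \epsilon_i = +1\}|$, for $\gamma \le \tfrac12$ the constrained-optimal per-round reward is $\tfrac n2 + \Delta\big(n - \tfrac{2\gamma m(n-m)}{n}\big)$, whose expectation over $\epsilon$ is $\tfrac n2 + \Delta\big(n - \gamma\tfrac{n-1}{2}\big)$. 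For the learner I would argue one coordinate at a time: its reward for user $i$ equals $\tfrac T2 + \Delta\,\E[\text{pulls of }i\text{'s better arm}]$, and comparing the conditional laws under $\epsilon_i=+1$ versus $\epsilon_i=-1$ (all other coordinates fixed), the divergence decomposition bounds the relevant KL by $(\text{pulls of user }i)\cdot O(\Delta^2) = O(T\Delta^2)=O(1)$; Pinsker/Bretagnolle--Huber then caps the learner's excess correct-arm pulls for user $i$ at a constant fraction of $T$. Summing the resulting per-user regret $\Omega(\Delta T)$ over the $n$ users and comparing to the benchmark gives $\Omega\big((1-\gamma) n \Delta T\big)$; setting $\Delta = 1/\sqrt{8T}$ produces the $\sqrt{T/8}$ factor and the $\tfrac{n}{8e}$-scale leading term.

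\textbf{Main obstacle.} The crux, and the reason a naive ``$n$ independent copies'' argument fails, is the coupling created by the global constraint $\vec\pi_i \ge \gamma \bar{\vec\pi}$. Contrasting one all-arm-$1$ instance against one all-arm-$2$ instance would make $D(\P_{\pi\nu}\,\|\,\P_{\pi\nu'})$ scale linearly in $n$, forcing $\Delta \asymp 1/\sqrt{nT}$ and collapsing the bound back to $\Omega(\sqrt{nT})$; only the coordinate-wise construction keeps $n$ separate $\sqrt T$-regret contributions alive. The delicate point is that flipping $\epsilon_i$ perturbs the population average $\bar{\vec\pi}$, hence the feasible set and the benchmark for every other user, so the per-coordinate contributions are not exactly independent. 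Controlling this coupling reduces to bounding the random imbalance $\big||N|-\tfrac n2\big|$ of the population split, whose expectation is $\E\big|\mathrm{Bin}(n,\tfrac12)-\tfrac n2\big| = \sqrt{n/(2\pi)}\,(1+o(1))$; this is exactly what produces the lower-order correction $-\gamma\big(\tfrac{n}{8e}+\sqrt{n/(2\pi)}\big)$. I expect the information-theoretic estimates to be routine and this aggregate-fluctuation bookkeeping to be the main technical work.
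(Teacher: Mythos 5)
Your proposal is correct and follows essentially the same route as the paper: the $\sqrt{nT}/(16e)$ term comes from an identical-users instance pair with gap $\Theta(1/\sqrt{nT})$ and Bretagnolle--Huber (the paper's Lemma~\ref{lem:sqrtnT}), and the first term comes from a hypercube of instances with gap $\Theta(1/\sqrt{T})$, a per-coordinate KL/Bretagnolle--Huber argument on the constraint slack, and a $\E\left|\mathrm{Bin}(n,\tfrac12)-\tfrac n2\right| \approx \sqrt{n/(2\pi)}$ fluctuation accounting for the $-\gamma\left(\tfrac{n}{8e}+\sqrt{n/(2\pi)}\right)$ correction (the paper's Lemma~\ref{lem:nsqrtT}). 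You also correctly identify the central obstacle---the coupling of users through $\bar{\vec{\pi}}$---and how the benchmark computation via Lemma~\ref{lem:interpretation} and the binomial imbalance resolve it, matching the paper's argument.
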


\begin{proof}
This theorem follows directly from Lemmas~\ref{lem:nsqrtT} and \ref{lem:sqrtnT}.
\end{proof}

\begin{corollary}\label{cor:large_n}
    For
all $n > 100$, $\gamma \leq \frac{1}{2}$, and $T \geq 4$, the regret is lower bounded as \[\inf_{\pi \in \Pi^{n,2}}\sup_{\nu \in \cE^{n,2}} R_{T,1}(\pi, \nu) \geq \frac{n\sqrt{T}}{900}.\]
\end{corollary}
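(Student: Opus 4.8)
The plan is to read the bound off Theorem~\ref{thm:lower} directly, arguing that for every $n > 100$ and $\gamma \le \frac12$ at least one of the two arguments of the maximum already exceeds $\frac{n\sqrt T}{900}$, and then splitting the range of $n$ so that each argument is used only where it has comfortable slack. The only quantity that depends on $\gamma$ is the first argument, so there I first reduce to the worst case $\gamma = \frac12$: the first argument is affine in $\gamma$ with slope $-\sqrt{T/8}\bigl(\frac{n}{8e}+\sqrt{\frac{n}{2\pi}}\bigr) < 0$, hence decreasing, so over $\gamma \in [0,\frac12]$ it is minimized at $\gamma = \frac12$, and it suffices to bound its value there from below.

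Substituting $\gamma = \frac12$ collapses the first argument to $\sqrt{T/8}\bigl(\frac{n}{16e} - \frac12\sqrt{\frac{n}{2\pi}}\bigr)$, and I would factor out $\frac{n}{16e}$ to write it as $\frac{\sqrt T}{2\sqrt 2}\cdot\frac{n}{16e}\bigl(1 - \frac{8e}{\sqrt{2\pi n}}\bigr)$. This cleanly separates the linear growth in $n$ from a correction $\frac{8e}{\sqrt{2\pi n}}$ that decays in $n$. For $n$ at least a moderate threshold (any $n \ge 200$ is more than enough), the correction is bounded well below $1$, so the parenthesized factor is at least a fixed positive constant; multiplying through the constants $\frac{1}{2\sqrt2}\cdot\frac{1}{16e}$ then shows the first argument is at least $\frac{n\sqrt T}{900}$ with room to spare.

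For the remaining smaller values $100 < n \le 200$ I would instead use the second argument $\frac{\sqrt{nT}}{16e}$, which does not depend on $\gamma$: a one-line manipulation shows $\frac{\sqrt{nT}}{16e} \ge \frac{n\sqrt T}{900}$ exactly when $\sqrt n \le \frac{900}{16e}$, i.e.\ for all $n \le \bigl(\frac{900}{16e}\bigr)^2 \approx 428$, which comfortably covers this range. Since the first-argument estimate holds for all $n \ge 200$ and the second-argument estimate holds for all $100 < n \le 428$, the two regimes overlap and together cover every $n > 100$, giving the corollary. The main (and essentially only) obstacle is that the constant $900$ is tuned so tightly that a single naive estimate on the first argument is knife-edge near $n = 101$; the point of splitting the range and deploying the second argument for moderate $n$ is precisely to keep every inequality comfortably slack rather than resting on a razor-thin numerical coincidence.
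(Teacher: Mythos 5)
Your proposal is correct and matches the paper's (implicit) derivation: the corollary is stated as a direct numerical consequence of Theorem~\ref{thm:lower}, and your case split --- using the first term of the maximum (minimized at $\gamma=\tfrac12$ by monotonicity) for $n\ge 200$ and the $\gamma$-independent term $\tfrac{\sqrt{nT}}{16e}$ for $100<n\le\bigl(\tfrac{900}{16e}\bigr)^2\approx 428$ --- is a clean and verifiably sound way to carry out exactly that deduction. The numerical checks all go through with the slack you claim.
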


We now provide a proof sketch of the first part of Theorem~\ref{thm:lower}. The full proof is in Appendix~\ref{app:lower_bound}.

\begin{restatable}{lemma}{nsqrtT}\label{lem:nsqrtT}
For all $T \geq 1$, the regret is lower bounded as follows: \begin{equation}\inf_{\pi \in \Pi^{n,2}}\sup_{\nu \in \cE^{n,2}} R_{T,1}(\pi, \nu) \geq \sqrt{\frac{T}{8}}\left(\frac{n}{8e} - \gamma \left(\frac{n}{8e} + \sqrt{\frac{n}{2\pi}}\right) \right).\label{eq:1regret_lb}\end{equation}
\end{restatable}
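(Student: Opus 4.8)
The plan is to prove the bound by a minimax (information-theoretic) argument built on a family of instances that differ in only \emph{one} user at a time, so that the relevant KL divergences stay $O(1)$ rather than growing with $n$ — precisely what avoids the ``$n$ independent copies'' blowup flagged in the introduction. Fix a gap $\Delta$ (eventually $\Delta = 1/\sqrt{8T}$), and for each sign vector $\theta \in \{-1,+1\}^n$ let $\nu_\theta \in \cE^{n,2}$ be the instance in which user $i$ has arm means $\mu_{i,1} = \frac12 + \theta_i\Delta$ and $\mu_{i,2} = \frac12 - \theta_i\Delta$, so user $i$ prefers arm $1$ iff $\theta_i = +1$. Since $\sup_\nu R_{T,1}(\pi,\nu) \ge \E_\theta R_{T,1}(\pi,\nu_\theta)$ for $\theta$ uniform, it suffices to lower bound the ensemble-averaged regret, and throughout I restrict to policies that obey the $\gamma$-constraint (Formulation~1), so each user's time-averaged arm-$1$ probability lies in a feasible band of width $1-\gamma$.

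First I would reduce the regret to a per-user form. Writing $\bar p_i^\pi$ for the expected time-averaged probability that $\pi$ shows user $i$ arm $1$, a direct expansion of Equation~\eqref{eq:regret_def} gives $R_{T,1}(\pi,\nu_\theta) = 2\Delta T \sum_{i=1}^n \theta_i(p_i^*(\theta) - \bar p_i^\pi)$, where $p_i^*(\theta)$ is the constrained optimum. Here I invoke the closed form behind Lemma~\ref{lem:interpretation}: with $n_+ = |\{i:\theta_i=+1\}|$ and $n_- = n-n_+$, the optimum plays arm $1$ at the cap $1-\gamma n_-/n$ for users preferring arm~$1$ and at the floor $\gamma n_+/n$ for the rest. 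The key simplification comes from conditioning on all other coordinates $\theta_{-i}$ and averaging only over $\theta_i$: the count-dependence cancels in the $\gamma$-term and user $i$'s contribution collapses to $\frac12(1 - \gamma\frac{n-1}{n} - (\bar p_i^+ - \bar p_i^-))$, where $\bar p_i^{\pm}$ is $\pi$'s arm-$1$ frequency for user $i$ in the two worlds $\theta$ (with $\theta_i=+1$) and $\theta^{(i)}$ (flipping bit $i$). This yields the exact identity $\E_\theta R_{T,1}(\pi,\nu_\theta) = \Delta T\,(n(1-\gamma\frac{n-1}{n}) - \sum_i \E_{\theta_{-i}}[\bar p_i^+ - \bar p_i^-])$, isolating the factor $1-\gamma\frac{n-1}{n}\approx 1-\gamma$ that drives the $(1-\gamma)$ scaling.

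It then remains to upper bound the total adaptation $\sum_i\E_{\theta_{-i}}[\bar p_i^+ - \bar p_i^-]$. Because rewards are independent across users, $\nu_\theta$ and $\nu_{\theta^{(i)}}$ induce laws $\P_{\pi\nu_\theta},\P_{\pi\nu_{\theta^{(i)}}}$ that differ only through user $i$'s rewards, so the divergence decomposition gives $\mathrm{KL}(\P_{\pi\nu_\theta},\P_{\pi\nu_{\theta^{(i)}}}) \le T\cdot \mathrm{KL}(\mathrm{Ber}(\tfrac12+\Delta),\mathrm{Ber}(\tfrac12-\Delta)) \le 1$ once $\Delta = 1/\sqrt{8T}$. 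Applying the Bretagnolle--Huber inequality to the events $\{\bar p_i^\pi > s\}$ and integrating over $s$ bounds the total-variation distance, hence the adaptation, by $1 - \frac12 e^{-\mathrm{KL}} \le 1-\frac1{2e}$; combined with the width-$(1-\gamma)$ feasible band, each per-user adaptation is at most $(1-\gamma)(1-\frac1{2e})$. Substituting this into the identity above produces a leading term of order $\Delta T\cdot n(1-\gamma)\,\frac1{2e}$, and with $\Delta T = \sqrt{T/8}$ this matches $\sqrt{T/8}\cdot\frac{n}{8e}(1-\gamma)$ up to the constants recovered by tracking the exact KL and Bretagnolle--Huber factors.

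The main obstacle — and the source of the $-\gamma\sqrt{n/(2\pi)}$ correction — is the coupling the global constraint imposes across users. The feasible band for user $i$ is $[\gamma q,\,1-\gamma(1-q)]$ with $q = \frac1n\sum_j \bar p_j^\pi$ the population average $\bar{\vec\pi}$, and flipping a single bit perturbs every user's band through $q$; writing $\bar p_i^\pi = \gamma q + (1-\gamma)z_i$ with $z_i\in[0,1]$ cleanly separates the Bretagnolle--Huber term $(1-\gamma)(z_i^+ - z_i^-)$ from the residual band-shift $\gamma(q_\theta - q_{\theta^{(i)}})$. I would control the aggregated shift $\gamma\sum_i\E_{\theta_{-i}}[q_\theta - q_{\theta^{(i)}}]$ by showing it is governed by the expected imbalance of the random partition, $\E\,|n_+ - \tfrac n2| \approx \sqrt{n/(2\pi)}$, contributing the $\gamma\sqrt{n/(2\pi)}$ loss. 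Making this coupling argument rigorous — keeping the single-bit-flip KL bound valid while confirming the benchmark's dependence on the global counts enters only as a lower-order ($\sqrt n$) correction to the $\Theta(n)$ main term — is the crux; once it is in place, collecting constants yields the stated bound.
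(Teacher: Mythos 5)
Your construction and high-level argument are essentially the paper's: the same family of two-arm instances indexed by a uniformly random sign vector (the paper uses means $\left(\tfrac12+\epsilon,\tfrac12\right)$ versus $\left(\tfrac12,\tfrac12+\epsilon\right)$ rather than $\left(\tfrac12\pm\Delta,\tfrac12\mp\Delta\right)$, which only shifts constants), the same single-coordinate-flip comparison so that the KL divergence between the two mixture laws for user $i$ stays $O(1)$, the same Markov/Bretagnolle--Huber step per user, the same use of the feasible cap/floor policy of Lemma~\ref{lem:interpretation} as the benchmark (note you only need it as a \emph{feasible} policy, so your ``exact identity'' should be a $\geq$, since its optimality is only established for $\gamma\le\tfrac12$), and the same binomial-imbalance bound $\E\left|n_+-\tfrac n2\right|\lesssim\sqrt{n/(2\pi)}$ as the source of the $\gamma\sqrt{n/(2\pi)}$ correction.

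The one place you genuinely diverge is the step you yourself flag as the crux, and as written it is a gap. You decompose $\bar p_i^{\pi}=\gamma q+(1-\gamma)z_i$ against the \emph{time-averaged, realized} population frequency $q$, which leaves the aggregated band shift $\gamma\sum_i\E_{\theta_{-i}}\left[q_{\theta}-q_{\theta^{(i)}}\right]$ to control; the obvious termwise bound via total variation gives only $O(\gamma n)$ --- the same order as your main term --- so asserting that this sum is ``governed by'' $\E\left|n_+-\tfrac n2\right|$ does not yet constitute a proof. The claim is in fact true: since $\E_{\theta_{-i}}\left[q_{(+1,\theta_{-i})}-q_{(-1,\theta_{-i})}\right]=2\,\E_{\theta}\left[\theta_i\,q(\theta)\right]$, the sum equals $2\,\mathrm{Cov}\left(q(\theta),\sum_i\theta_i\right)\le\E\left|\sum_i\theta_i\right|=2\,\E\left|n_+-\tfrac n2\right|$ because $q\in[0,1]$, yielding a correction of $2\gamma\sqrt{n/(2\pi)}$ (a factor of two worse than the stated bound). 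The paper avoids this issue entirely by decomposing per timestep relative to the \emph{instantaneous} constraint floor $\tfrac{\gamma}{n}\sum_j\pi_j(\cdot\mid\vec h_{t-1})$: the slack above the floor is the nonnegative quantity fed to Markov/Bretagnolle--Huber, while the summed floors collapse deterministically, via $\pi_{j,0}+\pi_{j,1}=1$, to $\gamma T\min(n_+,n_-)$ with no information-theoretic input. If you adopt that decomposition, your conditional-averaging bookkeeping goes through unchanged and recovers the stated constants; otherwise you should expect to lose small constant factors in both the $\tfrac{n}{8e}$ term (your gap is $2\Delta$, which also changes the per-round KL) and the $\sqrt{n/(2\pi)}$ term.
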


\begin{proof}[Proof sketch]
Our proof is based on worst-case instances $\nu_{\vec{b}}$ defined for any vector $\vec{b} \in \{0,1\}^n$.
For each agent $i \in [n]$, their reward distributions for the two arms are Bernoulli with means $\vec{\mu}_i = \left(\mu_{i,0}, \mu_{i,1}\right)$ where $\vec{\mu}_i = \left(\frac{1}{2} + \epsilon, \frac{1}{2}\right)$ if $b_i = 0$, $\vec{\mu}_i = \left(\frac{1}{2}, \frac{1}{2} + \epsilon\right)$ if $b_i = 1$, and $\epsilon = \sqrt{\frac{1}{8T}}.$
We lower bound the expected regret $\E\left[R_{T,1}\left(\pi, \nu_{\vec{b}}\right)\right]$ by the right-hand-side of Equation~\eqref{eq:1regret_lb}, where the expectation is over both the draw of the vector $\vec{b} \sim \text{Unif}\left(\{0,1\}^n\right)$ and the distribution over outcomes $\P_{\pi\nu_{\vec{b}}}$. This implies that there exists an instance $\nu_{\vec{b}}$ such that $R_{T,1}\left(\pi, \nu_{\vec{b}}\right) \geq \sqrt{\frac{T}{8}}\left(\frac{n}{8e} - \gamma \left(\frac{n}{8e} + \sqrt{\frac{n}{2\pi}}\right) \right)$.

Without constraints, the optimal policy would exclusively show arm $0$ to each user $i \in [n]$ with $b_i = 0$ since it has higher reward for these users, and similarly it would exclusively show arm $1$ to each user $i \in [n]$ with $b_i = 1$. Due to the constraints, both the optimal policy and the policy $\pi$ must show some users the ``wrong'' arm on a non-negligible fraction of rounds. In total, the policy $\pi$ will lose the following reward from showing users the wrong arms:
\[\epsilon \E_{\vec{b}}\left[\sum_{t = 1}^T\left(\sum_{i : b_i = 0}\E_{\pi\nu_{\vec{b}}}\left[\pi_i\left(1 \mid \vec{h}_{t-1}\right)\right] + \sum_{i : b_i = 1} \E_{\pi\nu_{\vec{b}}}\left[\pi_i\left(0 \mid \vec{h}_{t-1}\right)\right]\right)\right].\]

We begin by proving that that optimal policy loses at most \begin{equation}\sqrt{\frac{T}{8}}\cdot \frac{\gamma(n-1)}{2}\label{eq:opt_policy_loss}\end{equation} total reward from showing users the wrong arms. Meanwhile, we prove that any policy $\pi$ will lose at least \begin{equation}\sqrt{\frac{T}{8}}\left(\frac{\gamma n}{2} + \frac{n}{8e} - \gamma \left(\frac{n}{8e} + \sqrt{\frac{n}{2\pi}}\right) \right)\label{eq:learned_policy_loss}\end{equation} total reward.
We prove this by showing that for any user $i \in [n]$, the distribution over outcomes conditioned on $b_i = 0$ is close to the distribution over outcomes conditioned on $b_i = 1$. Intuitively, this means that any policy $\pi$ will struggle to distinguish whether $b_i = 0$ or $b_i = 1$.
Taking the difference of \eqref{eq:learned_policy_loss} and \eqref{eq:opt_policy_loss} implies the lemma. 
\end{proof}

We conclude by proving that for all $\gamma \in [0,1]$ and $k \geq 2$, regret is lower bounded by \[\frac{1}{16e}\sqrt{nT(k-1)}.\] The proof is similar to that of existing bandit lower bounds~\citep[e.g.,][Theorem 15.2]{Lattimore20:Bandit}, so we include it for completeness in Appendix~\ref{app:lower_bound}.

\begin{restatable}{lemma}{sqrtnT}\label{lem:sqrtnT}
For all $T \geq \frac{7(k-1)}{n}$, the regret is lower bounded as follows: \[\inf_{\pi \in \Pi^{n,k}}\sup_{\nu \in \cE^{n,k}} R_{T,1}(\pi, \nu) \geq \frac{\sqrt{nT(k-1)}}{16e}.\]
\end{restatable}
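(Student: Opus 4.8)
The plan is to prove the $\Omega(\sqrt{nkT})$ lower bound by a standard information-theoretic argument, adapted to the multi-agent constrained setting. The key observation is that the $\gamma$-constraint in Equation~\eqref{eq:gamma_constraint} is slack at $\gamma = 0$ and, more importantly, becomes \emph{non-binding} on the particular family of instances I will construct: if I choose all users to have \emph{identical} reward means, then the constrained optimum coincides with the unconstrained optimum (everyone plays the same best arm, trivially satisfying $\vec{p}_i \geq \frac{\gamma}{n}\sum_{i'} \vec{p}_{i'}$). This lets me sidestep the constraints entirely and reduce to $n$ parallel copies of a single-agent $k$-armed bandit whose regret I already know how to lower bound.

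Concretely, I would follow the approach of \citet[Theorem 15.2]{Lattimore20:Bandit}. First I fix a reference instance in which every user's reward means are $\mu_{i,j} = \frac{1}{2}$ for all arms, and then for each arm $\ell \in \{2, \dots, k\}$ define a perturbed instance $\nu_\ell$ in which every user's mean on arm $\ell$ is raised to $\frac{1}{2} + \epsilon$, with $\epsilon = \sqrt{\frac{(k-1)}{nT}}$ (up to constants). Because the means are common across users, the constrained-optimal policy $\vec p^*$ plays the unique best arm $\ell$ for every user, so the regret $R_{T,1}(\pi,\nu_\ell)$ equals $\epsilon$ times the expected number of times $\pi$ plays a suboptimal arm, summed over all $n$ users. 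Next I would invoke the standard divergence decomposition: for each user $i$, the KL divergence between $\P_{\pi\nu_1}$ (the reference) and $\P_{\pi\nu_\ell}$ is bounded by the expected number of pulls of arm $\ell$ for user $i$ times the per-pull divergence $\mathrm{KL}(\frac12 \,\|\, \frac12 + \epsilon) = O(\epsilon^2)$; then Pinsker's inequality (or the Bretagnolle--Huber inequality) converts this into a lower bound on the probability of mis-identifying the good arm. Averaging over the $k-1$ choices of $\ell$ and summing over the $n$ users yields that the average regret across the family is $\Omega(n\epsilon T \cdot \frac{1}{k})$-ish; substituting the chosen $\epsilon$ gives $\Omega(\sqrt{nT(k-1)})$, and a supremum-over-instances argument extracts a single worst-case $\nu$.

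The one genuinely new ingredient, relative to the textbook single-agent bound, is accounting for the fact that the policy controls all $n$ users jointly and the regret aggregates over users. The clean way to handle this is to note that, because the instances are product distributions across users and the rewards $X_{t,i}$ are conditionally independent given the arms, the total pull counts $\sum_t \mathbb{1}\{A_{t,i} = j\}$ over all users and arms sum to exactly $nT$; this is the ``budget'' that forces some user to under-explore the good arm on some instance. I would make this precise by summing the single-user regret lower bounds and optimizing the allocation of pulls, exactly as in the single-agent proof but with the horizon effectively scaled to $nT$ pulls distributed over $k$ arms.

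I expect the main obstacle to be purely bookkeeping: verifying that the constraints genuinely impose no penalty on the constructed instances (so that the benchmark $T\sum_i \vec p_i^* \cdot \vec\mu_i$ is the full unconstrained optimum) and carefully tracking the factors of $n$, $k$, and $\epsilon$ so that the final bound matches the stated $\frac{\sqrt{nT(k-1)}}{16e}$ with the precise constant $\frac{1}{16e}$ (which likely comes from the Bretagnolle--Huber step, as $\frac{1}{2}e^{-1}$ factors are characteristic of that inequality). Since the paper explicitly notes that ``the proof is similar to that of existing bandit lower bounds,'' I would cite \citet[Theorem 15.2]{Lattimore20:Bandit} for the core machinery and only spell out the two adaptations above: the common-mean construction that neutralizes the constraints, and the aggregation of per-user regret into the factor of $n$.
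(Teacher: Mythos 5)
Your proposal is correct and follows essentially the same route as the paper: the paper's proof also takes all $n$ users to have identical Bernoulli means (so the $\gamma$-constraint is vacuous and the benchmark is the unconstrained optimum), treats the $nT$ total pulls as the exploration budget, and applies the divergence decomposition together with the Bretagnolle--Huber inequality with $\epsilon = \sqrt{(k-1)/(4nT)}$ to obtain the constant $\frac{1}{16e}$. The only cosmetic difference is that the paper uses the two-instance form of the argument---perturbing the single least-pulled arm $j^*$, whose expected pull count is at most $\frac{nT}{k-1}$, from $\frac12$ to $\frac12+2\epsilon$---rather than averaging over all $k-1$ perturbed instances, but these are interchangeable versions of the same textbook machinery.
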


\subsection{Regret analysis for Formulation 2}\label{sec:2_upper_bound}
Under Formulation 2, a variation on UCB we call Penalty-UCB (Algorithm~\ref{alg:ucb.f1}) achieves regret $\tilde O(n\sqrt{kT})$. Penalty-UCB maintains estimates $\vec{\hat{\mu}}_i^{(t)}$ of each $\vec{\mu}_i$ and selects the distribution maximizing the estimated reward minus the penalty: \[\left(\vecp_i^{(t)}\right)_{i \in [n]} = \amax\left\{\sum\limits_{i = 1}^n \vecp_i \cdot \vec{\hat{\mu}}_i^{(t)}- \eta \sum_{j = 1}^k \max\left\{\frac{\gamma}{n} \sum_{i' =1}^np_{i',j} - p_{i,j}, 0\right\}\right\}.\] For completeness, we include the proof of the following theorem in Appendix~\ref{app:2_regret}.

\begin{restatable}{theorem}{formtwo}\label{thm:taxation}
Let $\pi$ be the policy of Penalty-UCB. Then $R_{T,2}(\pi, \nu) = \tilde O(n\sqrt{kT}).$
\end{restatable}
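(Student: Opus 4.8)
The plan is to run the standard optimism-based UCB analysis, exploiting the fact that the per-round objective splits into a linear reward term that depends on the unknown means and a penalty term that depends only on the played distribution. Writing $g(\vec{p}; \vec{\mu}) = \sum_{i=1}^n \vec{\mu}_i \cdot \vec{p}_i - \eta \sum_{i=1}^n \sum_{j=1}^k \max\left\{\tfrac{\gamma}{n}\sum_{i'=1}^n p_{i',j} - p_{i,j},\, 0\right\}$, Equation~\eqref{eq:rew2} reads $\rew_2(\pi, \nu; \eta, \gamma) = \sum_{t=1}^T \E_{\pi\nu}\left[g(\vec{\pi}(\vec{h}_{t-1}); \vec{\mu})\right]$, and since $\vec{p}^*$ is history-independent, $R_{T,2}(\pi, \nu) = \sum_{t=1}^T \E_{\pi\nu}\left[g(\vec{p}^*; \vec{\mu}) - g(\vec{p}^{(t)}; \vec{\mu})\right]$, where $\vec{p}^{(t)} = \vec{\pi}(\vec{h}_{t-1})$ is the distribution Penalty-UCB plays at round $t$.

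First I would define the good event $\mathcal{G}$ on which every upper-confidence estimate is valid, i.e. $\mu_{i,j} \leq \hat{\mu}_{i,j}^{(t)}$ and $\hat{\mu}_{i,j}^{(t)} - \mu_{i,j} \leq 2\,\mathrm{conf}_{i,j}^{(t)}$ for all $i,j,t$, where $\mathrm{conf}_{i,j}^{(t)} = \sqrt{c\log(T)/N_{i,j}(t-1)}$ is the Hoeffding confidence width and $N_{i,j}(t-1)$ counts the pulls of arm $j$ for user $i$ before round $t$ (arms with $N_{i,j}=0$ are treated as having infinite width, contributing a lower-order $O(nk)$ term). A union bound over the $nkT$ triples makes $\mathcal{G}$ fail with probability $O(1/T)$ for an appropriate confidence level; on $\mathcal{G}^c$ the per-round regret is crudely bounded by $n(1+\eta\gamma k)$, so the total contribution of $\mathcal{G}^c$ is lower order and absorbed into the $\tilde O$.

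The crux is the per-round bound on $\mathcal{G}$. Because Penalty-UCB plays $\vec{p}^{(t)} = \amax_{\vec{p}} g(\vec{p}; \vec{\hat{\mu}}^{(t)})$, optimism gives $g(\vec{p}^{(t)}; \vec{\hat{\mu}}^{(t)}) \geq g(\vec{p}^*; \vec{\hat{\mu}}^{(t)}) \geq g(\vec{p}^*; \vec{\mu})$, the last step because $g$ is nondecreasing in $\vec{\mu}$ (the penalty does not depend on $\vec{\mu}$ and $\vec{p}_i^* \geq 0$). Hence the instantaneous regret is at most $g(\vec{p}^{(t)}; \vec{\hat{\mu}}^{(t)}) - g(\vec{p}^{(t)}; \vec{\mu})$, and here the penalty term cancels exactly since it is evaluated at the same $\vec{p}^{(t)}$, leaving $\sum_{i=1}^n \left(\vec{\hat{\mu}}_i^{(t)} - \vec{\mu}_i\right)\cdot \vec{p}_i^{(t)} \leq \sum_{i=1}^n \sum_{j=1}^k 2\,\mathrm{conf}_{i,j}^{(t)}\, p_{i,j}^{(t)}$. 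This cancellation is the main payoff: it reduces the penalized problem to the confidence-width summation of an ordinary stochastic bandit, so the penalty never enters the leading-order regret.

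Finally I would sum over $t$. Since $\mathrm{conf}_{i,j}^{(t)}$ and $p_{i,j}^{(t)}$ are $\vec{h}_{t-1}$-measurable and $\E_{\pi\nu}[\ind{A_{t,i}=j} \mid \vec{h}_{t-1}] = p_{i,j}^{(t)}$, the tower property lets me replace $p_{i,j}^{(t)}$ by $\ind{A_{t,i}=j}$ inside the expectation. Then the standard pigeonhole bound $\sum_{t=1}^T \mathrm{conf}_{i,j}^{(t)}\ind{A_{t,i}=j} = \sqrt{c\log T}\sum_{m=1}^{N_{i,j}(T)} m^{-1/2} \leq 2\sqrt{c\log T}\,\sqrt{N_{i,j}(T)}$, followed by Cauchy--Schwarz over $j$ (using $\sum_{j} N_{i,j}(T) = T$) and summation over $i$, yields $\sum_i \sum_j \sqrt{N_{i,j}(T)} \leq n\sqrt{kT}$. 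Combining the pieces gives $R_{T,2}(\pi, \nu) = O\!\left(\sqrt{\log T}\, n\sqrt{kT}\right) = \tilde O(n\sqrt{kT})$. The only real subtlety beyond bookkeeping is the measurability argument that legitimizes swapping the played distribution for its realization; the penalty, despite complicating the objective, is inert in the analysis thanks to the cancellation above.
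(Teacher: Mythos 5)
Your proposal is correct and follows essentially the same route as the paper's proof: the same optimism step in which the penalty terms cancel because the algorithm's argmax is evaluated at the same played distribution, reducing the per-round regret to $\sum_{i}\vec{p}_i^{(t)}\cdot(\vec{\hat{\mu}}_i^{(t)}-\vec{\mu}_i)$, followed by the standard confidence-width pigeonhole and Cauchy--Schwarz summation giving $n\sqrt{kT}$. The only (minor) divergence is that you pass from the played distribution $p_{i,j}^{(t)}$ to the realized pull $\ind{A_{t,i}=j}$ via the tower property in expectation, whereas the paper uses an Azuma--Hoeffding martingale argument to do this with high probability (incurring an extra additive $\tilde O(n\sqrt{T})$ term); both are valid and yield the same $\tilde O(n\sqrt{kT})$ bound.
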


\subsection{Regret analysis for Formulation 3}
A key challenge under Formulation 3 is that the platform's optimal strategy, given perfect information about the reward distributions $\cD_{i,j}$, may be \emph{history dependent}. For example, the platform may choose to increase or decrease personalization dynamically based on the empirical distribution of content chosen thus far.

Nonetheless, we show that Penalty-UCB (Algorithm~\ref{alg:ucb.f1}) competes with the optimal history-dependent policy by reducing our analysis to that of Section~\ref{sec:2_upper_bound}.
 We use the notation $\pi^*$ to denote the optimal policy that maximizes Equation~\eqref{eq:rew3}.

First, we show that under Formulation 2, the optimal policy obtains a larger reward (measured in terms of $\rew_2$) than $\pi^*$ under Formulation 3 (measured in terms of $\rew_3$). The full proof is in Appendix~\ref{app:3_regret}.

\begin{restatable}{lemma}{ubOptReward}\label{lem:ub_opt_reward}
Let $\vec{p}^* = \left(\vec{p}_1^*, \dots, \vec{p}_n^*\right)$ with $\vec{p}_i^* \in \cP^{k-1}$ be the policy that maximizes $\rew_2\left(\vec{p}, \nu; \frac{\eta}{T}, \gamma\right)$.  Then \[\rew_2\left(\vec{p}^*, \nu; \frac{\eta}{T}, \gamma\right) \geq \rew_3(\pi^*, \nu; \eta, \gamma).\]

\end{restatable}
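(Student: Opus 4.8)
I would not plug $\pi^*$ directly into $\rew_2$ (its history dependence is awkward there). Instead, since $\vec{p}^*$ \emph{maximizes} $\rew_2(\cdot, \nu; \tfrac{\eta}{T}, \gamma)$, it suffices to exhibit a single \emph{stationary} (history-independent) policy $\bar\pi$ with $\rew_2(\bar\pi, \nu; \tfrac{\eta}{T}, \gamma) \ge \rew_3(\pi^*, \nu; \eta, \gamma)$; the chain
\[\rew_2\left(\vec{p}^*, \nu; \tfrac{\eta}{T}, \gamma\right) \ge \rew_2\left(\bar\pi, \nu; \tfrac{\eta}{T}, \gamma\right) \ge \rew_3(\pi^*, \nu; \eta, \gamma)\]
then gives the lemma. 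The policy $\bar\pi$ is built by averaging the marginals of $\pi^*$: for each $i,j$ set $\bar{p}_{i,j} \defeq \frac{1}{T}\sum_{t=1}^T \E_{\pi^*\nu}\!\left[\pi^*_i(j \mid \vec{h}_{t-1})\right]$ and let $\bar\pi$ play $\bar{\vec{p}}_i$ i.i.d.\ across rounds. Since $\sum_j \pi^*_i(j \mid \vec{h}_{t-1}) = 1$, we get $\sum_j \bar{p}_{i,j} = 1$, so each $\bar{\vec{p}}_i \in \cP^{k-1}$ is a valid distribution.

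\textbf{Reward terms coincide.} By the tower property, $\E_{\pi^*\nu}[\mathbf{1}_{\{A_{t,i}=j\}}] = \E_{\pi^*\nu}[\pi^*_i(j \mid \vec{h}_{t-1})]$, so the time-averaged marginal of $\pi^*$ equals the expected empirical frequency: $\bar{p}_{i,j} = \E_{\pi^*\nu}[\hat{p}_{i,j}]$. Because $\bar\pi$ is stationary, its reward part of $\rew_2$ is $T\sum_i \vec{\mu}_i \cdot \bar{\vec{p}}_i = \sum_i \sum_t \E_{\pi^*\nu}[\vec{\mu}_i \cdot \vec{\pi}^*_i(\vec{h}_{t-1})]$, which is exactly the reward (non-penalty) part of $\rew_3(\pi^*, \nu; \eta, \gamma)$ as written in Equation~\eqref{eq:rew3}. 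Thus the two reward terms are literally equal, and only the penalties remain to be compared.

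\textbf{Penalty comparison via Jensen.} Write $g_{i,j} \defeq \frac{\gamma}{n}\sum_{i'} \hat{p}_{i',j} - \hat{p}_{i,j}$, so that the Formulation~3 penalty is $\eta \sum_{i,j} \E_{\pi^*\nu}[\max\{g_{i,j}, 0\}]$. Since $\bar\pi$ is stationary, its per-round penalty is constant in $t$, the factor $\tfrac{1}{T}$ cancels the sum over the $T$ rounds, and using $\bar{p}_{i,j} = \E_{\pi^*\nu}[\hat{p}_{i,j}]$ twice,
\[\text{penalty}_2\!\left(\bar\pi; \tfrac{\eta}{T}\right) = \eta \sum_{i,j} \max\Big\{\tfrac{\gamma}{n}\textstyle\sum_{i'}\bar{p}_{i',j} - \bar{p}_{i,j},\, 0\Big\} = \eta \sum_{i,j} \max\left\{\E_{\pi^*\nu}[g_{i,j}],\, 0\right\}.\]
As $x \mapsto \max\{x,0\}$ is convex, Jensen's inequality gives $\max\{\E_{\pi^*\nu}[g_{i,j}], 0\} \le \E_{\pi^*\nu}[\max\{g_{i,j}, 0\}]$ term by term, hence $\text{penalty}_2(\bar\pi; \tfrac{\eta}{T}) \le \text{penalty}_3(\pi^*; \eta)$. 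Combined with the equal reward terms, this yields $\rew_2(\bar\pi, \nu; \tfrac{\eta}{T}, \gamma) \ge \rew_3(\pi^*, \nu; \eta, \gamma)$, and optimality of $\vec{p}^*$ closes the argument.

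\textbf{Main obstacle.} The one load-bearing idea is to route the comparison through the averaged stationary policy rather than through $\pi^*$ itself. A naive attempt to dominate the per-round true-probability penalty of $\pi^*$ by its empirical penalty \emph{step by step} fails, because $\max\{\cdot,0\}$ is subadditive, so the sum over $t$ of per-round penalties can exceed the penalty of the aggregate. Averaging first converts the question into exactly $\max\{\E[g_{i,j}],0\}$ versus $\E[\max\{g_{i,j},0\}]$, where convexity points in the needed direction. The remaining checks---the tower-property identity $\bar{p}_{i,j} = \E_{\pi^*\nu}[\hat{p}_{i,j}]$ and feasibility of $\bar{\vec{p}}_i$ on the simplex---are routine.
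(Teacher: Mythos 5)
Your proposal is correct and follows essentially the same route as the paper's proof: both hinge on Jensen's inequality for $x \mapsto \max\{x,0\}$ to compare $\E_{\pi^*\nu}[\max\{g_{i,j},0\}]$ with $\max\{\E_{\pi^*\nu}[g_{i,j}],0\}$, both introduce the time-averaged history-independent policy $p_{i,j} = \frac{1}{T}\sum_t \E_{\pi^*\nu}[\pi^*_i(j \mid \vec{h}_{t-1})] = \E_{\pi^*\nu}[\hat{p}_{i,j}]$, and both close with the optimality of $\vec{p}^*$. The only difference is presentational — you build the intermediate policy first and lower-bound its $\rew_2$, while the paper upper-bounds $\rew_3(\pi^*)$ and recognizes the result as $\rew_2$ of that same policy — which is the identical chain of inequalities read in a different order.
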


\begin{proof}[Proof sketch]
For any arm $j\in [k]$, we can exchange the expectation and the maximum in Equation~\eqref{eq:rew3} as follows: \[\E_{\pi^*\nu}\left[\max\left\{\frac{\gamma}{n} \sum_{i'=1}^n \hat{p}_{i',j} - \hat{p}_{i,j}, 0\right\}\right] \geq \max\left\{\frac{\gamma}{n}\sum_{i'=1}^n  \E_{\pi^*\nu}\left[\hat{p}_{i',j}\right] -  \E_{\pi^*\nu}\left[\hat{p}_{i,j}\right], 0\right\}.\]
By definition of $\E_{\pi^*\nu}\left[\hat{p}_{i,j}\right]$, this allows us to show that $\rew_3(\pi^*, \nu; \eta, \gamma)$ is upper-bounded by
\begin{align}\sum_{i = 1}^n\sum_{j = 1}^k\nonumber &\left(\sum_{t = 1}^T \mu_{i,j}\E_{\pi^*\nu}\left[\pi^*_i(j \mid \vec{h}_{t-1})\right]\right.\\
&\left. - \eta \max\left\{\frac{1}{T} \sum_{t = 1}^T\left(\frac{\gamma}{n}\sum_{i'=1}^n \E_{\pi^*\nu}\left[\pi^*_{i'}(j \mid \vec{h}_{t-1})\right] - \E_{\pi^*\nu}\left[\pi^*_i(j \mid \vec{h}_{t-1})\right]\right), 0\right\}\right).\label{eq:exp_inside_main}
\end{align}
We next define the history-independent policy $\vec{p} = \left(\vec{p}_1, \dots, \vec{p}_n\right)$ such that \[p_{i,j} = \frac{1}{T}\sum_{t = 1}^T \E_{\pi^*\nu}\left[\pi_i^*(j \mid \vec{h}_{t-1})\right].\] We rearrange Equation~\eqref{eq:exp_inside_main} and use the definition of $\vec{p}^*$ to get that \begin{align*}\rew_3(\pi^*, \nu; \eta, \gamma) &\leq \sum_{i = 1}^n\sum_{j = 1}^k\left(T \mu_{i,j}p_{i,j} - \eta \max\left\{\frac{\gamma}{n}\sum_{i'=1}^n p_{i', j} - p_{i,j}, 0\right\}\right)\\
&\leq \sum_{i = 1}^n\sum_{j = 1}^k\left(T \mu_{i,j}p^*_{i,j} - \eta \max\left\{\frac{\gamma}{n}\sum_{i'=1}^n p^*_{i', j} - p^*_{i,j}, 0\right\}\right)= \rew_2\left(\vec{p}^*, \nu; \frac{\eta}{T}, \gamma\right).\end{align*}
\end{proof}

Next, we show that for any policy $\pi$ that deterministically plays each of the $k$ arms once in the first $k$ rounds, the difference between its rewards under Formulations 2 and 3 is bounded. This condition holds for Penalty-UCB (Algorithm~\ref{alg:ucb.f1}) and could be removed with a slightly more involved analysis. The full proof is in Appendix~\ref{app:3_regret}.

\begin{restatable}{lemma}{lbAnyPolicy}\label{lem:lb_any_policy}
Let $\pi$ be any policy such that $\pi_i(t \mid \vec{h}_{t-1}) = 1$ for all $t \leq k$ and $i \in [n]$.
    For any instance $\nu$,
    \[\rew_2\left(\pi, \nu; \frac{\eta}{T}, \gamma\right) \leq \rew_3(\pi, \nu; \eta, \gamma)+ \eta nk(\gamma+1)\sqrt{\frac{10\log T}{T}}.\]
\end{restatable}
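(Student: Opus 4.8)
The plan is to exploit that the reward terms $\vec{\mu}_i \cdot \vec{\pi}_i(\vec{h}_{t-1})$ appear identically in $\rew_2$ and $\rew_3$, so that the entire difference collapses to a difference of penalty terms. Writing $g_{i,j}^{(t)} = \max\{\frac{\gamma}{n}\sum_{i'}\pi_{i'}(j \mid \vec{h}_{t-1}) - \pi_i(j \mid \vec{h}_{t-1}),0\}$ and $g_{i,j}^{\mathrm{emp}} = \max\{\frac{\gamma}{n}\sum_{i'}\hat{p}_{i',j} - \hat{p}_{i,j},0\}$, I would first expand both reward functions from Equations~\eqref{eq:rew2} and \eqref{eq:rew3} and cancel the matching linear terms to obtain
\[
\rew_2\!\left(\pi,\nu;\tfrac{\eta}{T},\gamma\right) - \rew_3(\pi,\nu;\eta,\gamma) = \eta \sum_{i=1}^n\sum_{j=1}^k \left(\E_{\pi\nu}\!\left[g_{i,j}^{\mathrm{emp}}\right] - \tfrac{1}{T}\sum_{t=1}^T \E_{\pi\nu}\!\left[g_{i,j}^{(t)}\right]\right).
\]
This reduces the lemma to showing that each bracketed term is at most $(\gamma+1)\sqrt{10\log T / T}$.

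The bridge between the two penalties is the history-averaged true probability $\bar{p}_{i,j} \defeq \frac{1}{T}\sum_{t=1}^T \pi_i(j \mid \vec{h}_{t-1})$, which is the conditional-mean counterpart of $\hat{p}_{i,j}$. I would control both penalties against the common anchor $\max\{\frac{\gamma}{n}\sum_{i'}\bar{p}_{i',j} - \bar{p}_{i,j},0\}$. Since $x \mapsto \max\{x,0\}$ is convex, Jensen's inequality gives $\frac{1}{T}\sum_t g_{i,j}^{(t)} \geq \max\{\frac{\gamma}{n}\sum_{i'}\bar{p}_{i',j} - \bar{p}_{i,j},0\}$, which lower-bounds the subtracted term. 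Since $\max\{\cdot,0\}$ is also $1$-Lipschitz, the triangle inequality yields $g_{i,j}^{\mathrm{emp}} \leq \max\{\frac{\gamma}{n}\sum_{i'}\bar{p}_{i',j} - \bar{p}_{i,j},0\} + \frac{\gamma}{n}\sum_{i'}|\hat{p}_{i',j} - \bar{p}_{i',j}| + |\hat{p}_{i,j} - \bar{p}_{i,j}|$. Taking expectations and subtracting cancels the anchor, leaving, with $E_{i,j} \defeq |\hat{p}_{i,j} - \bar{p}_{i,j}|$,
\[
\E_{\pi\nu}\!\left[g_{i,j}^{\mathrm{emp}}\right] - \tfrac{1}{T}\sum_{t=1}^T \E_{\pi\nu}\!\left[g_{i,j}^{(t)}\right] \leq \tfrac{\gamma}{n}\sum_{i'=1}^n \E_{\pi\nu}[E_{i',j}] + \E_{\pi\nu}[E_{i,j}].
\]

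It then remains to bound $\E_{\pi\nu}[E_{i,j}]$. For a fixed pair $(i,j)$, the increments $M_t \defeq \ind{A_{t,i}=j} - \pi_i(j \mid \vec{h}_{t-1})$ form a bounded martingale difference sequence with respect to the history filtration: each has conditional mean zero given $\vec{h}_{t-1}$ and magnitude at most $1$, and the deterministic first $k$ rounds simply force $M_t = 0$ there. Since $E_{i,j} = \frac{1}{T}|\sum_{t=1}^T M_t|$, Azuma--Hoeffding gives $\P(E_{i,j} \geq \epsilon) \leq 2e^{-T\epsilon^2/2}$. Integrating this tail (using $E_{i,j} \leq 1$ and the threshold $\epsilon_0 = \sqrt{2\log T / T}$) yields $\E_{\pi\nu}[E_{i,j}] \leq \sqrt{2\log T/T} + 2/T \leq \sqrt{10\log T/T}$ for $T \geq 2$. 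Plugging this into the previous display bounds each bracketed term by $(\gamma+1)\sqrt{10\log T/T}$; summing the $nk$ terms and multiplying by $\eta$ gives the claimed bound.

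The main obstacle is the concentration step. Because $\pi$ is adaptive, the pull indicators $\ind{A_{t,i}=j}$ are neither independent nor identically distributed, and their conditional means $\pi_i(j \mid \vec{h}_{t-1})$ are themselves random, history-dependent quantities. This is precisely why $\bar{p}_{i,j}$ (rather than a fixed deterministic mean) is the correct centering, and why a martingale inequality is needed in place of a plain Hoeffding bound; carefully verifying the martingale-difference property and boundedness is the crux of the argument, whereas the convexity/Lipschitz manipulations and the tail-to-expectation conversion are routine.
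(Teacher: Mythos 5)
Your proof is correct, but it takes a genuinely different route from the paper's. Both arguments share the same skeleton: the reward terms cancel, the claim reduces to bounding $\E_{\pi\nu}\bigl[g_{i,j}^{\mathrm{emp}}\bigr] - \tfrac{1}{T}\sum_t \E_{\pi\nu}\bigl[g_{i,j}^{(t)}\bigr]$ by $(\gamma+1)\sqrt{10\log T/T}$ for each pair $(i,j)$, and an Azuma--Hoeffding bound on the gap between pull indicators and their conditional means does the probabilistic work. The bridging device differs. The paper pushes $\max\{\cdot,0\}$ through the time-average and the expectation via Jensen, then invokes an inequality of Aven, $\E[\max\{X,0\}] \leq \max\{\E[X],0\} + \sqrt{\mathrm{Var}(X)/2}$, which forces it to bound $\mathrm{Var}\bigl(\sum_t Y_t\bigr)$ for $Y_t = \tfrac{\gamma}{n}\sum_{i'}\ind{A_{t,i'}=j} - \ind{A_{t,i}=j}$; this is done through a high-probability Azuma event combined with the law of total variance and Popoviciu's inequality. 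You instead anchor both penalties at the history-averaged true marginals $\bar{p}_{i,j}$, use convexity of $\max\{\cdot,0\}$ in one direction and its $1$-Lipschitzness in the other, and directly bound $\E|\hat{p}_{i,j}-\bar{p}_{i,j}|$ by integrating the Azuma tail. Your route buys two things: it avoids Aven's inequality and the somewhat delicate conditional-variance computation entirely, and it centers the martingale at conditional means, $M_t = \ind{A_{t,i}=j} - \pi_i(j\mid\vec{h}_{t-1})$, which is the textbook-correct martingale difference sequence (the paper's $D_t$, centered at unconditional means, is not obviously a martingale as written). A further minor bonus: your argument never actually uses the hypothesis that the first $k$ rounds are deterministic, so it removes the extra assumption the paper flags as removable ``with a slightly more involved analysis.'' Both routes land on the same constant.
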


\begin{proof}[Proof sketch]
    For any arm $j \in [k]$, we can exchange the expectation and the maximum in Equation~\eqref{eq:rew2} as follows: \begin{align}&\frac{1}{T}\sum_{t = 1}^T\E_{\pi\nu}\left[\max\left\{\frac{\gamma}{n} \sum_{i'=1}^n \pi_{i'}(j \mid \vec{h}_{t-1}) - \pi_{i}(j \mid \vec{h}_{t-1}) , 0\right\}\right] \nonumber\\ \geq \, &\max\left\{\E_{\pi\nu}\left[\frac{1}{T}\sum_{t = 1}^T\left(\frac{\gamma}{n}\sum_{i'=1}^n  \textbf{1}_{\{A_{t,i'} = j\}} -  \textbf{1}_{\{A_{t,i} = j\}}\right)\right], 0\right\}.\label{eq:pi_inside_main}\end{align}
Next, we use a result by \citet{Aven85:Upper} to show that the right-hand-side of Equation~\eqref{eq:pi_inside_main} is lower-bounded by \begin{equation}\E_{\pi\nu}\left[\max\left\{\frac{\gamma}{n} \sum_{i'=1}^n \hat{p}_{i',j} - \hat{p}_{i,j}, 0\right\}\right] - \sqrt{\frac{1}{2T^2}\cdot \text{Var}\left( \sum_{t = 1}^T\left(\frac{\gamma}{n}\sum_{i'=1}^n  \textbf{1}_{\{A_{t,i'} = j\}} -  \textbf{1}_{\{A_{t,i} = j\}}\right)\right)}.\label{eq:aven_main}\end{equation}
We upper-bound the variance term in Equation~\eqref{eq:aven_main} by $20 T (\gamma+1)^2\log T$, which implies the lemma statement.
\end{proof}

Our regret bound follows from Lemmas~\ref{lem:ub_opt_reward} and \ref{lem:lb_any_policy} as well as Theorem~\ref{thm:taxation}. The proof is in Appendix~\ref{app:3_regret}.

\begin{restatable}{theorem}{empRegret}\label{thm:empirical_regret}
Let $\pi$ be the policy played by Algorithm~\ref{alg:ucb.f1}. Then the regret is bounded as 
    \[\rew_3(\pi^*, \nu; \eta, \gamma) - \rew_3(\pi, \nu; \eta, \gamma) = \tilde O\left( n\sqrt{kT} + \frac{\eta nk(1+\gamma)}{\sqrt{T}}\right).\]
\end{restatable}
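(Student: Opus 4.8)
The plan is to derive the bound by chaining the three cited results, run on a single instantiation of Penalty-UCB. Concretely, I would run Penalty-UCB (Algorithm~\ref{alg:ucb.f1}) with the Formulation-2 penalty parameter set to $\frac{\eta}{T}$, call the resulting policy $\pi$, and let $\vec{p}^*$ denote the fixed distribution maximizing $\rew_2\!\left(\cdot, \nu; \frac{\eta}{T}, \gamma\right)$. The key idea is that Lemmas~\ref{lem:ub_opt_reward} and \ref{lem:lb_any_policy} let me pass between $\rew_3$ and $\rew_2$ at the two endpoints of the regret gap---the history-dependent optimum $\pi^*$ and the algorithm $\pi$---while Theorem~\ref{thm:taxation} controls the distance between those endpoints entirely within the (history-independent) Formulation-2 world, where the analysis of Section~\ref{sec:2_upper_bound} already applies.

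Carrying this out, I would first upper bound the benchmark using Lemma~\ref{lem:ub_opt_reward}, giving $\rew_3(\pi^*, \nu; \eta, \gamma) \leq \rew_2\!\left(\vec{p}^*, \nu; \frac{\eta}{T}, \gamma\right)$. Next I would invoke Theorem~\ref{thm:taxation} at penalty parameter $\frac{\eta}{T}$ to conclude that the Formulation-2 regret of $\pi$ is small, i.e. $\rew_2\!\left(\vec{p}^*, \nu; \frac{\eta}{T}, \gamma\right) \leq \rew_2\!\left(\pi, \nu; \frac{\eta}{T}, \gamma\right) + \tilde O(n\sqrt{kT})$. Finally, since Penalty-UCB deterministically plays each arm once in its first $k$ rounds, it satisfies the hypothesis of Lemma~\ref{lem:lb_any_policy}, which yields $\rew_2\!\left(\pi, \nu; \frac{\eta}{T}, \gamma\right) \leq \rew_3(\pi, \nu; \eta, \gamma) + \eta nk(\gamma+1)\sqrt{\frac{10\log T}{T}}$. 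Adding the three inequalities and cancelling the intermediate $\rew_2$ terms gives
\[\rew_3(\pi^*, \nu; \eta, \gamma) - \rew_3(\pi, \nu; \eta, \gamma) \leq \tilde O(n\sqrt{kT}) + \eta nk(\gamma+1)\sqrt{\frac{10\log T}{T}},\]
which is exactly $\tilde O\!\left(n\sqrt{kT} + \frac{\eta nk(1+\gamma)}{\sqrt{T}}\right)$ once the $\sqrt{\log T}$ factor is absorbed into the $\tilde O$.

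The chaining itself is routine; the point requiring care is the bookkeeping on the penalty parameter. Because Lemma~\ref{lem:ub_opt_reward} relates $\rew_3$ at parameter $\eta$ to $\rew_2$ at parameter $\frac{\eta}{T}$, I must run Penalty-UCB---and apply Theorem~\ref{thm:taxation} and Lemma~\ref{lem:lb_any_policy}---at this same scaled parameter $\frac{\eta}{T}$ for the telescoping of the $\rew_2$ terms to be valid. This forces me to confirm that the $\tilde O(n\sqrt{kT})$ guarantee of Theorem~\ref{thm:taxation} holds \emph{uniformly} in the penalty parameter. I expect this to be the main thing to verify, and it should hold because in the Penalty-UCB analysis the penalty is a known, deterministic function of the chosen distribution: under optimism the penalty contributes identically to the estimated and true objectives and cancels, so it adds nothing to the regret, which is then driven purely by the standard reward-estimation error and is independent of the penalty weight. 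The only remaining check is that Penalty-UCB genuinely plays each arm once up front so that Lemma~\ref{lem:lb_any_policy} applies, which holds by construction of Algorithm~\ref{alg:ucb.f1}.
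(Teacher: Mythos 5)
Your proposal is correct and matches the paper's proof essentially verbatim: the paper also telescopes through $\rew_2\left(\vec{p}^*, \nu; \frac{\eta}{T}, \gamma\right)$ and $\rew_2\left(\pi, \nu; \frac{\eta}{T}, \gamma\right)$, applying Lemma~\ref{lem:ub_opt_reward}, Theorem~\ref{thm:taxation}, and Lemma~\ref{lem:lb_any_policy} in exactly the way you describe. Your observation that the Penalty-UCB regret bound is uniform in the penalty parameter (because the penalty terms cancel under optimism) is the right justification for invoking Theorem~\ref{thm:taxation} at the scaled parameter $\frac{\eta}{T}$.
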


Even if $\eta$ grows linearly in $T$, the regret bound in Theorem~\ref{thm:empirical_regret} will only grow with $\sqrt{T}$.

\section{Empirical Results}
To explore how our framework impacts exposure diversity in practice, we test it out on real world data: the MovieLens dataset \cite{Harper15:Movielens} which describes people’s expressed preferences for movies\footnote{We use this dataset in order to analyze our methods on real-world user preferences, recognizing that movie recommendation filter bubbles would likely not be as pernicious as political news filter bubbles, for example.}. These preferences take the form of <user, item, rating, timestamp> tuples, each the result of a user giving a 0–5 star rating for a movie at a particular time.

\subsection{Experimental setup}\label{sec:exp_setup}
There are $n = 58$ users, randomly selected from the database,
and a set $\cK$ of $k = 18$ movie genres: $\cK = \{$Action, Adventure, Animation, Children, Comedy, Crime, Documentary, Drama, Fantasy, Film-Noir, Horror, Musical, Mystery, Romance, Sci-Fi, Thriller, War, Western$\}$.
Each genre is paired with an associated index in  $[k]$ determined by alphabetically ordering $\cK$.

For each movie $m \in \cM$ , where $\cM$ is the set of all movies, there is an associated genre set $m_K \subseteq [k]$ with $|m_K| \geq 1$ (a movie could belong to multiple genres). We use the ratings data to generate preferences for the users. For each movie $m \in \cM_i$, where $\cM_i$ is the set of movies watched by user $i \in [n]$, the user gives a numeric rating $r_{i,m}$ on a 5-star scale with half-star increments: $r_{i,m} \in \{0.5, 1, 1.5, \ldots, 5.0\}.$ We sum these ratings by genre and divide by the number of movies 
that user $i$ watched from that genre. This results in an average  rating $\mu_{i,j} \in [0,5]$ per genre $j \in [k]$. Finally, we divide  $\mu_{i,j}$ by 5 so that $\mu_{i,j} \in [0,1]$. In the end, \[\mu_{i,j} = \frac{\sum_{m \in \cM_i} r_{i,m}\cdot\ind{j \in m_K}}{\sum_{m \in \cM_i} \ind{j \in m_K}} \cdot \frac{1}{5}.\] Using the $\vec{\mu}_i$s as the mean reward vectors, we use linear programs (LPs) to solve for the optimal policy under no constraints and under both our polarization cap and polarization tax frameworks.

\subsection{Effect of the polarization cap and tax on content recommendations}\label{sec:exp_prob}

We begin by investigating the effects that our constraints from Formulation 1 (Section~\ref{sec:constraint}) have on the optimal content distribution.

These experiments provide a parallel to Lemma~\ref{lem:interpretation}, which shows that in a polarized population, users share the burden of diversification. To model a polarized society, we restrict our attention to two dissimilar genres: thriller and romance. In this restricted space, $\vec{\mu}_i \in [0,1]^2$.
We call the users who prefer the thriller genre \emph{thriller-lovers} and those who prefer the romance genre \emph{romance-lovers}.

\begin{figure}
    \centering
    \begin{subfigure}[b]{0.48\textwidth}
        \centering
        \includegraphics[width=\textwidth]{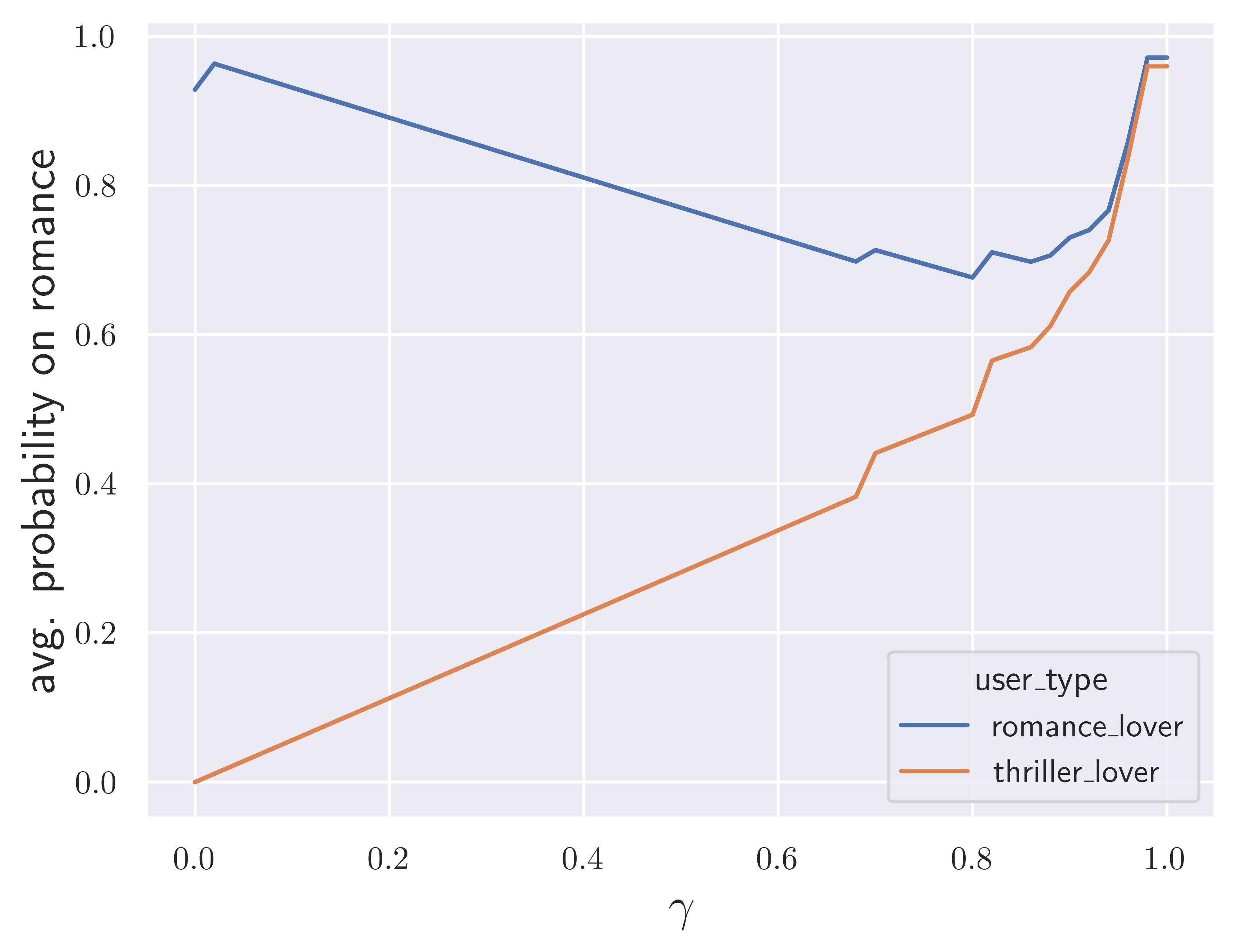}
        \caption{ Average probability placed on romance for romance- and thriller-lovers.}
        \label{fig:pcap_prob_2}
    \end{subfigure}
    \quad 
    \begin{subfigure}[b]{0.48\textwidth}
        \centering
        \includegraphics[width=\textwidth]{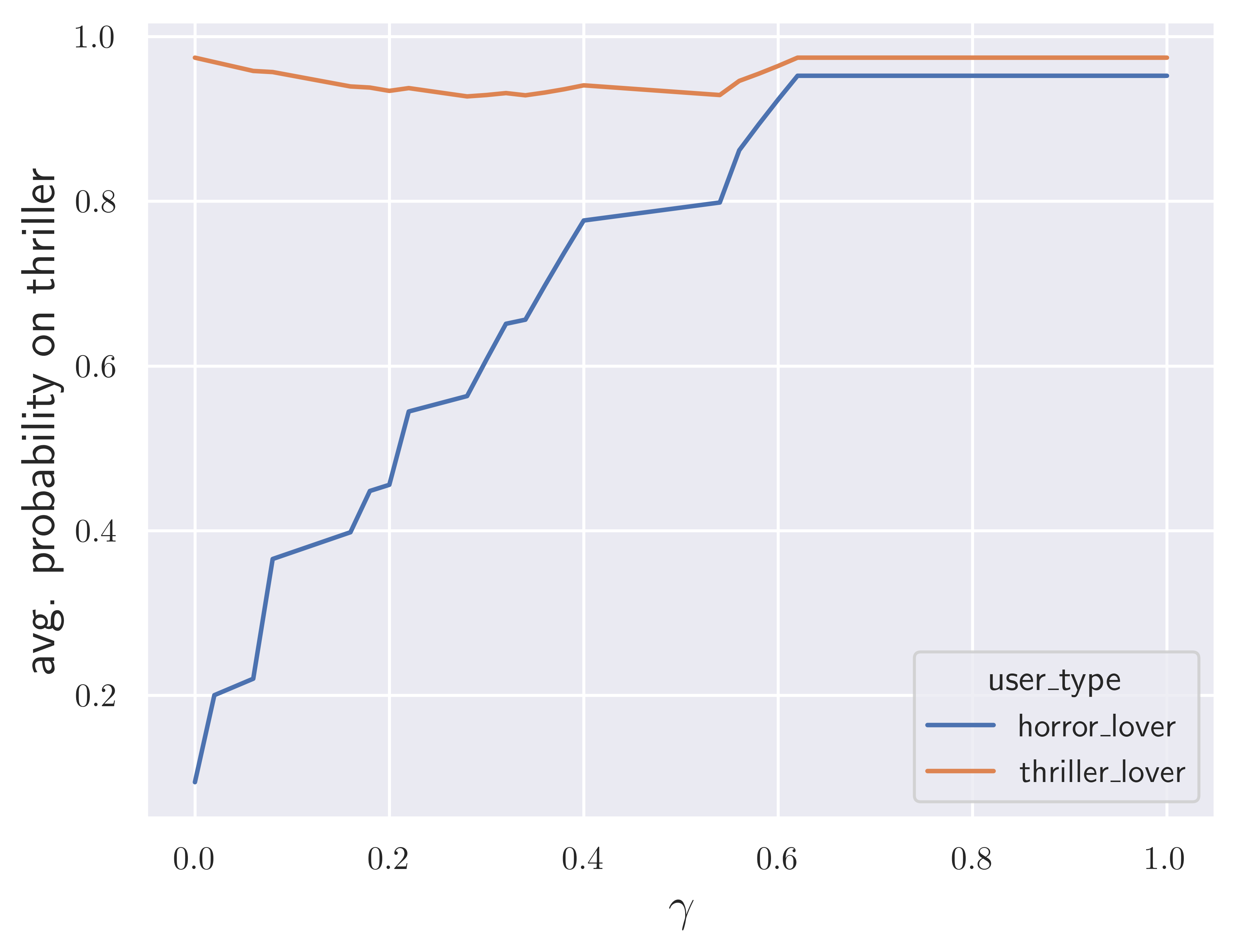}
        \caption{Average probability placed on thriller for horror- and thriller-lovers.}
        \label{fig:pcap_prob_1}
    \end{subfigure}
    \caption{Polarization cap: Content changes as a function of $\gamma$ for 2 user groups. We compute the optimal policy for 50 values of $\gamma$ equally spaced between $[0,1]$.}\label{fig:pcap_prob}
\end{figure}
In Figure~\ref{fig:pcap_prob_2}, we plot the probability placed on romance by the optimal policy (which maximizes $\sum_{i = 1}^n \vec{\mu}_i \cdot \vec{p}_i$ such that $\vec{p}_i \geq \frac{\gamma}{n} \sum_{i' = 1}^n \vec{p}_{i'}$) as a function of $\gamma$.
For comparison, we run the same experiments for two similar genres: thriller and horror. In Figure~\ref{fig:pcap_prob_1}, we plot the probability placed on thriller.

In both Figures~\ref{fig:pcap_prob_2} and ~\ref{fig:pcap_prob_1}, as $\gamma$ increases, the content recommendations become more homogeneous. However, the rates at which the recommendations 
become homogeneous are significantly different.
In Figure \ref{fig:pcap_prob_2}  where the users are polarized, the content recommendations converge slowly. It is not until $\gamma = 0.9$ that the content is completely homogeneous. 

Meanwhile, in Figure~\ref{fig:pcap_prob_1} where the groups of users are similar, the recommendations become homogeneous at a  faster rate. In this example, they converge at approximately $\gamma = 0.6$.

\begin{figure}
    \centering
    \begin{subfigure}[b]{0.48\textwidth}
        \centering
        \includegraphics[width=\textwidth]{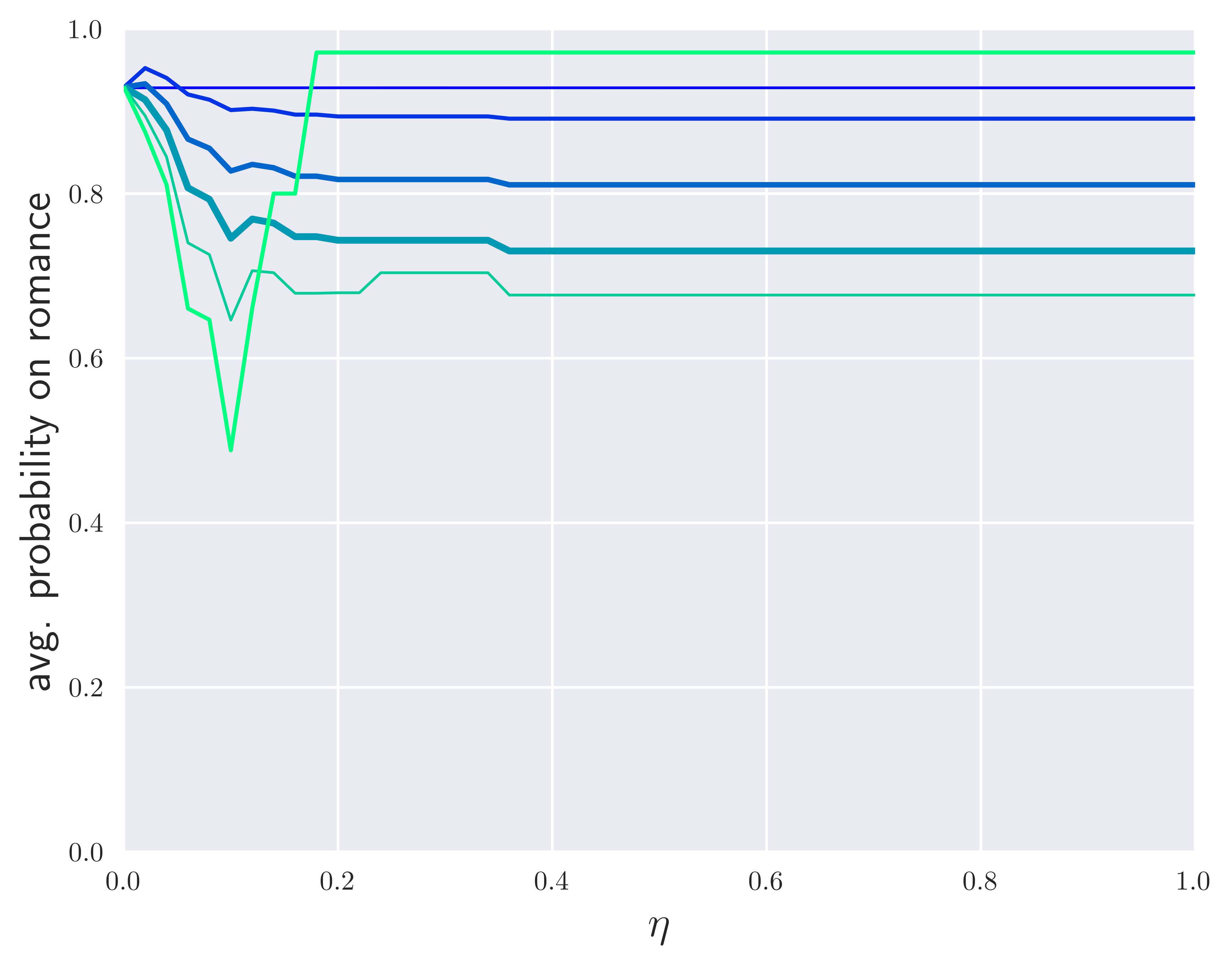}
        \caption{Probability placed on romance for romance-lovers.}
        \label{fig:ptax_prob_1}
    \end{subfigure}
    \quad 
    \begin{subfigure}[b]{0.48\textwidth}
        \centering
        \includegraphics[width=\textwidth]{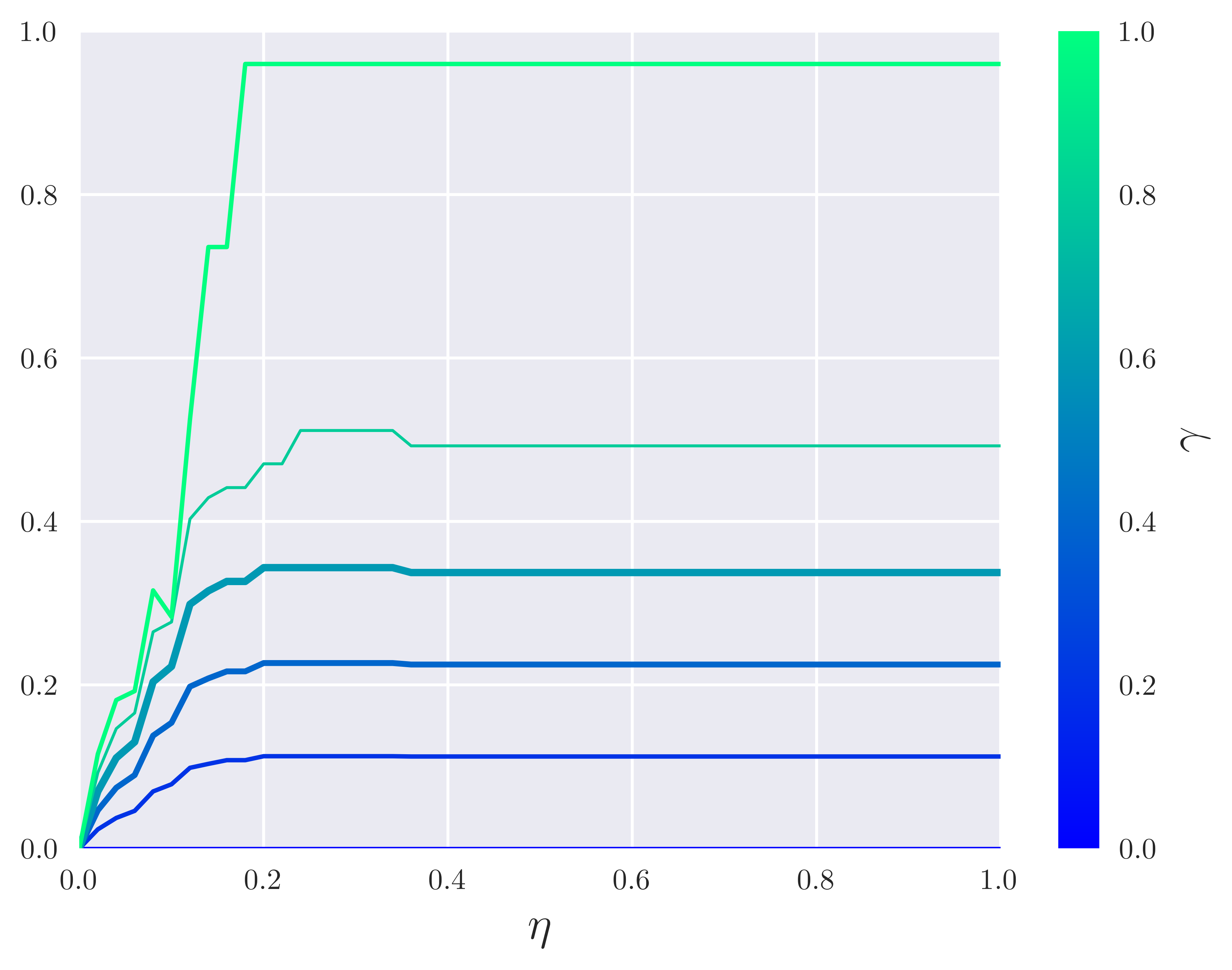}
        \caption{Probability placed on romance for thriller-lovers. \newline}
        \label{fig:ptax_prob_2}
    \end{subfigure}
    \caption{Polarization tax: Content changes as function of $\gamma$ and $\eta$ for 
        romance- and thriller-lovers. We compute the optimal policy for 6 values of $\gamma$ equally spaced between $[0,1]$ and 50 values of $\eta$ equally spaced between $[0,1]$.}
    \label{fig:ptax_prob}
\end{figure}
Under Formulation 2---where the platform is subject to a penalty (Equation~\eqref{eq:rew2})---we perform the same experiments for romance- versus thriller-lovers. These experiments are illustrated in Figure~\ref{fig:ptax_prob} where we vary both $\eta$ and $\gamma$.
As before, the content distributions converge as $\gamma$ increases. However, $\eta$ serves to modulate the impact of $\gamma$ on content recommendations.

When $\eta$ is small, the platform prefers to pay some tax to show more personalized content than they would under the hard constraint from Formulation 1. In fact, in Figure \ref{fig:ptax_prob_1}, we see that even when $\gamma =1$ (so the platform is penalized for any level of personalization), the platform prefers to pay some tax and personalize its recommendations, but for sufficiently large $\eta$ (approximately $\eta \gtrsim 0.2$), the platform switches to obeying the $\gamma$ constraint and paying no tax. For the other values of $\gamma$, the content recommendations change more gradually as $\eta$ grows. However, after a certain point ($\eta \gtrsim 0.4$), only the value of $\gamma$ leads to differences in the optimal policy.

\subsection{Effect of the polarization tax on user utility}\label{sec:user_rewards}
We next investigate the impact of the polarization penalty (Formulation 2) on the users' utility. 
We analyze the same setting from Section~\ref{sec:exp_prob} where there is a polarized society consisting of romance- and thriller-lovers.
\begin{figure}
    \centering
    \begin{subfigure}[b]{0.32\textwidth}
        \centering
        \includegraphics[width=\textwidth]{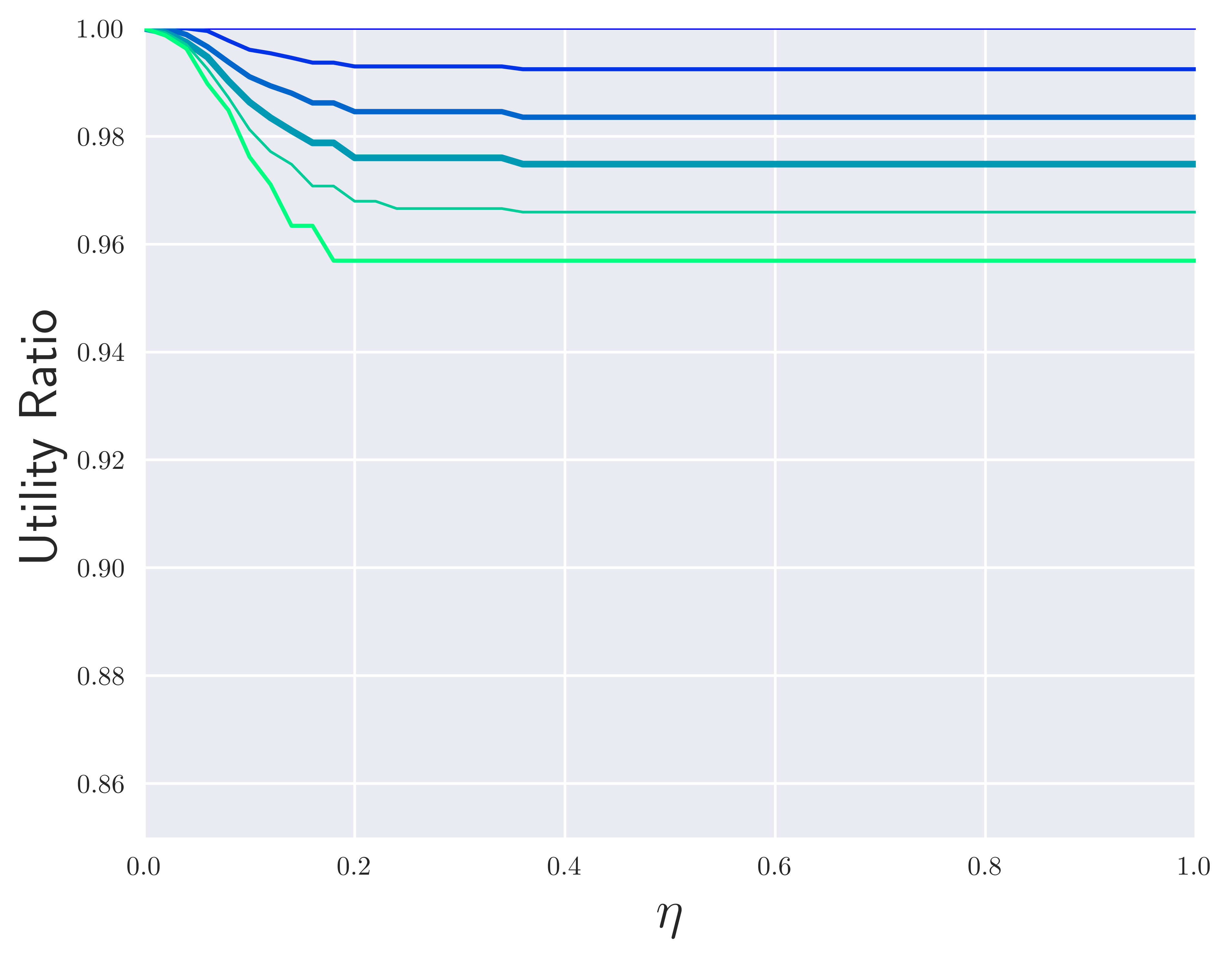}
        \caption{Romance- and thriller-lovers}
        \label{fig:ptax_util_ratio_2}
    \end{subfigure}
    \begin{subfigure}[b]{0.32\textwidth}
        \centering
        \includegraphics[width=\textwidth]{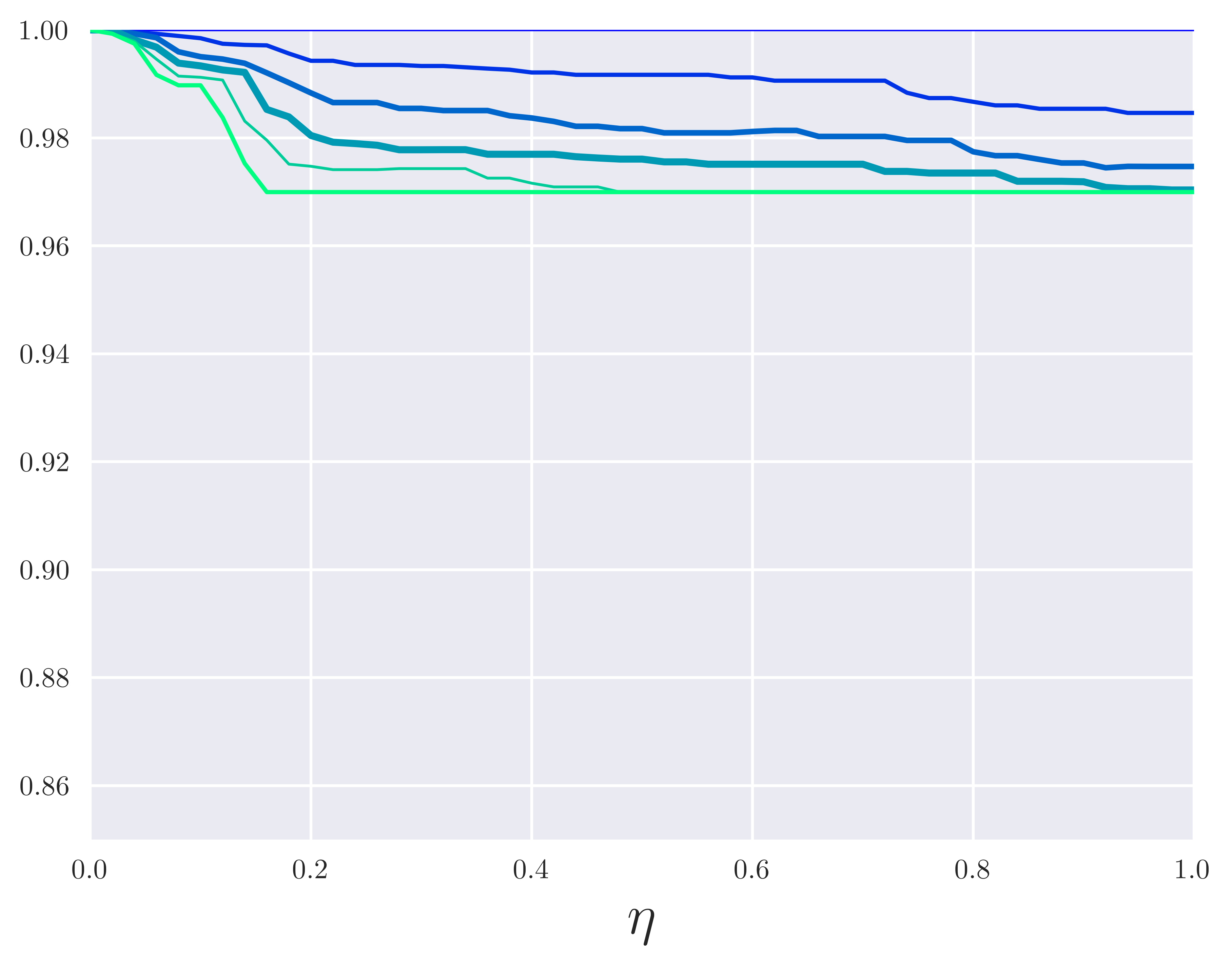}
        \caption{Horror- and thriller-lovers}
        \label{fig:ptax_util_ratio_1}
    \end{subfigure}
    \begin{subfigure}[b]{0.32\textwidth}
        \centering
        \includegraphics[width=\textwidth]{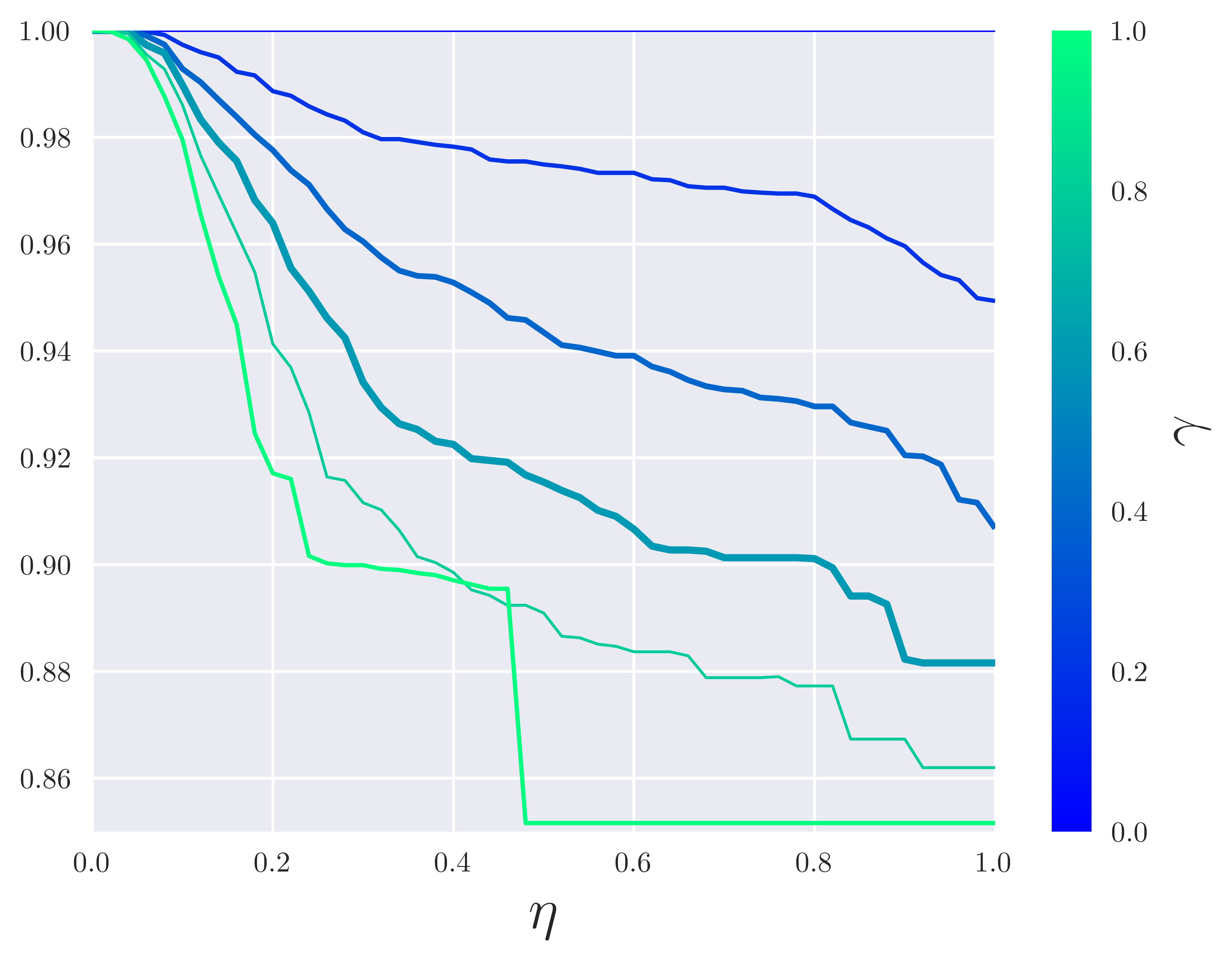}
        \caption{All user types}
        \label{fig:ptax_util_ratio_all}
    \end{subfigure}
    \caption{Multiplicative utility loss as a function of $\gamma$ and $\eta$.}
\end{figure}
Letting $(\vec{p}_i^{\gamma; \eta})_{i \in [n]}$ be the optimal policy under Equation~\eqref{eq:rew2} and $(\vec{p}_i^*)_{i \in [n]}$ policy with no penalty ($\eta = 0$), Figure~\ref{fig:ptax_util_ratio_2} plots the ratio of the users' cumulative utilities under these two policies:
$\sum \vec{\mu}_i \cdot \vec{p}_i^{\gamma; \eta}\left/\sum \vec{\mu}_i \cdot \vec{p}_i^*\right..$ Figure~\ref{fig:ptax_util_ratio_1} plots the same quantity under the homogenous society from Section~\ref{sec:exp_prob}  with only horror- and thriller-lovers.

In Figures~\ref{fig:ptax_util_ratio_2} and \ref{fig:ptax_util_ratio_1}, utility decreases as $\gamma$ and $\eta$ grow, but there is a larger utility loss for the polarized group (Figure~\ref{fig:ptax_util_ratio_2}) compared to the homogeneous group (Figure~\ref{fig:ptax_util_ratio_1}). Interestingly, in the homogenous group (Figure~\ref{fig:ptax_util_ratio_1}), utility continuously decreases as $\eta$ increases, while in the polarized group (Figure~\ref{fig:ptax_util_ratio_2}), the utility loss eventually flattens out. This is because when the population is homogeneous, as $\eta$ increases the platform only recommends one genre rather than pay the tax, even when $\gamma$ is small. However, when users are polarized (Figure~\ref{fig:ptax_prob}), the platform recommends both 
genres and pays some tax for most values of $\gamma$.  It is only when $\gamma =1$ that the platform recommends only one genre.

Figure~\ref{fig:ptax_util_ratio_all}
plots the same quantity but without restricting the genres ($\vec{\mu}_i \in [0,1]^{18}$). Since the users' preferences are more diverse, the users' cumulative utility suffers a larger but still minimal loss. This is because each user now sees a larger share of content they might not prefer since there are more groups on the platform. Finally, in Appendix~\ref{app:experiments}, we provide plots illustrating the  additive utility loss (rather than multiplicative).

\section{Conclusions and discussion}
Our work proposes a flexible approach to disincentivizing filter bubbles that adapts to the interests of the individuals on the network. Under our model, if some users are shown a particular type of content, then all users see at least a small amount of that content. We show that our model incentivizes diversity in a way that is equitable to users on the platform and 
discuss algorithms for recommending content under our framework.

There remain many open questions around disincentivizing polarization in social networks. 
One might want to distinguish between the content of protected minority groups and that of hate-focused or troll groups. Our current formulation does not distinguish between these situations. One could consider a model where the penalties or cap might scale non-linearly with the size of the group, allowing for more effective moderation. In addition, there is more work to be done to understand the precise impacts of our constraints on the utility of the users and platform.  Rewards could represent the profit of the platform or the utility of its users, and our current analysis does not address this distinction. 

However, the difference could be important when there is a wealth disparity between groups and differences in utility of the users might not easily map to differences in the platform's revenue. A related direction could be to extend our model to maximize
popular and well-studied notions of fairness like \textit{Nash social welfare}.

\paragraph{Acknowledgements.} We thank Rediet Abebe for inspirational early discussions about this research direction.

\bibliographystyle{plainnat}
\bibliography{bib.bib}

\appendix
\section{Proofs about first attempt (Section~\ref{sec:first})}\label{app:first}
\naiveOpt*

\begin{proof}
First, we write the expected reward as \[\sum_{i = 1}^n \vec{p}_i \cdot \vec{\mu}_i = \sum_{i \in N}p_{i,1} + \sum_{i \not\in N}p_{i,2} = n - |N| + \sum_{i \in N}p_{i,1} - \sum_{i \not\in N}p_{i,1},\] so we can write our optimization problem as \begin{equation}\begin{array}{ll}\text{maximize} &\sum_{i \in N}p_{i,1} - \sum_{i \not\in N}p_{i,1},\\
\text{subject to} &\left|p_{i,1} - \frac{1}{n}\sum_{i' = 1}^n p_{i',0}\right| \leq \Delta.\end{array}\label{eq:z_constraint}\end{equation} 

Next, we show that there exists an optimal solution such that $p_{i,1} = q_1$ for all $i\in N$ and $p_{i,1} = q_2$ for all $i\not\in N$, for some $q_1, q_2 \in [0,1].$

\begin{claim}\label{claim:WLOG_first_attempt}
An optimal solution to Equation~\eqref{eq:z_constraint} has $p_{i,1} = q_1$ for all $i\in N$ and $p_{i,1} = q_2$ for all $i\not\in N$, for some $q_1, q_2 \in [0,1].$
\end{claim}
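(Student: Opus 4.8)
The plan is a symmetrization-by-averaging argument: starting from an arbitrary optimal solution, I replace the probabilities within each of the two groups by their within-group average, and then verify that the resulting solution is still feasible, still optimal, and has the claimed two-value form.

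To set up, I would write $z_i \defeq p_{i,1} \in [0,1]$ for each $i$. Since $k = 2$ forces $p_{i,2} = 1 - p_{i,1}$, the scalar $z_i$ determines $\vec{p}_i$ completely, and the two coordinates of $\vec{p}_i - \bar{\vec{p}}$ have equal magnitude, so the constraint in Equation~\eqref{eq:z_constraint} is exactly $\left| z_i - \bar z\right| \le \Delta$ with $\bar z \defeq \frac{1}{n}\sum_{i'=1}^n z_{i'}$. Dropping the additive constant $n - |N|$, the objective is the linear function $\sum_{i \in N} z_i - \sum_{i \not\in N} z_i$.

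Now, given any optimal $(z_i)_{i \in [n]}$, I would form the group averages $q_1 \defeq \frac{1}{|N|}\sum_{i \in N} z_i$ and $q_2 \defeq \frac{1}{n - |N|}\sum_{i \not\in N} z_i$, and define the candidate solution $\tilde z_i = q_1$ for $i \in N$ and $\tilde z_i = q_2$ otherwise. The argument then rests on three short checks. First, the global mean is unchanged, $\frac{1}{n}\sum_i \tilde z_i = \bar z$, since $|N| q_1 + (n-|N|) q_2$ simply regroups $\sum_i z_i$. Second, the objective is unchanged, because $\sum_{i \in N}\tilde z_i = |N| q_1 = \sum_{i \in N} z_i$ and likewise over the complement. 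Third---and this is where the order matters---because $\bar z$ is preserved, the feasible interval $[\bar z - \Delta, \bar z + \Delta] \cap [0,1]$ for each variable is the same for the new solution as for the old; since $q_1$ and $q_2$ are averages (hence convex combinations) of points $z_i$ lying in this convex interval, they lie in it as well, so $(\tilde z_i)$ is feasible.

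Combining the three checks, $(\tilde z_i)$ is feasible, attains the same optimal objective value, and has the required form, which proves the claim. I do not anticipate a genuine obstacle: the only point requiring care is that feasibility of the averaged solution must be argued \emph{after} establishing that $\bar z$ is preserved, so that the feasible interval does not itself shift and no circular dependence on $\bar z$ arises.
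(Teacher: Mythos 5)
Your proposal is correct and follows essentially the same symmetrization-by-averaging route as the paper: average within each group, observe that the global mean and the objective are preserved, and conclude feasibility because the group averages are convex combinations of points in the (unshifted) feasible interval $[\bar z - \Delta, \bar z + \Delta]\cap[0,1]$. The only cosmetic difference is that the paper separately dispatches the degenerate cases $N=\emptyset$ and $N=[n]$, where one of your group averages would be undefined (though the claim is then vacuous for that group).
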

\begin{proof}[Proof of Claim~\ref{claim:WLOG_first_attempt}]
First, if $N = \emptyset$, then the optimal solution is to set $p_{i,1} = 0$ for all $i \in [n]$, and if $N = [n]$, then the optimal solution is to set $p_{i,1} = 1$ for all $i \in [n]$. In both of these cases, the claim holds.

Next, suppose $N \not= \emptyset$ and $N \not= [n]$.
Let $p_{1,0}, \dots, p_{n,0}$ be an optimal solution to Equation~\eqref{eq:z_constraint} and let $q_1 = \frac{1}{|N|}\sum_{i \in N}p_{i,1}$ and $q_2 = \frac{1}{n - |N|}\sum_{i \not\in N}p_{i,1}$. This is a feasible solution to Equation~\eqref{eq:z_constraint} because \[\frac{1}{n}\left(|N|q_1 + (n-|N|)q_2\right) = \frac{1}{n}\sum_{i = 1}^n p_{i,1},\] so for all $i \in N$, we have that \[-\Delta \leq \min_{j \in N} p_{j,0} - \frac{1}{n}\sum_{i' = 1}^n p_{i',0} \leq q_1 - \frac{1}{n}\left(|N|q_1 + (n-|N|)q_2\right) \leq \max_{j \in N} p_{j,0} - \frac{1}{n}\sum_{i' = 1}^n p_{i',0} \leq \Delta.\] Similarly, for all $i\not \in N$, \[-\Delta \leq \min_{j \not \in N} p_{j,0} - \frac{1}{n}\sum_{i' = 1}^n p_{i',0} \leq q_2 - \frac{1}{n}\left(|N|q_1 + (n-|N|)q_2\right) \leq \max_{j \not\in N} p_{j,0} - \frac{1}{n}\sum_{i' = 1}^n p_{i',0} \leq \Delta.\] Moreover, this solution has the same objective function value as $p_{1,0}, \dots, p_{n,0}$, so it is an optimal solution.
\end{proof}

Using this notation, we simplify the constraints by writing \[q_1 - \frac{1}{n}\sum_{i = 1}^n p_{i,1} = q_1 - \frac{1}{n}\left(|N| q_1 + (n-|N|)q_2\right) = \frac{(n-|N|)(q_1 - q_2)}{n}.\] Therefore, the constraint (Equation~\eqref{eq:z_constraint}) for any $i\in N$ becomes \begin{equation}|q_1 - q_2| \leq \frac{n\Delta}{n - |N|}.\label{eq:tighter_z}\end{equation} Similarly, we may write \[q_2 - \frac{1}{n}\sum_{i = 1}^n p_{i,1} = q_2 - \frac{1}{n}\left(|N| q_1 + (n-|N|)q_2\right) = \frac{|N|(q_2 - q_1)}{n}.\]  Therefore, the constraint (Equation~\eqref{eq:z_constraint}) for any $i\not\in N$ becomes \begin{equation}|q_1 - q_2| \leq \frac{n\Delta}{|N|}.\label{eq:tighter_o}\end{equation}

Since $|N| \geq \frac{n}{2}$, Equation~\eqref{eq:tighter_o} is tighter than Equation~\eqref{eq:tighter_z}.
Therefore, our optimization problem can be written as the LP \begin{align}\text{maximize}\qquad&g(q_1, q_2) = |N|q_1 - (n-|N|)q_2\nonumber\\
\text{subject to} \qquad &q_2 \geq q_1 - \frac{n\Delta}{|N|} \label{eq:po_big}\\
&q_2 \leq q_1 + \frac{n\Delta}{|N|} \label{eq:po_small}\\
&0 \leq q_2, q_1 \leq 1\nonumber \end{align}
The vertices $(q_1,q_2)$ of this LP polytope and their objective values $g(q_1, q_2)$ are
\begin{align*}
\text{Intersection of Equations~\eqref{eq:po_big} and \eqref{eq:po_small}: } &\text{Infeasible}\\
\text{Intersection of Equation~\eqref{eq:po_big} and } q_2 = 0\text{: } &\left(q_1^{(1)}, q_2^{(1)}\right) = \left(\frac{n\Delta}{|N|}, 0\right)\\
&g\left(q_1^{(1)}, q_2^{(1)}\right) = n\Delta\\
\text{Intersection of Equation~\eqref{eq:po_big} and } q_2 = 1\text{: } &\text{Infeasible}\\
\text{Intersection of Equation~\eqref{eq:po_big} and } q_1 = 0\text{: } &\text{Infeasible}\\
\text{Intersection of Equation~\eqref{eq:po_big} and } q_1 = 1\text{: } &\left(q_1^{(2)}, q_2^{(2)}\right) = \left(1, 1 - \frac{n\Delta}{|N|}\right)\\
&g\left(q_1^{(2)}, q_2^{(2)}\right) = 2|N| - n + n\Delta\left(\frac{n}{|N|} -1\right)\\
\text{Intersection of Equation~\eqref{eq:po_small} and } q_2 = 0\text{: } &\text{Infeasible}\\
\text{Intersection of Equation~\eqref{eq:po_small} and } q_2 = 1\text{: } &\left(q_1^{(3)}, q_2^{(3)}\right) = \left(1 - \frac{n\Delta}{|N|}, 1\right)\\
&g\left(q_1^{(3)}, q_2^{(3)}\right) = 2|N| - n + n\Delta\\
\text{Intersection of Equation~\eqref{eq:po_small} and } q_1 = 0\text{: } &\left(q_1^{(4)}, q_2^{(4)}\right) = \left(0, \frac{n\Delta}{|N|}\right)\\
&g\left(q_1^{(4)}, q_2^{(4)}\right) = -\frac{n\Delta(n-|N|)}{|N|}\\
\text{Intersection of Equation~\eqref{eq:po_small} and } q_1 = 1\text{: } &\text{Infeasible.}
\end{align*}
Finally, we have vertices $\left(q_1^{(5)}, q_2^{(5)}\right) = (0,0)$ with $g\left(q_1^{(5)}, q_2^{(5)}\right) = 0$ and $\left(q_1^{(6)}, q_2^{(6)}\right) = (1,1)$ with $g\left(q_1^{(6)}, q_2^{(6)}\right) = 2|N|-n$. Since $\Delta < \frac{|N|}{n}$, $(q_1,q_2) = (0,1)$ and $(q_1,q_2) = (1,0)$ are infeasible. Since $|N| \geq \frac{n}{2}$, $\left(q_1^{(3)}, q_2^{(3)}\right) = \left(1 - \frac{n\Delta}{|N|}, 1\right)$ maximizes the objective value, which implies the lemma statement.
\end{proof}

\section{Proofs about our more equitable approach (Section~\ref{sec:equitable})}\label{app:equitable}
\interpretation*

\begin{proof}
    Our goal is to find distributions $\vec{p}_1, \dots, \vec{p}_n \in \cP^1$ to the optimization problem \begin{equation}\begin{array}{ll}\text{maximize} &\sum_{i = 1}^n \vec{p}_i \cdot \vec{\mu}_i = \sum_{i \in N} p_{i,1} + \sum_{i \not\in N} p_{i,2}\\
    \text{such that} &
        p_{i,1} \geq \frac{\gamma}{n} \sum_{i' = 1}^n p_{i',1}, \forall i \in [n]\\
        & p_{i,2} \geq \frac{\gamma}{n} \sum_{i' = 1}^n p_{i',2}, \forall i \in [n]
    \end{array}\label{eq:interpretation_LP}\end{equation} We claim that without loss of generality, we may set $p_{i,1} = q_1$ for all $i \in N$ and $p_{i,2} = q_2$ for all $i \not\in N$, for some $q_1, q_2 \in [0,1].$

    \begin{claim}\label{claim:WLOG}
An optimal solution to Equation~\eqref{eq:interpretation_LP} has $p_{i,1} = q_1$ for all $i \in N$ and $p_{i,2} = q_2$ for all $i \not\in N$, for some $q_1, q_2 \in [0,1].$
\end{claim}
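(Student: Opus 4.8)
The plan is to exploit the symmetry of the optimization problem in Equation~\eqref{eq:interpretation_LP} under permutations of users that preserve the partition into $N$ and $[n]\setminus N$, combined with convexity of the feasible region, exactly as in the block-averaging argument of Claim~\ref{claim:WLOG_first_attempt}. First I would note that the feasible region is a polytope, hence convex: each $\vec{p}_i$ lies in the simplex $\cP^1$, and the two families of constraints $p_{i,1} \geq \frac{\gamma}{n}\sum_{i'} p_{i',1}$ and $p_{i,2} \geq \frac{\gamma}{n}\sum_{i'} p_{i',2}$ are linear in the variables $\{p_{i,j}\}$. I would then observe that both the objective $\sum_{i\in N}p_{i,1} + \sum_{i\not\in N}p_{i,2}$ and the constraint system are invariant under the group $G = \mathrm{Sym}(N)\times\mathrm{Sym}([n]\setminus N)$ that permutes users within $N$ and within its complement separately: each per-user right-hand side $\frac{\gamma}{n}\sum_{i'}p_{i',j}$ is a symmetric function of \emph{all} users and so is unchanged by relabeling, while the objective sums $p_{i,1}$ over $N$ and $p_{i,2}$ over the complement, each preserved under reindexing within its block.

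Next I would take an arbitrary optimal solution $\{p_{i,j}\}$ and replace it by its block average, $\hat p_{i,j} = \frac{1}{|N|}\sum_{i'\in N}p_{i',j}$ for $i\in N$ and $\hat p_{i,j} = \frac{1}{n-|N|}\sum_{i'\not\in N}p_{i',j}$ for $i\not\in N$, which is precisely the average of $\{p_{i,j}\}$ over the orbit of $G$. Because averaging preserves the total $\sum_{i'}\hat p_{i',j} = \sum_{i'}p_{i',j}$, and because the average of numbers each satisfying a fixed lower bound still satisfies that bound, $\{\hat p_{i,j}\}$ remains feasible; by linearity the objective is unchanged, so $\{\hat p_{i,j}\}$ is again optimal. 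By construction it is $G$-invariant, meaning $\hat p_{i,1}$ is constant across $i\in N$ and $\hat p_{i,2}$ is constant across $i\not\in N$; calling these common values $q_1$ and $q_2$ (the simplex constraint then fixes the remaining coordinates as $1-q_1$ and $1-q_2$) yields an optimal solution of the claimed form. The degenerate cases $N=\emptyset$ and $N=[n]$ are handled automatically, since then only one block is nonempty and $G$ acts as the full symmetric group on $[n]$.

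The argument is essentially routine, so the only points that need care are verifying that each per-user inequality is genuinely $G$-invariant — that is, that the threshold $\frac{\gamma}{n}\sum_{i'}p_{i',j}$ depends symmetrically on all users rather than just on the user's own block — and confirming that averaging cannot violate any constraint, which holds because the constraints define half-spaces preserved under convex combinations. I expect stating these invariance and feasibility checks precisely to be the main obstacle, after which the conclusion is immediate.
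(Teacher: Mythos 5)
Your proposal is correct and follows essentially the same route as the paper: both take an arbitrary optimal solution and replace it by its block average over $N$ and $[n]\setminus N$, using that averaging preserves the column sums $\sum_{i'}p_{i',j}$ (hence the constraint thresholds $\frac{\gamma}{n}\sum_{i'}p_{i',j}$) and the linear objective, while an average of feasible values stays above each threshold. The group-invariance framing is just a more abstract packaging of the paper's explicit computation, and if anything your version checks all four constraint families uniformly where the paper only displays two.
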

\begin{proof}[Proof of Claim~\ref{claim:WLOG}]
First, if $N = \emptyset$, then the optimal solution is to set $p_{i,2} = 1$ for all $i \in [n]$, and if $N = [n]$, then the optimal solution is to set $p_{i,1} = 1$ for all $i \in [n]$. In both of these cases, the claim holds.

Next, suppose $N \subset [n]$ and $N \not=\emptyset$.
Let $\vec{p}_1, \dots, \vec{p}_n \in \cP^1$ be an optimal solution to Equation~\eqref{eq:interpretation_LP} and let $q_1 = \frac{1}{|N|}\sum_{i \in N}p_{i,1}$ and $q_2 = \frac{1}{n - |N|}\sum_{i \not\in N}p_{i,2}$. This is a feasible solution to the constraints in Equation~\eqref{eq:interpretation_LP} because \begin{align*}q_1 &= \frac{1}{|N|} \sum_{i \in N} p_{i,1} \geq \frac{1}{|N|} \sum_{i \in N} \frac{\gamma}{n} \sum_{i' = 1}^n p_{i',1} = \frac{\gamma}{n} \sum_{i = 1}^n p_{i,1} = \frac{\gamma}{n} \left(\sum_{i \in N} p_{i,1} + \sum_{i \not\in N} \left(1 - p_{i,1}\right)\right)\\&= \frac{\gamma}{n}\left(|N|q_1 + (n-|N|)\left(1 - q_2\right)\right).\end{align*} Similarly, \begin{align*}q_2 &= \frac{1}{|N|} \sum_{i \in N} p_{i,2} \geq \frac{1}{|N|} \sum_{i \in N} \frac{\gamma}{n} \sum_{i' = 1}^n p_{i',2} = \frac{\gamma}{n} \sum_{i = 1}^n p_{i,2} = \frac{\gamma}{n} \left(\sum_{i \in N} \left(1 - p_{i,1}\right) + \sum_{i \not\in N} p_{i,1}\right)\\&= \frac{\gamma}{n}\left(|N|\left(1 - q_1\right) + (n-|N|)q_2\right).\end{align*}
Moreover, the objective functions are the same because \[|N|q_1 + (n-|N|)q_2 = \sum_{i \in N} p_{i,1} + \sum_{i \not\in N} p_{i,2}.\] Therefore, the claim holds.
\end{proof}
Based on Claim~\ref{claim:WLOG}, we may write our optimization problem as 
\begin{align*}\text{maximize} \quad &|N|q_1 + (n-|N|)q_2 \nonumber\\
    \text{such that} \quad &
        q_1 \geq \frac{\gamma}{n} \left(|N|q_1 + (n-|N|)\left(1 - q_2\right)\right)\\
        &1 - q_1 \geq \frac{\gamma}{n} \left(|N|\left(1 - q_1\right) + (n-|N|)q_2\right)\\
        &q_2 \geq \frac{\gamma}{n} \left(|N|\left(1 - q_1\right) + (n-|N|)q_2\right)\\
        &1 - q_2 \geq \frac{\gamma}{n} \left(|N|q_1 + (n-|N|)\left(1 - q_2\right)\right)\\
        &q_1, q_2 \in [0,1]. \end{align*}

Rearranging terms, our optimization problem is
\begin{align}\text{maximize} \quad &g(q_1, q_2) = |N|q_1 + (n-|N|)q_2 \nonumber\\
    \text{such that} \quad &
        q_1 \geq \frac{\gamma(n-|N|)}{n - \gamma|N|}(1 - q_2)\label{eq:h1}\\
        &q_1 \leq 1 - \frac{\gamma(n-|N|)}{n - \gamma|N|} \cdot q_2\label{eq:h2}\\
        &q_1 \geq 1 - \frac{n - \gamma(n - |N|)}{n - \gamma|N|} \cdot q_2\label{eq:h3}\\
        &q_1 \leq \frac{n - \gamma(n - |N|)}{\gamma|N|}(1 - q_2)\label{eq:h4}\\
        &q_1, q_2 \geq 0 \text{ and } q_1, q_2 \leq 1.\label{eq:h5}
   \end{align}

To analyze the corners of this LP polytope, we identify where the eight hyperplanes in Equations~\eqref{eq:h1}-\eqref{eq:h5} intersect. Note that Equations~\eqref{eq:h1} and \eqref{eq:h2} are parallel, so they don't intersect. The same is true of Equations~\eqref{eq:h3} and \eqref{eq:h4}. Moreover, $q_1 = 0$ if and only if $q_2 = 1$ by Equations~\eqref{eq:h1} and \eqref{eq:h4}, so we ignore the intersections with $q_1 = 0$ and $q_2 = 1$.
This leads to the following corners $(q_1, q_2)$ with objective values $g(q_1, q_2)$:
\begin{align}
\text{Intersection of \eqref{eq:h1} and \eqref{eq:h3}: } &\left(q_1^{(1)}, q_2^{(1)}\right) = \left(\frac{\gamma(n-|N|)}{n}, \frac{\gamma|N|}{n}\right)\nonumber\\
& g\left(q_1^{(1)}, q_2^{(1)}\right) = \frac{2\gamma|N|(n-|N|)}{n}\nonumber\\
\text{Intersection of \eqref{eq:h1} and \eqref{eq:h4}: } &\left(q_1^{(2)}, q_2^{(2)}\right) = \left(0,1\right)\nonumber\\
& g\left(q_1^{(2)}, q_2^{(2)}\right) = n - |N|\nonumber\\
\text{Intersection of \eqref{eq:h1} and }q_2 = 0\text{: } &\left(q_1^{(3)}, q_2^{(3)}\right) = \left(\frac{\gamma(n-|N|)}{n - \gamma|N|}, 0\right)\nonumber\\
& g\left(q_1^{(3)}, q_2^{(3)}\right) = \frac{\gamma(n-|N|)|N|}{n - \gamma|N|}\nonumber\\
\text{Intersection of \eqref{eq:h1} and }q_1 = 1\text{: } &\text{Not feasible}\nonumber\\
\text{Intersection of \eqref{eq:h2} and \eqref{eq:h3}: } &\left(q_1^{(4)}, q_2^{(4)}\right) = (1, 0)\label{eq:2and3}\\
& g\left(q_1^{(4)}, q_2^{(4)}\right) = |N|\nonumber\\
\text{Intersection of \eqref{eq:h2} and \eqref{eq:h4}: } &\left(q_1^{(5)}, q_2^{(5)}\right) = \left(1 - \gamma\left(1 - \frac{|N|}{n}\right), 1 - \frac{\gamma|N|}{n}\right)\nonumber\\
& g\left(q_1^{(5)}, q_2^{(5)}\right) = n - \frac{2\gamma|N|(n-|N|)}{n}\nonumber\\
\text{Intersection of \eqref{eq:h2} and } q_2=0 \text{: }&(q_1, q_2) = (1,0) \text{ as in Equation~\eqref{eq:2and3}}\nonumber\\
\text{Intersection of \eqref{eq:h2} and } q_1=1 \text{: }&(q_1, q_2) = (1,0) \text{ as in Equation~\eqref{eq:2and3}}\nonumber\\
\text{Intersection of \eqref{eq:h3} and } q_2=0 \text{: }&(q_1, q_2) = (1,0) \text{ as in Equation~\eqref{eq:2and3}}\nonumber\\
\text{Intersection of \eqref{eq:h3} and } q_1=1 \text{: }&(q_1, q_2) = (1,0) \text{ as in Equation~\eqref{eq:2and3}}\nonumber\\
\text{Intersection of \eqref{eq:h4} and } q_2=0 \text{: }&\text{Not feasible.}\nonumber\\
\text{Intersection of \eqref{eq:h4} and } q_1=1 \text{: }&\text{Not feasible.}\nonumber
\end{align}
For $\gamma \leq\frac{1}{2}$, the optimum is achieved at $\left(q_1^{(5)}, q_2^{(5)}\right) = \left(1 - \gamma\left(1 - \frac{|N|}{n}\right), 1 - \frac{\gamma|N|}{n}\right)$, so the lemma holds.
\end{proof}

\section{Proofs about the Formulation 1 regret upper bound when $\gamma < 1$ (Section~\ref{sec:regret_small_gamma})}\label{app:regret_small_gamma}
\begin{algorithm}[t]
     \caption{Multi-agent UCB (defined by parameter $\delta$)}\label{alg:ucb.f2}
     \begin{algorithmic}[1]
             \Require Failure probability $\delta \in (0,1)$
             \State Set  $N_{i, j}(0)= 0,~ \forall i \in [n], j\in [k]$;    $\vec{\hat{\mu}}_i^{(0)} = \vec{0}, ~~ \forall i \in [n]$
             \For{$t \in \{1, \dots, T\}$}
                 \If {$t \in \{ 1, \ldots, k\}$}
                    \State Set $\vecp_i^{(t)} = \vec{e}_t$
                 \Else
                    \State Set $\left(\vecp_i^{(t)}\right)_{i \in [n]} = \amax\limits_{(\vecp_i)_{i \in [n]} \in S} \sum\limits_{i = 1}^n \vecp_i \cdot \vec{\hat{\mu}}_i^{(t-1)}$
                 \EndIf
                 \State Draw an arm $j_i^{(t)} \sim \vecp_i^{(t)} ~~ \forall i \in [n]$
                 \State Receive reward $r_i^{(t)} \sim \cD_{i, j_i^{(t)}}$
                 \State For all $i \in [n]$, set $N_{i,  j_{i}^{(t)}}(t) = N_{i,  j_{i}^{(t)}}(t-1) + 1$
                 \Comment{Increment the counter for arm $j_i^{(t)}$}
                 \State Set $N_{i,  j}(t) = N_{i,  j}(t-1) , ~~ \forall i \in [n]$ and  $j \neq j_{i}^{(t)}$ \Comment{Do not increment the other counters}
                 \State Set $\beta _{i, j}^{(t)} = \sqrt{\frac{1}{N_{i,j}(t)}\log\frac{2Tnk}{\delta}}, ~~ \forall i \in [n], j \in [k]$ \Comment{Define confidence intervals}
                 \State $\hat{\mu}_{i, j}^{(t)} = \frac{1}{N_{i, j}(t)}\sum\limits_{\tau = 1}^t r_i^{(\tau)}\ind{j_{i}^{(\tau)} = j} + \beta_{i, j}^{(t)}$, $~~ \forall i \in [n], j \in [k]$ \Comment{Estimate mean rewards}
             \EndFor
         \end{algorithmic}
 \end{algorithm}

\begin{claim}\label{clm:ub.f2.1}
    With probability $1 - \delta$, for all $ i \in [n], t \in [T]$ and $j \in [k]$, \[\vec{\hat{\mu}}_{i,j}^{(t)} \geq \vec{\mu}_{i, j} \geq \vec{\hat{\mu}}_{i,j}^{(t)} - 2\vec{\beta}_{u, i}^{(t)}\]
\end{claim}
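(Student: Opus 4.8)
The plan is to recognize that the two stated inequalities are together equivalent to a single two-sided concentration bound, and then to establish that bound uniformly via a ghost-sample coupling and a union bound. Write $\bar\mu_{i,j}^{(t)} \defeq \frac{1}{N_{i,j}(t)}\sum_{\tau=1}^t r_i^{(\tau)}\ind{j_i^{(\tau)} = j} = \hat\mu_{i,j}^{(t)} - \beta_{i,j}^{(t)}$ for the raw empirical mean. Then $\hat\mu_{i,j}^{(t)} \ge \mu_{i,j}$ rearranges to $\mu_{i,j} - \bar\mu_{i,j}^{(t)} \le \beta_{i,j}^{(t)}$, while $\mu_{i,j} \ge \hat\mu_{i,j}^{(t)} - 2\beta_{i,j}^{(t)}$ rearranges to $\bar\mu_{i,j}^{(t)} - \mu_{i,j} \le \beta_{i,j}^{(t)}$. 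So it suffices to show $|\bar\mu_{i,j}^{(t)} - \mu_{i,j}| \le \beta_{i,j}^{(t)}$ simultaneously for all $i,j,t$ with $N_{i,j}(t) \ge 1$; the initialization in lines 3--4 guarantees every counter is positive once $t \ge k$, and for smaller $t$ the estimate is only undefined for arms not yet pulled, which I would note as a boundary convention.

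The main obstacle is that $N_{i,j}(t)$ is itself a random variable depending on the rewards observed so far, so Hoeffding's inequality cannot be applied directly to the adaptively-chosen sample sequence. I would resolve this with the standard ghost-sample coupling: for each $(i,j)$ introduce i.i.d.\ draws $Y_{i,j}^{(1)}, Y_{i,j}^{(2)}, \dots$ from $\cD_{i,j}$, coupled so that the reward on the $s$-th pull of arm $j$ for user $i$ equals $Y_{i,j}^{(s)}$. Then whenever $N_{i,j}(t) = s$, the empirical mean $\bar\mu_{i,j}^{(t)}$ coincides with $\frac{1}{s}\sum_{\ell=1}^s Y_{i,j}^{(\ell)}$, an average of i.i.d.\ terms with a fixed, deterministic count $s$.

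Next I would apply Hoeffding's inequality to each fixed partial average. Since the rewards lie in $[0,1]$, for every fixed $i,j,s$,
\[\P\left(\left|\frac{1}{s}\sum_{\ell=1}^s Y_{i,j}^{(\ell)} - \mu_{i,j}\right| > \sqrt{\tfrac{1}{s}\log\tfrac{2Tnk}{\delta}}\right) \le 2\exp\left(-2\log\tfrac{2Tnk}{\delta}\right) = \tfrac{\delta^2}{2(Tnk)^2}.\]
A union bound over all $i \in [n]$, $j \in [k]$, and $s \in \{1,\dots,T\}$ --- a total of $Tnk$ events --- bounds the overall failure probability by $Tnk \cdot \frac{\delta^2}{2(Tnk)^2} = \frac{\delta^2}{2Tnk} \le \delta$, using $\delta \le 1 \le Tnk$.

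Finally, on the complementary event, which holds with probability at least $1-\delta$, every count $N_{i,j}(t)$ takes some value $s \in \{1,\dots,T\}$, and by the coupling the deviation satisfies $|\bar\mu_{i,j}^{(t)} - \mu_{i,j}| = |\frac{1}{s}\sum_{\ell=1}^s Y_{i,j}^{(\ell)} - \mu_{i,j}| \le \sqrt{\frac{1}{s}\log\frac{2Tnk}{\delta}} = \beta_{i,j}^{(t)}$. Rearranging this two-sided bound as above yields both inequalities in the claim. The only delicate point to state carefully is that taking the union over the \emph{deterministic} index $s$ rather than over the random $N_{i,j}(t)$ is precisely what legitimizes the application of Hoeffding; the remaining steps are routine.
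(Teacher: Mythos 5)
Your proposal is correct and follows essentially the same route as the paper: both handle the randomness of $N_{i,j}(t)$ by reindexing over the deterministic pull count (the paper via stopping times $\tau = \inf_s\{N_{i,j}(s)\ge \ell\}$, you via an equivalent ghost-sample coupling), apply Hoeffding to each fixed partial average, and union over the $Tnk$ triples $(i,j,s)$. Your write-up is if anything slightly cleaner in unioning over $s\in[T]$ directly and in flagging the $N_{i,j}(t)=0$ boundary convention.
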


\begin{proof}
    Consider a fixed iteration $t$. For $i \in [n]$ and $j \in [k]$  and $\ell \in [t]$, let $\tau  = \inf_{s}\{N_{i,j}(s) \geq \ell\}$ and $ \hat{v}^{\ell}_{i,j} = \frac{1}{\ell}\sum\limits_{s = 1}^{\tau } r_{i,j}^{(s)}\ind{j_{i}^{(s)} = j}$. Since  the  $\hat{v}^{\ell}_{i,j}$ are independent in $\ell, i,$ and $j$ applying Hoeffding's inequality with parameter $\delta' = \delta/(tnk)$, with probability greater than $1-\delta'$, 
    \[ 
            |\hat{v}^{\ell}_{i,j}- \mu_{i,j}| \leq \sqrt{\frac{\log(2/\delta')}{\ell}}
    \]
 Applying the union bound for all $\forall \ell \in [t], \forall i \in [n], \forall  j \in [k]$ we have that:
 
 \begin{align*}
     &\P\left(\exists \ell \in [t], \exists i \in [n], \exists  j \in [k]:   |\hat{v}^{\ell}_{i,j} - \mu_{i,j}| \geq \sqrt{\frac{\log(2/\delta')}{\ell}}\right) \\
      \leq \, &\sum_{\ell =1}^{t}\sum_{i=1}^n \sum_{j=1}^k \P\left(|\hat{v}^{\ell}_{i,j} - \mu_{i,j}| \geq \sqrt{\frac{\log(2/\delta')}{N_{i,j}(t)}}\right) \\
      \leq \, &tnk\delta' = \delta
 \end{align*}
Thus with probability at least $1-\delta$, $\forall \ell \in [t], \forall i \in [n], \forall  j \in [k]$,
  \[|\hat{v}_{i,j}^{\ell} - \mu_{i,j}| \leq \sqrt{\frac{\log(2tnk/\delta)}{\ell}}.\]

 Since $\hat{\mu}_{i,j}^t = \hat{v}_{i,j}^{N_{i,j}(t)}  + \beta_{i,j}^{(t)}$ and  $N_{i,j}(t) \in [t]  ~\forall i,j $ with probability at least $1-\delta$ 
\[\hat{\mu}_{i,j}^t  -2\beta_{i,j}^{(t)} \leq \mu_{i,j} \leq \hat{\mu}_{i,j}^t.\]
\end{proof}

\nUCB*

\begin{proof}
    Fix a timestep $t \in [T]$
    \begin{align*}
            \sum_{i \in [n]}(\vecp_i^* \cdot \vec{\mu}_i - \vecp_i^{(t)}\cdot \vec{\mu}_i )  &= \sum_{i \in [n]}(\vecp_i^*\cdot \vec{\mu}_i - \vecp_i^{(t)}\cdot \vec{\hat{\mu}}_i^{(t)} +  \vecp_i^{(t)}\cdot \vec{\hat{\mu}}_i^{(t)} -\vecp_i^{(t)}\cdot \vec{\mu}_i) \\
                    &\leq \sum_{i \in [n]}(\vecp_i^*\cdot \vec{\mu}_i - \vecp_i^{*}\cdot \vec{\hat{\mu}}_i^{(t)} +  \vecp_i^{(t)}\cdot \vec{\hat{\mu}}_i^{(t)} -\vecp_i^{(t)}\cdot \vec{\mu}_i)
        \end{align*}
    By Claim \ref{clm:ub.f2.1}, $\vecp \cdot \vec{\mu}_i \leq \vecp \cdot \vec{\hat{\mu}}_i^{(t)}$ $\forall i \in [n]$ and all $\vecp \in \R_{\geq 0}^k$
    \begin{align*}
          \sum_{i \in [n]}(\vecp_i^* \cdot \vec{\mu}_i- \vecp_i^{(t)}\cdot \vec{\mu}_i )  &\leq \sum_{i\in [n]}(\vecp_i^* \cdot \vec{\hat{\mu}}_i^{(t)} - \vecp_i^{(t)}\cdot \vec{\mu}_i )  \\
                &\leq \sum_{i \in [n]} \vecp_i^{(t)} \cdot \vec{\beta}_i^{(t)}
        \end{align*}

 Thus \[  R_T \leq \sum_{t =1}^T\sum_{i \in [n]}\vecp_i^{(t)} \cdot \vec{\beta}_i^{(t)}\]
            
  Let $\mathcal{F}_{i, t-1}$ denote the canonical filtration $\sigma((X_{i,s},\vecp_i^{(s)}): 0\leq s < t)$ on the choice of $\vecp_i^{(t)}$ and let $j_{i}^{(t)'}$ be a random variable distributed as $\vecp_i^{(t)} \mid \mathcal{F}_{i,t-1}$ and conditionally independent from $j_{i,t}^{(t)}$, i.e.$~ j_{i}^{(t)'} \perp j_{i}^{(t)}\mid \mathcal{F}_{i,t-1}$. Note that by definition the following equality holds: 

  \begin{equation*}
      \E_{j_{i}^{(t)}\sim \vecp_{i}^{(t)}}[\beta_{i, j_{i}^{(t)}}^{(t)}] = \E_{j_{i}^{(t)'} \sim  \vecp_{i}^{(t)}}[\beta_{i, j_{i}^{(t)'}}^{(t)'} \mid \mathcal{F}_{i, t-1}].
  \end{equation*}

  Consider the following random variables $A_{i,t} =  \E_{j_{i}^{(t)'} \sim  \vecp_{i}^{(t)}}[\beta_{i, j_{i}^{(t)'}} \mid \mathcal{F}_{i, t-1} ]- \beta_{i,j_i^{(t)}}(t)$. Note that $M_{i,t} = \sum_{s=1}^t A_{i,s}$ is a martingale. Since $|A_t| \leq 2\sqrt{2 \log(Tnk/\delta)}$,  using this as the bound in Azuma-Hoeffding and taking a union bound over $i \in [n]$ and $t \in T$ implies that with probability at least $1- \delta$,
  
  \[R_T = \sum_{t =1}^T\sum_{i =1}^n \vecp_i^{(t)} \cdot \vec{\beta}_i^{(t)} \leq \sum_{t =1}^T\sum_{i =1}^n  \vec{\beta}_{i, j_{i}^{(t)}}^{(t)} + nT\sqrt{\frac{1}{T}\log\left(\frac{Tnk}{\delta}\right)\log\left(\frac{1}{\delta}\right)}\]  
  
   \[= \sum_{t =1}^T\sum_{i =1}^n  \vec{\beta}_{i, j_{i}^{(t)}}^{(t)} + n\sqrt{T\log\left(\frac{Tnk}{\delta}\right)\log\left(\frac{1}{\delta}\right)}\] 
   
  Now we bound  $\sum\limits_{i = 1}^n\sum\limits_{t=1}^T \vec{\beta}_{i, j_{i}^{(t)}}^{(t)}$, 
  
\[\sum_{t=1}^T \sum_{i =1}^n \vec{\beta}_{i, j_{i}^{(t)}}^{(t)} = \sum_{t=1}^T \sum_{i =1}^n \sum_{j =1}^k  \beta_{i,j}^{(t)}\ind{j_i^{(t)} = j}  \]

For fixed $i, j$ 
\[ \sum_{t=1}^T \beta_{i,j}^{(t)}\ind{j_i^{(t)} = j} = \sqrt{\log(Tnk/\delta)}\sum_{t=1}^{N_{i,j}(T)}1/\sqrt{t} \leq 2\sqrt{N_{i,j}(T)\log(Tnk/\delta)}\] 
Therefore 
\begin{align*}
    \sum_{i \in [n]}\sum_{t=1}^T \vec{\beta}_i^{(t)} &\leq 2\sum_{i =1}^n \sum_{j =1}^k \sqrt{N_{i,j}(T)\log(Tnk/\delta)} \\
    &\leq  2\sum_{i =1}^n \sqrt{k \sum_{j=1}^k N_{i,j}(T)\log(Tnk/\delta)} \\
    &=  2\sum_{i =1}^n \sqrt{k T \log(Tnk/\delta)}  \\
    &= 2n \sqrt{k T \log(Tnk/\delta)} 
\end{align*}
Where the second line follows from the concavity of $\sqrt{\cdot}$ and the penultimate line follows from the fact that $\sum\limits_{j=1}^k N_{i,j}(T) = T$.

The result then follows by setting $\delta = \frac{1}{nT}.$
\end{proof}

\section{Proofs about the Formulation 1 regret upper bound when $\gamma = 1$ (Section~\ref{sec:regret_gamma_one})}\label{app:regret_gamma_one}
In this section, the distribution $\cD_j = \sum_{i=1}^n \cD_{i,j}$ is supported on $[0,n]$ instead of $[0,1]$ and we denote the average reward for arm $j \in [k]$ as $\mu_j = \sum_{i=1}^n \mu_{i,j}$.

\begin{definition}[median-of-means estimator~\citep{Alon1996:moments}]\label{def:median}
Let $\delta \in (0,1)$ and $X_{1}, \ldots, X_{T}$ be i.i.d random variables with mean $\E [X] = \mu$ and variance $\E|X - \mu|^2 = \sigma^2$. Let $m = \lfloor{8\log(1/\delta)  \wedge T/2} \rfloor$ and $t = \lfloor T/m \rfloor$. Let $\bar{\mu}^1, \dots, \bar{\mu}^m$ be $m$ empirical mean estimates, each one calculated on $t$ data points as follows:
        \[
        \bar{\mu}^1 = \frac{1}{t}\sum_{s = 1}^{t} X_{s} ~, ~ \bar{\mu}^2 = \frac{1}{t}\sum_{s = t+1}^{2t}X_{s} ~,  \ldots , ~ \bar{\mu}^m= \frac{1}{t}\sum_{s = (m-1)t+1}^{mt}X_{s}.
        \]
        The \emph{median-of-means estimator} $\bar{\vec{\mu}}(T, \delta)$  is the median of these $m$ empirical means.
\end{definition}

\begin{algorithm}[t]
     \caption{Robust-UCB (defined by parameter $\delta$)}\label{alg:robustucb.f2}
     \begin{algorithmic}[1]
             \Require Failure probability $\delta \in (0,1)$, median-of-means estimator $\vec{\bar{\mu}}(t, \delta)$
             \State Set  $N_{j}(0)= 0,$ $\hat{\mu}_j^{(0)}= 0$ $~ \forall j\in [k]$  
             \For{$t \in \{1, \dots, T\}$}
                 \If {$t \in \{ 1, \ldots, k\}$}
                    \State Set $\vecp^{(t)} = \vec{e}_t$
                 \Else
                    \State Set $\vecp^{(t)}= \amax\limits_{\vecp \in \cP^{k-1}} \vecp \cdot \vec{\hat{\mu}}^{(t-1)}$
                 \EndIf
                 \State Draw an arm $j^{(t)} \sim \vecp^{(t)}$
                 \State Receive reward $r^{(t)} \sim \cD_{j^{(t)}}$
                 \State Set $N_{j^{(t)}}(t) = N_{j^{(t)}}(t-1) + 1$
                 \Comment{Increment the counter for arm $j^{(t)}$}
                 \State Set $N_{j}(t) = N_{j}(t-1) , ~~ \forall j \neq j^{(t)}$ \Comment{Do not increment the other counters}
                 \State Set $\beta _{j}^{(t)} = \sqrt{\frac{24n}{N_{j}(t)}\log\frac{Tk}{\delta}}, ~~ \forall j \in [k]$ \Comment{Define confidence intervals}
                 \State $\hat{\mu}_{j}^{(t)} = \vec{\bar{\mu}}_j(N_j(t), \delta) + \beta_{j}^{(t)}$, $~~ \forall  j \in [k]$  \Comment{Get mean rewards estimates}
             \EndFor
         \end{algorithmic}
 \end{algorithm}

\regretGammaOne*

\begin{proof}
    For all $t \in [T] $, $j \in [k]$ the median-of-mean estimate at time $t$,  $\hat{\mu}_j^{(t)}$ with probability at least $1-\delta$ we have
    \[
    |\hat{\mu}_j^{(t)} - \mu_j | \leq \sqrt{\frac{24n\log(kT/\delta)}{T}}
    \]
    
    By applying Lemma \ref{lem:median-robust} with $\sigma^2_j = n/4$ which is justified by Claim\ref{clm:var_low} and then taking  a union bound over all arms $j \in [k]$.
    
     This event and Proposition 1 in \citet{Bubeck13:Bandits} imply the desired regret upper bounds for the Robust-UCB with median-of-means estimator. The result then follows by setting $\delta = \frac{1}{nT}.$
\end{proof}
    
        \begin{lemma}\label{lem:median-robust}
        Let $\delta \in (0,1)$. Let $X_{j,1}, \ldots, X_{j,T}$ be i.i.d random variables with mean $\E [X_j] = \mu_j$ and $\E|X_j - \mu_j|^2 = \sigma^2_j$. Let $m = \lfloor{8\log(1/\delta)  \wedge T/2} \rfloor$ and $t = \lfloor T/m \rfloor$. Let 
        \[
        \bar{\mu}^1_j = \frac{1}{t}\sum_{s = 1}^{t} X_{j, s} ~, ~ \bar{\mu}^2_j = \frac{1}{t}\sum_{s = t+1}^{2t}X_{j, s} ~,  \ldots , ~ \bar{\mu}^m_j= \frac{1}{t}\sum_{s = (m-1)t+1}^{kt}X_{j, s},
        \]
        be $m$ empirical mean estimates, each one computed on $t$ data points. Let $\hat{\mu}_j$  be the median of these $m$ empirical means. Then with probability at least $1-\delta$ 
        \[ 
        |\hat{\mu}_j - \mu| \leq \sigma\sqrt{\frac{96\log(1/\delta)}{T}}.  
        \]
    \end{lemma}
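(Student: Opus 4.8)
The plan is to prove this via the classical two-stage median-of-means argument: first control each individual block mean with Chebyshev's inequality, then argue that the median can only fail if a majority of the blocks fail, and control that majority event with a binomial tail bound. The stated constant $96$ will emerge from composing the Chebyshev threshold constant with the block count $m \approx 8\log(1/\delta)$.

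First I would observe that each block mean $\bar{\mu}_j^\ell = \frac{1}{t}\sum_s X_{j,s}$ is an average of $t$ i.i.d.\ copies of $X_j$, so $\E\left[\bar{\mu}_j^\ell\right] = \mu_j$ and $\mathrm{Var}\left(\bar{\mu}_j^\ell\right) = \sigma_j^2/t$. Setting the target deviation $\alpha = \sigma_j\sqrt{96\log(1/\delta)/T}$ and applying Chebyshev's inequality gives $\P\left(\left|\bar{\mu}_j^\ell - \mu_j\right| \geq \alpha\right) \leq \sigma_j^2/(t\alpha^2)$. Using $t = \lfloor T/m \rfloor$ together with $m \leq 8\log(1/\delta)$, one has $t \gtrsim T/(8\log(1/\delta))$, so this probability is at most (roughly) $8\log(1/\delta)/(96\log(1/\delta)) = 1/12$. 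I will denote this per-block failure probability by $p_0 \leq 1/12$; the factor of $12$ in the denominator is precisely the slack that makes the final tail bound beat $\delta$.

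Next I would relate the median to the blocks: if the median $\hat{\mu}_j$ satisfies $\left|\hat{\mu}_j - \mu_j\right| \geq \alpha$, then at least half of the $m$ block means must lie outside $[\mu_j - \alpha, \mu_j + \alpha]$. Letting $Z = \sum_{\ell=1}^m \ind{\left|\bar{\mu}_j^\ell - \mu_j\right| \geq \alpha}$, these indicators are independent with $\E[Z] \leq m p_0 \leq m/12$, and the failure event is contained in $\{Z \geq m/2\}$. Applying Hoeffding's inequality to the centered sum yields $\P(Z \geq m/2) \leq \exp\left(-2m(1/2 - 1/12)^2\right) = \exp(-25m/72)$. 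Plugging in $m \geq 8\log(1/\delta) - 1$, valid in the main regime $8\log(1/\delta) \leq T/2$, shows this is at most $\delta$, which completes the bound after a union over the $k$ arms in the calling proof.

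The main obstacle will be bookkeeping the two floor functions ($m = \lfloor 8\log(1/\delta)\wedge T/2 \rfloor$ and $t = \lfloor T/m \rfloor$) so that the Chebyshev step and the binomial step go through simultaneously and land on exactly the constant $96$ rather than a nearby value. The most delicate point is the degenerate regime $T/2 < 8\log(1/\delta)$, where $m = \lfloor T/2 \rfloor$ and $t$ can be as small as $2$: here the per-block variance $\sigma_j^2/t$ is largest, but $\alpha$ also scales with $\sqrt{\log(1/\delta)/T}$, so a short calculation confirms the same $p_0 \leq 1/12$; the binomial tail then requires $m \gtrsim \log(1/\delta)$, which I would verify holds over the relevant range of $\delta$ (and which is why the $\wedge\, T/2$ safeguard is present). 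I would also note that this is a standard concentration result and point to the original median-of-means analysis of \citet{Alon1996:moments} for the tight form of the constants.
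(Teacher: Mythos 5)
Your proposal is correct and follows essentially the same route as the paper's proof: Chebyshev's inequality on each block mean, the observation that the median fails only if at least half the blocks fail, and Hoeffding's binomial tail bound on the number of failed blocks, with the constant $96$ arising as the product of the Chebyshev threshold factor and $m \approx 8\log(1/\delta)$. The only cosmetic difference is that you carry the per-block failure probability $1/12$ through the binomial step, whereas the paper rounds it up to $1/4$ via stochastic domination to get the exponent $\exp(-m/8)$; both versions close the argument.
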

    \begin{proof}
        By Chebyshev's inequality $\forall  \ell \in [m]$
        \[
        \P[|\bar{\mu}^{\ell}_j - \mu_j| \leq \sigma\sqrt{12/t}] \geq 3/4
        \]
        
        Let $\epsilon > 0$ and $Y_{\ell} = \ind{|\hat{\mu}^{\ell}_j  - \mu_j|>  \epsilon}$ for $\ell \in [k]$, For $\epsilon =  \sigma\sqrt{12/t}$,   $Y_{\ell} $ is stochastically dominated by a Bernoulli distribution with parameter  $p = 1/4$. Thus using Hoeffding's inequality for the tail of a binomial distribution we get
        \begin{align*} \P(|\hat{\mu}_j - \mu_j| > \epsilon) &= \P\left(\sum_{\ell = 1}^{m} Y_{\ell} \geq m/2 \right) \leq \P(Bin(m, 1/4) \geq m/2)\leq \exp(-2m(p - 1/2)^2) = \exp(-m/8)\\
        &= \delta.\end{align*}
    \end{proof}

\begin{claim}\label{clm:var_low}
    $\sigma^2_{j} = \E[|X_j-\mu_j|^2] \leq n/4 ~~ \forall j \in [k]$
\end{claim}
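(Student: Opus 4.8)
The plan is to exploit the additive structure of the aggregated arm distribution together with independence across users. Recall that in this section $\cD_j = \sum_{i=1}^n \cD_{i,j}$, so the random variable $X_j$ associated with arm $j$ is a sum $X_j = \sum_{i=1}^n X_{i,j}$, where each $X_{i,j} \sim \cD_{i,j}$ is supported on $[0,1]$ and, by the independence assumption on the reward distributions stated in Section~\ref{sec:notation}, the summands $X_{1,j}, \dots, X_{n,j}$ are mutually independent. The first step is therefore to invoke independence to write the variance of the sum as the sum of the variances:
\[
\sigma_j^2 = \mathrm{Var}(X_j) = \sum_{i=1}^n \mathrm{Var}(X_{i,j}).
\]

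The second step is to bound each individual variance $\mathrm{Var}(X_{i,j})$ by $\tfrac14$. Since $X_{i,j} \in [0,1]$, we have $X_{i,j}^2 \leq X_{i,j}$ pointwise, and hence $\E[X_{i,j}^2] \leq \E[X_{i,j}] = \mu_{i,j}$. This gives
\[
\mathrm{Var}(X_{i,j}) = \E[X_{i,j}^2] - \mu_{i,j}^2 \leq \mu_{i,j}(1 - \mu_{i,j}) \leq \frac14,
\]
where the last inequality is the elementary fact that $x(1-x) \leq \tfrac14$ for $x \in [0,1]$ (equivalently, Popoviciu's inequality for a variable with range $1$). Summing over $i \in [n]$ then yields $\sigma_j^2 \leq n/4$, as required, and since $j$ was arbitrary this holds for all $j \in [k]$.

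There is no real obstacle here: the claim is a routine second-moment computation whose only substantive ingredient is the independence of the per-user rewards, which has already been established. The sole point to be careful about is keeping track of the rescaling convention noted at the start of Appendix~\ref{app:regret_gamma_one}, namely that $\cD_j$ is supported on $[0,n]$ and $\mu_j = \sum_i \mu_{i,j}$; the bound $\sigma_j^2 \leq n/4$ is exactly what is needed to apply Lemma~\ref{lem:median-robust} with $\sigma_j^2 = n/4$ in the proof of Theorem~\ref{thm:regretGammaone}.
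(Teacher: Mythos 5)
Your proof is correct and takes essentially the same route as the paper: both decompose the variance of $X_j = \sum_{i=1}^n X_{i,j}$ into the sum of per-user variances using independence, and then bound each $\text{Var}(X_{i,j})$ by $\tfrac14$ via Popoviciu's inequality (your elementary derivation through $\E[X_{i,j}^2] \leq \E[X_{i,j}]$ is just an explicit form of that same bound). No further comment is needed.
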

\begin{proof}
    \begin{align*}
        \E[|X_j-\mu_j|^2]  &= \E\left[\left| \sum_{i=1}^n X_{i,j}-\mu_{i,j} \right|^2\right] \\
        &\leq \sum_{i=1}^n\E[|X_{i,j}-\mu_{i,j}|^2] \\
        &\leq \sum_{i=1}^n 1/4 = n/4
    \end{align*}
    where the second line follows from the independence across users $i$ and triangle inequality. While the last line follows from Popoviciu's variance inequality.
\end{proof}

\section{Proofs about the Formulation 1 regret lower bounds (Section~\ref{sec:lower})}\label{app:lower_bound}
\nsqrtT*
\begin{proof}
Our proof is based on worst-case instances $\nu_{\vec{b}}$ defined for any vector $\vec{b} \in \{0,1\}^n$.
For each user $i \in [n]$, their reward distributions for the two arms are Bernoulli with means $\vec{\mu}_i = \left(\mu_{i,0}, \mu_{i,1}\right)$ where
\begin{equation}\vec{\mu}_i = \begin{cases} \left(\frac{1}{2} + \epsilon, \frac{1}{2}\right) &\text{if  } b_i = 0\\
\left(\frac{1}{2}, \frac{1}{2} + \epsilon\right) & \text{if  }b_i = 1\end{cases}\label{eq:app_WC_instance}\end{equation} where $\epsilon = \sqrt{\frac{1}{8T}}.$
We will lower bound the expected regret $\E_{\vec{b}}\left[R_T\left(\pi, \nu_{\vec{b}}\right)\right]$ over both the randomness of the draw of the vector $\vec{b} \sim \text{Unif}\left(\{0,1\}^n\right)$ and the distribution over outcomes $\P_{\pi\nu_{\vec{b}}}$. This will imply that for any policy $\pi$, there exists an instance $\nu_{\vec{b}}$ such that  \[R_T\left(\pi, \nu_{\vec{b}}\right) \geq \sqrt{\frac{T}{8}}\left(\frac{n}{8e} - \gamma \left(\frac{n}{8e} + \sqrt{\frac{n}{2\pi}}\right) \right).\]

Given an instance $\nu_{\vec{b}}$,  the following distributions $\vec{p}_1, \dots, \vec{p}_n$ with \[\vec{p}_i = \begin{cases} \left(1 - \frac{\gamma\norm{\vec{b}}_1}{n}, \frac{\gamma\norm{\vec{b}}_1}{n}\right) & \text{if } b_i = 0\\
    \left(\frac{\gamma (n - \norm{\vec{b}}_1)}{n}, 1 - \frac{\gamma (n - \norm{\vec{b}}_1)}{n} \right) & \text{if } b_i = 1.\end{cases}\] are feasible policies. This is because
    $n - \norm{\vec{b}}_1$ is the number of 0's in $\vec{b}$ and $\norm{\vec{b}}_1$ is the number of 1's in $\vec{b}$, so for any $i$ such that $b_i = 0$,  \[1 - \frac{\gamma\norm{\vec{b}}_1}{n} \geq \frac{\gamma}{n}\left(\sum_{i : b_i = 0} \left(1 - \frac{\gamma\norm{\vec{b}}_1}{n}\right) + \sum_{i : b_i= 1} \frac{\gamma\left(n - \norm{\vec{b}}_1\right)}{n}\right) = \gamma - \frac{\gamma \norm{\vec{b}}_1}{n}\] and \[\frac{\gamma\norm{\vec{b}}_1}{n} = \frac{\gamma}{n}\left(\sum_{i : b_i = 0} \frac{\gamma\norm{\vec{b}}_1}{n} + \sum_{i : b_i= 1} \left( 1- \frac{\gamma\left(n - \norm{\vec{b}}_1\right)}{n}\right)\right).\] Similarly, for any $i$ such that $b_i = 1$, \[\frac{\gamma (n - \norm{\vec{b}}_1)}{n} = \frac{\gamma}{n}\left(\sum_{i : b_i = 0} \left(1 - \frac{\gamma\norm{\vec{b}}_1}{n}\right) + \sum_{i : b_i= 1} \frac{\gamma\left(n - \norm{\vec{b}}_1\right)}{n}\right)\] and \[1 - \frac{\gamma (n - \norm{\vec{b}}_1)}{n} = 1 - \gamma + \frac{\gamma \norm{\vec{b}}_1}{n} \geq \frac{\gamma}{n}\left(\sum_{i : b_i = 0} \frac{\gamma\norm{\vec{b}}_1}{n} + \sum_{i : b_i= 1} \left( 1- \frac{\gamma\left(n - \norm{\vec{b}}_1\right)}{n}\right)\right) = \frac{\gamma \norm{\vec{b}}_1}{n}.\]
    
    After simplifying, this policy has an objective value of \begin{align*}&\sum_{i : b_i = 0} \left( \frac{1}{2} + \left(1 - \frac{\gamma\norm{\vec{b}}_1}{n}\right)\epsilon \right) + \sum_{i : b_i = 1} \left(\frac{1}{2} + \left(1 - \frac{\gamma (n - \norm{\vec{b}}_1)}{n} \right)\epsilon\right)\\
    = \, &\frac{n}{2} + \left(n - \norm{\vec{b}}_1\right)\left(1 - \frac{\gamma\norm{\vec{b}}_1}{n}\right)\epsilon + \norm{\vec{b}}_1 \left(1 - \frac{\gamma (n - \norm{\vec{b}}_1)}{n} \right)\epsilon\\
    = \, & \frac{n}{2} + \epsilon\left( n - \frac{1}{n} \cdot 2\gamma \norm{\vec{b}}_1\left(n - \norm{\vec{b}}_1\right)\right).\end{align*}
    Thus, the optimal policy's expected cumulative reward is at least \begin{equation}\frac{nT}{2} + \epsilon \left( nT - \frac{1}{n} \cdot 2T\gamma \norm{\vec{b}}_1\left(n - \norm{\vec{b}}_1\right)\right).\label{eq:app_opt_bound}\end{equation}

    Meanwhile, for any policy $\pi$, let $\pi_{i,0}^{(t)} = \pi_i\left(0 \mid \vec{h}_{t-1}\right)$ denote the probability that the policy chooses arm $0$ for user $i$ on round $t$ given the history $\vec{h}_{t-1}$. The value $\pi_{i,0}^{(t)}$ is therefore a random variable that depends on the history $\vec{h}_{t-1}$. Similarly, let $\pi_{i,1}^{(t)} = \pi_i\left(1 \mid \vec{h}_{t-1}\right)$.
    The expected cumulative reward of policy $\pi$ is \begin{align}&\E_{\pi\nu_{\vec{b}}}\left[ \sum_{t = 1}^T\left(\sum_{i : b_i = 0}\left(\left(\frac{1}{2} + \epsilon\right) \pi_{i,0}^{(t)} + \frac{1}{2} \pi_{i,1}^{(t)}\right) + \sum_{i : b_i = 1} \left(\frac{1}{2} \pi_{i,0}^{(t)} + \left(\frac{1}{2} + \epsilon\right)\pi_{i,1}^{(t)}\right)\right)\right]\nonumber\\
    =\,&\frac{nT}{2} + \epsilon \left(\sum_{t = 1}^T\left(\sum_{i : b_i = 0}\E_{\pi\nu_{\vec{b}}}\left[\pi_{i,0}^{(t)}\right] + \sum_{i : b_i = 1} \E_{\pi\nu_{\vec{b}}}\left[\pi_{i,1}^{(t)}\right]\right)\right)\nonumber\\
    =\, &\frac{nT}{2} + \epsilon \left(nT - \sum_{t = 1}^T\left(\sum_{i : b_i = 0}\E_{\pi\nu_{\vec{b}}}\left[\pi_{i,1}^{(t)}\right] + \sum_{i : b_i = 1} \E_{\pi\nu_{\vec{b}}}\left[\pi_{i,0}^{(t)}\right]\right)\right).\label{eq:app_policy_reward}\end{align}

    Therefore, the expected regret of $\pi$ is at least Equation~\eqref{eq:app_opt_bound} minus Equation~\eqref{eq:app_policy_reward}, which is \begin{equation}\epsilon \left(\sum_{t = 1}^T\left(\sum_{i : b_i = 0}\E_{\pi\nu_{\vec{b}}}\left[\pi_{i,1}^{(t)}\right] + \sum_{i : b_i = 1} \E_{\pi\nu_{\vec{b}}}\left[\pi_{i,0}^{(t)}\right]\right) - \frac{1}{n} \cdot 2T\gamma \norm{\vec{b}}_1\left(n - \norm{\vec{b}}_1\right)\right).\label{eq:app_regret_bound}\end{equation}

    We will begin incorporating the constraints by rewriting the first part of Equation~\eqref{eq:app_regret_bound} as \begin{align}&\sum_{t = 1}^T\left(\sum_{i : b_i = 0}\E_{\pi\nu_{\vec{b}}}\left[\pi_{i,1}^{(t)}\right] + \sum_{i : b_i = 1} \E_{\pi\nu_{\vec{b}}}\left[\pi_{i,0}^{(t)}\right]\right)\nonumber\\
    =\, & \sum_{t = 1}^T\left(\sum_{i : b_i = 0}\E_{\pi\nu_{\vec{b}}}\left[\pi_{i,1}^{(t)} - \frac{\gamma}{n} \sum_{j = 1}^n \pi_{j,1}^{(t)}\right] + \sum_{i : b_i = 1} \E_{\pi\nu_{\vec{b}}}\left[\pi_{i,0}^{(t)}- \frac{\gamma}{n} \sum_{j = 1}^n \pi_{j,0}^{(t)}\right]\right)\label{eq:app_lb_add_constraints}\\
    &+ \sum_{t = 1}^T\left(\frac{\gamma\left(n - \norm{\vec{b}}_1\right)}{n}\sum_{j = 1}^n \E_{\pi\nu_{\vec{b}}}\left[\pi_{j,1}^{(t)}\right] + \frac{\gamma\norm{\vec{b}}_1}{n}\sum_{j = 1}^n \E_{\pi\nu_{\vec{b}}}\left[\pi_{j,0}^{(t)}\right]\right).\nonumber\end{align}
    If $\norm{\vec{b}}_1 < \frac{n}{2}$, then $ \frac{\gamma\left(n - \norm{\vec{b}}_1\right)}{n} >  \frac{\gamma\norm{\vec{b}}_1}{n}$, so \begin{align*}
       \sum_{t = 1}^T\left(\frac{\gamma\left(n - \norm{\vec{b}}_1\right)}{n}\sum_{j = 1}^n \E_{\pi\nu_{\vec{b}}}\left[\pi_{j,1}^{(t)}\right] + \frac{\gamma\norm{\vec{b}}_1}{n}\sum_{j = 1}^n \E_{\pi\nu_{\vec{b}}}\left[\pi_{j,0}^{(t)}\right]\right) &\geq  \frac{\gamma\norm{\vec{b}}_1}{n}\E_{\pi\nu_{\vec{b}}}\left[\sum_{t = 1}^T\sum_{j = 1}^n \left(\pi_{j,1}^{(t)} + \pi_{j,0}^{(t)}\right)\right] \\
        &= \gamma \norm{\vec{b}}_1 T.\end{align*}
    Similarly, if $\norm{\vec{b}}_1 > \frac{n}{2}$, \begin{align*}
       \sum_{t = 1}^T\left(\frac{\gamma\left(n - \norm{\vec{b}}_1\right)}{n}\sum_{j = 1}^n \E_{\pi\nu_{\vec{b}}}\left[\pi_{j,1}^{(t)}\right] + \frac{\gamma\norm{\vec{b}}_1}{n}\sum_{j = 1}^n \E_{\pi\nu_{\vec{b}}}\left[\pi_{j,0}^{(t)}\right]\right) &\geq  \frac{\gamma\left(n - \norm{\vec{b}}_1\right)}{n}\E_{\pi\nu_{\vec{b}}}\left[\sum_{t = 1}^T\sum_{j = 1}^n \left(\pi_{j,1}^{(t)} + \pi_{j,0}^{(t)}\right)\right] \\
        &= \gamma \left(n - \norm{\vec{b}}_1\right) T.
    \end{align*}

    Therefore, \begin{align}&\E_{\vec{b} \sim \{0,1\}^n}\left[\sum_{t = 1}^T\left(\frac{\gamma\left(n - \norm{\vec{b}}_1\right)}{n}\sum_{j = 1}^n \E_{\pi\nu_{\vec{b}}}\left[\pi_{j,1}^{(t)}\right] + \frac{\gamma\norm{\vec{b}}_1}{n}\sum_{j = 1}^n \E_{\pi\nu_{\vec{b}}}\left[\pi_{j,0}^{(t)}\right]\right)\right]\label{eq:app_first_terms}\\
    \geq\, &\gamma T\E_{\vec{b}}\left[\norm{\vec{b}}_1\textbf{1}_{\left\{\norm{\vec{b}}_1 < \frac{n}{2}\right\}} + \left(n - \norm{\vec{b}}_1\right)\textbf{1}_{\left\{\norm{\vec{b}}_1 > \frac{n}{2}\right\}}\right]\nonumber\\
    =\, &\gamma T\E_{\vec{b}}\left[\norm{\vec{b}}_1\left(1 - \textbf{1}_{\left\{\norm{\vec{b}}_1 > \frac{n}{2}\right\}}\right) + \left(n - \norm{\vec{b}}_1\right)\textbf{1}_{\left\{\norm{\vec{b}}_1 > \frac{n}{2}\right\}}\right]\nonumber\\
    =\, &\gamma T\left(\E_{\vec{b}}\left[\norm{\vec{b}}_1\right] + n\E\left[\textbf{1}_{\left\{\norm{\vec{b}}_1 > \frac{n}{2}\right\}}\right] - 2\E\left[\norm{\vec{b}}_1\textbf{1}_{\left\{\norm{\vec{b}}_1 > \frac{n}{2}\right\}}\right]\right).\label{eq:app_bin}\end{align}
    When $\vec{b}\sim \text{Unif}\left(\{0,1\}^n\right)$, $\norm{\vec{b}}_1 \sim \text{Bin}\left(n, \frac{1}{2}\right)$. Therefore, Equation~\eqref{eq:app_bin} is equal to
    \[\gamma T\left(n - 2\E\left[\norm{\vec{b}}_1 \, \left| \, \norm{\vec{b}}_1 > \frac{n}{2}\right.\right]\Pr\left[\norm{\vec{b}}_1 > \frac{n}{2}\right]\right) = \gamma T\left(n - \E\left[\norm{\vec{b}}_1 \, \left| \, \norm{\vec{b}}_1 > \frac{n}{2}\right.\right]\right).\] Since $\norm{\vec{b}}_1 \sim \text{Bin}\left(n, \frac{1}{2}\right)$, \[\E\left[\norm{\vec{b}}_1 \, \left| \, \norm{\vec{b}}_1 > \frac{n}{2}\right.\right] = \frac{n}{2^n}\left(2^{n-1} + {n-1 \choose \frac{n-1}{2}}\right)\] \citep{stackexchange} and by Stirling's approximation, \[\E\left[\norm{\vec{b}}_1 \, \left| \, \norm{\vec{b}}_1 > \frac{n}{2}\right.\right] < \frac{n}{2} + \sqrt{\frac{n}{2\pi}}.\] We can use these facts to bound Equation~\eqref{eq:app_first_terms} as follows: \begin{equation}\E_{\vec{b} \sim \{0,1\}^n}\left[\sum_{t = 1}^T\left(\frac{\gamma\left(n - \norm{\vec{b}}_1\right)}{n}\sum_{j = 1}^n \E_{\pi\nu_{\vec{b}}}\left[\pi_{j,1}^{(t)}\right] + \frac{\gamma\norm{\vec{b}}_1}{n}\sum_{j = 1}^n \E_{\pi\nu_{\vec{b}}}\left[\pi_{j,0}^{(t)}\right]\right)\right] \geq \gamma T \left( \frac{n}{2} - \sqrt{\frac{n}{2\pi}}\right).\end{equation}

Returning to Equation~\eqref{eq:app_lb_add_constraints}, we will next bound \begin{align*}&\E_{\vec{b}}\left[\sum_{t = 1}^T\left(\sum_{i : b_i = 0}\E_{\pi\nu_{\vec{b}}}\left[\pi_{i,1}^{(t)} - \frac{\gamma}{n} \sum_{j = 1}^n \pi_{j,1}^{(t)}\right] + \sum_{i : b_i = 1} \E_{\pi\nu_{\vec{b}}}\left[\pi_{i,0}^{(t)}- \frac{\gamma}{n} \sum_{j = 1}^n \pi_{j,0}^{(t)}\right]\right)\right]\\
= \, &\sum_{i = 1}^n \E_{\vec{b}}\left[\sum_{t = 1}^T\left(\E_{\pi\nu_{\vec{b}}}\left[\pi_{i,1}^{(t)} - \frac{\gamma}{n} \sum_{j = 1}^n \pi_{j,1}^{(t)}\right]\textbf{1}_{\{b_i = 0\}}  + \E_{\pi\nu_{\vec{b}}}\left[\pi_{i,0}^{(t)} - \frac{\gamma}{n} \sum_{j = 1}^n \pi_{j,0}^{(t)}\right]\textbf{1}_{\{b_i = 1\}}\right)\right].
\end{align*}
    For each user $i \in [n]$, we will therefore lower bound \begin{align}&\E_{\vec{b}}\left[\sum_{t = 1}^T\left(\E_{\pi\nu_{\vec{b}}}\left[\pi_{i,1}^{(t)} - \frac{\gamma}{n} \sum_{j = 1}^n \pi_{j,1}^{(t)}\right]\textbf{1}_{\{b_i = 0\}}  + \E_{\pi\nu_{\vec{b}}}\left[\pi_{i,0}^{(t)} - \frac{\gamma}{n} \sum_{j = 1}^n \pi_{j,0}^{(t)}\right]\textbf{1}_{\{b_i = 1\}}\right)\right]\nonumber\\
    =\, & \frac{1}{2} \left(\E_{\vec{b}}\left[\left.\E_{\pi\nu_{\vec{b}}}\left[\sum_{t = 1}^T\left(\pi_{i,1}^{(t)} - \frac{\gamma}{n} \sum_{j = 1}^n \pi_{j,1}^{(t)}\right)\right] \, \right| \, b_i = 0\right]  + \E_{\vec{b}}\left[\left.\E_{\pi\nu_{\vec{b}}}\left[\sum_{t = 1}^T\left(\pi_{i,0}^{(t)} - \frac{\gamma}{n} \sum_{j = 1}^n \pi_{j,0}^{(t)}\right)\right] \, \right| \, b_i = 1\right]\right).\label{eq:app_isolate_agent}\end{align} Let $\vec{b}_{-i} \in \{0,1\}^{n-1}$ denote the vector $\vec{b}$ with all components except the $i^{th}$ component. Moreover, to simplify notation, let $\P_{i,0}$ denote the distribution over outcomes $(\vec{A}_1, \vec{X}_1, \dots, \vec{A}_T, \vec{X}_T) \in (\{0,1\}^n \times \{0,1\}^n)^T$ defined by first drawing $\vec{b}_{-i}\sim \text{Unif}\left(\{0,1\}^{n-1}\right)$ and then running the policy $\pi$ on the instance $\nu_{(0, \vec{b}_{-i})}$.
Similarly, let $\P_{i,1}$ denote the distribution over outcomes $(\vec{A}_1, \vec{X}_1, \dots, \vec{A}_T, \vec{X}_T) \in (\{0,1\}^n \times \{0,1\}^n)^T$ defined by first drawing $\vec{b}_{-i}\sim \text{Unif}\left(\{0,1\}^{n-1}\right)$ and then running the policy $\pi$ on the instance $\nu_{(1, \vec{b}_{-i})}$. We can then rewrite Equation~\eqref{eq:app_isolate_agent} as \begin{align}&\frac{1}{2} \left(\E_{i,0}\left[\sum_{t = 1}^T\left(\pi_{i,1}^{(t)} - \frac{\gamma}{n} \sum_{j = 1}^n \pi_{j,1}^{(t)}\right)\right]  + \E_{i,1}\left[\sum_{t = 1}^T\left(\pi_{i,0}^{(t)} - \frac{\gamma}{n} \sum_{j = 1}^n \pi_{j,0}^{(t)}\right)\right] \right)\nonumber\\
=\, & \frac{1}{2} \left(\E_{i,0}\left[\sum_{t = 1}^T\left(\pi_{i,1}^{(t)} - \frac{\gamma}{n} \sum_{j = 1}^n \pi_{j,1}^{(t)}\right)\right]  + \E_{i,1}\left[(1-\gamma)T - \sum_{t = 1}^T\left(\pi_{i,1}^{(t)} - \frac{\gamma}{n} \sum_{j = 1}^n \pi_{j,1}^{(t)}\right)\right]\right).
\label{eq:app_simplified_notation}\end{align}

Based on the constraints, we know that $\pi_{i,1}^{(t)} - \frac{\gamma}{n} \sum_{j = 1}^n \pi_{j,1}^{(t)} \geq 0$ and $\pi_{i,0}^{(t)} - \frac{\gamma}{n} \sum_{j = 1}^n \pi_{j,0}^{(t)} \geq 0$ with probability 1. Therefore, by Markov's inequality and the Bretagnolle–Huber inequality, \begin{align*}
    &\frac{1}{2} \left(\E_{i,0}\left[\sum_{t = 1}^T\left(\pi_{i,1}^{(t)} - \frac{\gamma}{n} \sum_{j = 1}^n \pi_{j,1}^{(t)}\right)\right]  + \E_{i,1}\left[(1-\gamma)T - \sum_{t = 1}^T\left(\pi_{i,1}^{(t)} - \frac{\gamma}{n} \sum_{j = 1}^n \pi_{j,1}^{(t)}\right)\right]\right)\\
    \geq \, & \frac{T(1- \gamma)}{4} \left(\P_{i,0}\left[\sum_{t = 1}^T\left(\pi_{i,1}^{(t)} - \frac{\gamma}{n} \sum_{j = 1}^n \pi_{j,1}^{(t)}\right) \geq \frac{T(1-\gamma)}{2}\right]\right.\\
    &+ \left.\P_{i,1}\left[\sum_{t = 1}^T\left(\pi_{i,1}^{(t)} - \frac{\gamma}{n} \sum_{j = 1}^n \pi_{j,1}^{(t)}\right) < \frac{T(1-\gamma)}{2}\right]\right)\\
    \geq \, & \frac{T(1- \gamma)}{8}\exp\left( - D\left(\P_{i,0}, \P_{i,1}\right)\right).
\end{align*}

In the following claim, we bound $D\left(\P_{i,0}, \P_{i,1}\right).$

    \begin{claim}\label{claim:KL}
    $D\left(\P_{i,0}, \P_{i,1}\right) \leq 8\epsilon^2T.$
\end{claim}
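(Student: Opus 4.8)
The plan is to reduce the divergence between the two mixtures $\P_{i,0}$ and $\P_{i,1}$ to a sum of per-arm Bernoulli divergences via the standard divergence-decomposition lemma for bandits (cf.\ the chain-rule computation underlying Theorem~15.2 of \citep{Lattimore20:Bandit}), after first stripping away the averaging over $\vec{b}_{-i}$ using the joint convexity of KL divergence. Recall that $\P_{i,0} = 2^{-(n-1)}\sum_{\vec{b}_{-i}} \P_{\pi\nu_{(0,\vec{b}_{-i})}}$ and $\P_{i,1} = 2^{-(n-1)}\sum_{\vec{b}_{-i}} \P_{\pi\nu_{(1,\vec{b}_{-i})}}$ are uniform mixtures indexed by the same set $\vec{b}_{-i} \in \{0,1\}^{n-1}$. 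Since $D(\cdot\,,\cdot)$ is jointly convex in its two arguments, I would first bound
\[
D\left(\P_{i,0}, \P_{i,1}\right) \leq \frac{1}{2^{n-1}}\sum_{\vec{b}_{-i} \in \{0,1\}^{n-1}} D\left(\P_{\pi\nu_{(0,\vec{b}_{-i})}}, \P_{\pi\nu_{(1,\vec{b}_{-i})}}\right),
\]
so it suffices to control each term on the right for a \emph{fixed} $\vec{b}_{-i}$.

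For a fixed $\vec{b}_{-i}$, the instances $\nu_{(0,\vec{b}_{-i})}$ and $\nu_{(1,\vec{b}_{-i})}$ are identical except in user $i$'s two reward distributions, which I write as $\cD_{i,j}^{(0)}$ and $\cD_{i,j}^{(1)}$. Using the factorized density in Equation~\eqref{eq:PDF}, the policy factors $\pi(\vec{a}_t \mid \vec{h}_{t-1})$ cancel in the log-likelihood ratio (the policy is the same for both instances) and the per-user reward densities factor across users. Hence the divergence decomposition gives
\[
D\left(\P_{\pi\nu_{(0,\vec{b}_{-i})}}, \P_{\pi\nu_{(1,\vec{b}_{-i})}}\right) = \sum_{j \in \{0,1\}} \E_{\pi\nu_{(0,\vec{b}_{-i})}}\left[N_{i,j}(T)\right]\, D\left(\cD_{i,j}^{(0)}, \cD_{i,j}^{(1)}\right),
\]
where $N_{i,j}(T)$ counts the pulls of arm $j$ for user $i$; every term indexed by a user $i' \neq i$ vanishes because those reward distributions agree across the two instances.

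It then remains to bound the two Bernoulli divergences and combine. Using $D(\mathrm{Ber}(p), \mathrm{Ber}(q)) \leq \frac{(p-q)^2}{q(1-q)}$, arm $0$ contributes $D\left(\mathrm{Ber}\!\left(\tfrac12+\epsilon\right), \mathrm{Ber}\!\left(\tfrac12\right)\right) \leq 4\epsilon^2$, while arm $1$ contributes $D\left(\mathrm{Ber}\!\left(\tfrac12\right), \mathrm{Ber}\!\left(\tfrac12+\epsilon\right)\right) \leq \frac{\epsilon^2}{1/4 - \epsilon^2} \leq 8\epsilon^2$, where the last step uses $\epsilon^2 = \frac{1}{8T} \leq \frac18$. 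Since both divergences are at most $8\epsilon^2$ and $\E\left[N_{i,0}(T)\right] + \E\left[N_{i,1}(T)\right] = T$, each fixed-$\vec{b}_{-i}$ term is at most $8\epsilon^2 T$, and averaging over $\vec{b}_{-i}$ preserves this bound, yielding $D(\P_{i,0},\P_{i,1}) \leq 8\epsilon^2 T$ as claimed.

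The step requiring the most care is the asymmetry of the KL divergence: the two arms yield different constants ($4\epsilon^2$ versus $8\epsilon^2$), and it is precisely the weaker $8\epsilon^2$ bound for arm $1$, together with the fact that the expected pull counts sum to exactly $T$, that produces the stated constant (so one must verify $\epsilon^2 \le \tfrac18$, which holds here for all $T \geq 1$). I also regard the convexity step as the conceptual crux rather than a technicality: it is what prevents any statistical blow-up from the $2^{n-1}$ background configurations $\vec{b}_{-i}$, so that each user contributes a KL term scaling only with $\epsilon^2 T$ independently of the other users---exactly what is needed downstream to turn the per-user Bretagnolle--Huber bound into an $\Omega(n)$ rather than $\Omega(\sqrt{n})$ regret lower bound.
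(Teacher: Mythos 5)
Your proof is correct, and it departs from the paper's at exactly one point: how the averaging over $\vec{b}_{-i}$ is handled. You invoke joint convexity of the KL divergence to pass from the mixture divergence $D\left(\P_{i,0}, \P_{i,1}\right)$ to the average of the per-configuration divergences $D\left(\P_{\pi\nu_{(0,\vec{b}_{-i})}}, \P_{\pi\nu_{(1,\vec{b}_{-i})}}\right)$, and then apply the standard divergence-decomposition lemma to each fixed $\vec{b}_{-i}$. The paper instead shows that no inequality is needed there: for every fixed trajectory $\left(\vec{a}_t, \vec{x}_t\right)_{t=1}^T$, the sum over $\vec{b}_{-i}$ of the other users' likelihood contributions is the \emph{same} factor in $f_{i,0}$ and $f_{i,1}$, so it cancels exactly in the log-likelihood ratio, leaving $\log\left(f_{i,0}/f_{i,1}\right) = \sum_{t} \log\left(f^{(0)}_{i,a_{t,i}}\left(x_{t,i}\right)/f^{(1)}_{i,a_{t,i}}\left(x_{t,i}\right)\right)$ as an identity; the pull-count decomposition is then re-derived by the law of total expectation. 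The two routes land on the same expression $D\left(f^{(0)}_{i,0}, f^{(1)}_{i,0}\right)\E\left[N_{i,0}(T)\right] + D\left(f^{(0)}_{i,1}, f^{(1)}_{i,1}\right)\E\left[N_{i,1}(T)\right]$ and the same constant, because each per-configuration KL is itself at most $8\epsilon^2 T$ independently of $\vec{b}_{-i}$; your convexity step therefore costs nothing here and is arguably more modular, since it reuses the textbook decomposition off the shelf. The paper's exact-cancellation observation is slightly sharper (an equality rather than an upper bound) and would matter if the per-configuration divergences varied with $\vec{b}_{-i}$, but in this construction they do not. Your treatment of the two asymmetric Bernoulli divergences (the $4\epsilon^2$ versus $\epsilon^2/(1/4-\epsilon^2) \leq 8\epsilon^2$ split, with the check that $\epsilon^2 \leq \tfrac{1}{8}$) is in fact more explicit than the paper's, which asserts the $8\epsilon^2$ bound for both arms without spelling out the computation.
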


\begin{proof}[Proof of Claim~\ref{claim:KL}]
In this proof, we will use the following notation to distinguish the reward distributions for each instance $\nu_{\vec{b}}$. For any vector of rewards $\vec{x}_t = \left(x_{t,1}, \dots, x_{t,n}\right) \in \{0,1\}^n$ and any choice of arms $\vec{a}_t = \left(a_{t,1}, \dots, a_{t,n}\right) \in \{0,1\}^n$, we will use the notation $f^{\vec{b}}_{\vec{a}_t}(\vec{x}_t)$ to denote the probability that the platform receives rewards $\vec{x}_t$ under instance $\nu_{\vec{b}}$  after choosing arms $\vec{a}_t.$ We also use $f_{i,0}^{(b_i)} : \{0,1\} \to [0,1]$ to denote the PMF of arm $0$ for user $i$ and $f_{i,1}^{(b_i)} : \{0,1\} \to [0,1]$ to denote the PMF of arm $1$ for user $i$.
In other words, $f_{i, 0}^{(0)}$ is the Bern$\left( \frac{1}{2} + \epsilon\right)$ PMF, $f_{i, 1}^{(0)}$ is the Bern$\left( \frac{1}{2}\right)$ PMF, $f_{i, 0}^{(1)}$ is the Bern$\left( \frac{1}{2}\right)$ PMF, and $f_{i,1}^{(1)}$ is the Bern$\left( \frac{1}{2} + \epsilon\right)$ PMF. With this notation, \begin{equation}f^{\vec{b}}_{\vec{a}_t}(\vec{x}_t) = \prod_{i = 1}^n f^{(b_i)}_{i,a_{t,i}}\left(x_{t,i}\right).\label{eq:app_product_dist}\end{equation}

Moving now to KL divergence between $\P_{i,0}$ and $\P_{i,1}$, let $f_{i,0} : (\{0,1\}^n \times \{0,1\}^n)^T \to [0,1]$ be the probability mass function of the distribution $\P_{i,0}$, and define $f_{i,1}$ similarly.
By definition, \begin{equation}D\left(\P_{i,0}, \P_{i,1}\right) = \sum_{\left(\vec{a}_t, \vec{x}_t\right)_{t = 1}^T} f_{i,0}\left(\left(\vec{a}_t, \vec{x}_t\right)_{t = 1}^T\right) \log \frac{f_{i,0}\left(\left(\vec{a}_t, \vec{x}_t\right)_{t = 1}^T\right)}{f_{i,1}\left(\left(\vec{a}_t, \vec{x}_t\right)_{t = 1}^T\right)}.\label{eq:app_complex_KL}\end{equation}

We will begin by simplifying the logarithm in Equation~\eqref{eq:app_complex_KL}. Beginning with the numerator of
the logarithm, we have that \begin{align*}f_{i,0}\left(\left(\vec{a}_t, \vec{x}_t\right)_{t = 1}^T\right) &= \P_{i,0}\left[\left(\vec{A}_t, \vec{X}_t\right)_{t = 1}^T = \left(\vec{a}_t, \vec{x}_t\right)_{t = 1}^T\right]\\
&= \frac{1}{2^{n-1}}\sum_{\vec{b}_{-i} \in \{0,1\}^{n-1}} \P_{\pi\nu_{\left(0, \vec{b}_{-i}\right)}}\left[\left(\vec{A}_t, \vec{X}_t\right)_{t = 1}^T = \left(\vec{a}_t, \vec{x}_t\right)_{t = 1}^T\right].\end{align*}
Using the notation defined in Section~\ref{sec:notation} (Equation~\eqref{eq:PDF}), we have that \begin{align*}f_{i,0}\left(\left(\vec{a}_t, \vec{x}_t\right)_{t = 1}^T\right) &= \frac{1}{^{n-1}}\sum_{\vec{b}_{-i}} f_{\pi \nu_{\left(0, \vec{b}_{-i}\right)}}\left(\left(\vec{a}_t, \vec{x}_t\right)_{t = 1}^T\right)\\
&= \frac{1}{2^{n-1}}\sum_{\vec{b}_{-i}} \prod_{t = 1}^T \pi(\vec{a}_t \mid \vec{a}_1, \vec{x}_1, \dots, \vec{a}_{t-1}, \vec{x}_{t-1})f^{\left(0, \vec{b}_{-i}\right)}_{\vec{a}_t}(\vec{x}_t).\end{align*}
Applying Equation~\eqref{eq:app_product_dist}, we have that \[f_{i,0}\left(\left(\vec{a}_t, \vec{x}_t\right)_{t = 1}^T\right) = \frac{1}{2^{n-1}}\sum_{\vec{b}_{-i}} \prod_{t = 1}^T\left( \pi(\vec{a}_t \mid \vec{a}_1, \vec{x}_1, \dots, \vec{a}_{t-1}, \vec{x}_{t-1})f^{(0)}_{i,a_{t,i}}\left(x_{t,i}\right)\prod_{j \not= i}f^{(b_{j})}_{j,a_{t,j}}(x_{t,j})\right)\] where $b_j$ indicates the $j^{th}$ component of the vector $\vec{b}_{-i}.$
Rearranging the product within the summation, we have that $f_{i,0}\left(\left(\vec{a}_t, \vec{x}_t\right)_{t = 1}^T\right)$ is equal to \[\frac{1}{2^{n-1}}\sum_{\vec{b}_{-i}} \left(\prod_{t = 1}^T\left( \pi(\vec{a}_t \mid \vec{a}_1, \vec{x}_1, \dots, \vec{a}_{t-1}, \vec{x}_{t-1})f^{(0)}_{i,a_{t,i}}\left(x_{t,i}\right)\right)\prod_{t = 1}^T\left(\prod_{j \not= i}f^{(b_{j})}_{j,a_{t,j}}(x_{t,j})\right)\right).\] Since $\prod_{t = 1}^T \pi(\vec{a}_t \mid \vec{a}_1, \vec{x}_1, \dots, \vec{a}_{t-1}, \vec{x}_{t-1})f^{(0)}_{i,a_{t,i}}\left(x_{t,i}\right)$ does not depend on $\vec{b}_{-i}$, we rearrange the summation over $\vec{b}_{-i}$ as \begin{equation}f_{i,0}\left(\left(\vec{a}_t, \vec{x}_t\right)_{t = 1}^T\right) = \frac{1}{2^{n-1}}\prod_{t = 1}^T\left( \pi(\vec{a}_t \mid \vec{a}_1, \vec{x}_1, \dots, \vec{a}_{t-1}, \vec{x}_{t-1})f^{(0)}_{i,a_{t,i}}\left(x_{t,i}\right)\right)\sum_{\vec{b}_{-i}} \prod_{t = 1}^T\prod_{j \not= i}f^{(b_{j})}_{j,a_{t,j}}(x_{t,j}).\label{eq:app_S0}\end{equation} Similarly, \begin{equation}f_{i,1}\left(\left(\vec{a}_t, \vec{x}_t\right)_{t = 1}^T\right) = \frac{1}{2^{n-1}}\prod_{t = 1}^T\left( \pi(\vec{a}_t \mid \vec{a}_1, \vec{x}_1, \dots, \vec{a}_{t-1}, \vec{x}_{t-1})f^{(1)}_{i,a_{t,i}}\left(x_{t,i}\right)\right)\sum_{\vec{b}_{-i}} \prod_{t = 1}^T\prod_{j \not= i}f^{(b_{j})}_{j,a_{t,j}}(x_{t,j}).\label{eq:app_S1}\end{equation}

We now return to the logarithm in Equation~\eqref{eq:app_complex_KL}. Based on Equations~\eqref{eq:app_S0} and \eqref{eq:app_S1}, much of the numerator and denominator cancel out, leaving us with \[\log \frac{f_{i,0}\left(\left(\vec{a}_t, \vec{x}_t\right)_{t = 1}^T\right)}{f_{i,1}\left(\left(\vec{a}_t, \vec{x}_t\right)_{t = 1}^T\right)} = \log \prod_{t = 1}^T \frac{f^{(0)}_{ i,a_{t,i}}\left(x_{t,i}\right)}{f^{(1)}_{i, a_{t,i}}\left(x_{t,i}\right)} = \sum_{t = 1}^T \log \frac{f^{(0)}_{i, a_{t,i}}\left(x_{t,i}\right)}{f^{(1)}_{i, a_{t,i}}\left(x_{t,i}\right)}.\]

We can therefore can write the KL divergence as \[D\left(\P_{i,0}, \P_{i,1}\right) = \sum_{t = 1}^T\E_{i,0}\left[ \log \frac{f^{(0)}_{i,A_{t,i}}\left(X_{t,i}\right)}{f^{(1)}_{i,A_{t,i}}\left(X_{t,i}\right)}\right].\] Moreover, by the law of total expectation, $D\left(\P_{i,0}, \P_{i,1}\right)$ is equal to \begin{equation}\sum_{t = 1}^T\left(\E_{i,0}\left[\left. \log \frac{f^{(0)}_{i,0}\left(X_{t,i}\right)}{f^{(1)}_{i,0}\left(X_{t,i}\right)}\, \right| \, A_{t,i} = 0\right]\P\left[A_{t,i} = 0\right] + \E_{i,0}\left[\left. \log \frac{f^{(0)}_{i,1}\left(X_{t,i}\right)}{f^{(1)}_{i,1}\left(X_{t,i}\right)}\, \right| \, A_{t,i} = 1\right]\P\left[A_{t,i} = 1\right]\right).\label{eq:app_LOTE}\end{equation}

Inspecting each conditional expectation in this sum, we have that \begin{align*}&\E_{i,0}\left[\left. \log \frac{f^{(0)}_{i,0}\left(X_{t,i}\right)}{f^{(1)}_{i,0}\left(X_{t,i}\right)}\, \right| \, A_{t,i} = 0\right]\\
=\, &\P_{i,0}\left[X_{t,i} = 0 \mid A_{t,i} = 0\right]\cdot \log \frac{f^{(0)}_{i,0}\left(0\right)}{f^{(1)}_{i,0}\left(0\right)} + \P_{i,0}\left[X_{t,i} = 1 \mid A_{t,i} = 0\right]\cdot \log \frac{f^{(0)}_{i,0}\left(1\right)}{f^{(1)}_{i,0}\left(1\right)}.\end{align*}
By Equation~\eqref{eq:app_WC_instance}, for any instance $\nu_{\vec{b}}$ such that $b_i = 0$, the probability that $X_{t,i} = 0$ given that $A_{t,i} = 0$ is $\frac{1}{2} - \epsilon = f^{(0)}_{i,0}\left(0\right)$. Similarly, the probability that $X_{t,i} = 1$ given that $A_{t,i} = 0$ is $\frac{1}{2} + \epsilon = f^{(0)}_{i,0}\left(1\right)$. Therefore, \begin{equation}\E_{i,0}\left[\left. \log \frac{f^{(0)}_{i,0}\left(X_{t,i}\right)}{f^{(1)}_{i,0}\left(X_{t,i}\right)}\, \right| \, A_{t,i} = 0\right] =f^{(0)}_{i,0}\left(0\right) \log \frac{f^{(0)}_{i,0}\left(0\right)}{f^{(1)}_{i,0}\left(0\right)} + f^{(0)}_{i,0}\left(1\right)\cdot \log \frac{f^{(0)}_{i,0}\left(1\right)}{f^{(1)}_{i,0}\left(1\right)} = D\left(f^{(0)}_{i,0}, f^{(1)}_{i,0}\right).\label{eq:app_fi0}\end{equation} Similarly, \begin{equation}\E_{i,0}\left[\left. \log \frac{f^{(0)}_{i,1}\left(X_{t,i}\right)}{f^{(1)}_{i,1}\left(X_{t,i}\right)}\, \right| \, A_{t,i} = 1\right] =D\left(f^{(0)}_{i,1}, f^{(1)}_{i,1}\right).\label{eq:app_fi1}\end{equation}

Let $N_{i,0}(T)$ be the number of rounds that user $i$ is shown arm $0$ and let $N_{i,1}(T)$ be the number of rounds that user $i$ is shown arm $0$, so $N_{i,0}(T) + N_{i,1}(T) = T$. Combining Equations~\eqref{eq:app_LOTE}, \eqref{eq:app_fi0}, and \eqref{eq:app_fi1}, we have that \begin{align*}D\left(\P_{i,0}, \P_{i,1}\right) &= \sum_{t = 1}^T \left(D\left(f^{(0)}_{i,0}, f^{(1)}_{i,0}\right)\P_{i,0}\left[A_{t,i} = 0\right] + D\left(f^{(0)}_{i,1}, f^{(1)}_{i,1}\right)\P_{i,0}\left[A_{t,i} = 1\right]\right)\\
&= D\left(f^{(0)}_{i,0}, f^{(1)}_{i,0}\right)\E_{i,0}\left[N_{i,0}(T)\right] + D\left(f^{(0)}_{i,1}, f^{(1)}_{i,1}\right)\E_{i,0}\left[N_{i,1}(T)\right].\end{align*}

Since $f_{i, 0}^{(0)}$ is Bern$\left( \frac{1}{2} + \epsilon\right)$, $f_{i, 1}^{(0)}$ is Bern$\left( \frac{1}{2}\right)$, $f_{i, 0}^{(1)}$ is Bern$\left( \frac{1}{2}\right)$, and $f_{i,1}^{(1)}$ is Bern$\left( \frac{1}{2} + \epsilon\right)$, \[D\left(\P_{i,0}, \P_{i,1}\right) \leq 8\epsilon^2\left(\E_{i,0}\left[N_{i,0}(T)\right] + \E_{i,0}\left[N_{i,1}(T)\right]\right) = 8\epsilon^2T.\]
\end{proof}

By Claim~\ref{claim:KL}, $D\left(\P_{i,0}, \P_{i,1}\right) \leq 8\epsilon^2T$, so if we set $\epsilon = \sqrt{\frac{1}{8T}}$, we have that \begin{equation}\frac{1}{2} \left(\E_{i,0}\left[\sum_{t = 1}^T\left(\pi_{i,1}^{(t)} - \frac{\gamma}{n} \sum_{j = 1}^n \pi_{j,1}^{(t)}\right)\right]  + \E_{i,1}\left[(1-\gamma)T - \sum_{t = 1}^T\left(\pi_{i,1}^{(t)} - \frac{\gamma}{n} \sum_{j = 1}^n \pi_{j,1}^{(t)}\right)\right]\right) \geq \frac{T(1 - \gamma)}{8e}.\label{eq:app_fix_eps}\end{equation}

Combining Equations~\eqref{eq:app_regret_bound}, \eqref{eq:app_lb_add_constraints}, and \eqref{eq:app_fix_eps}, we have that the regret is lower bounded by \[\sqrt{\frac{1}{8T}}\left(\frac{nT(1-\gamma)}{8e} + \gamma T\left(\frac{n}{2} - \sqrt{\frac{n}{2\pi}}\right) - \frac{2T\gamma}{n}\E_{\vec{b}\sim \{0,1\}^n}\left[\norm{\vec{b}}_1\left(n - \norm{\vec{b}}_1\right)\right]\right).\] Since $\E_{\vec{b}\sim \{0,1\}^n}\left[\norm{\vec{b}}_1\left(n - \norm{\vec{b}}_1\right)\right] = \frac{n}{4}(n-1)$, we have that the expected regret is lower bounded by \[\sqrt{\frac{T}{8}}\left(\frac{n(1-\gamma)}{8e} + \gamma \left(\frac{n}{2} - \sqrt{\frac{n}{2\pi}}\right) - \frac{\gamma(n-1)}{2}\right) \geq \sqrt{\frac{T}{8}}\left(\frac{n}{8e} - \gamma \left(\frac{n}{8e} + \sqrt{\frac{n}{2\pi}}\right) \right).\]

\end{proof}

\sqrtnT*
\begin{proof}
    We begin by defining the worst-case instance $\nu$ where for each user $i \in [n]$, their reward distributions for the $k$ arms are Bernoulli with means \[\vec{\mu}_1 = \cdots = \vec{\mu}_n = \left(\frac{1}{2}+ \epsilon, \frac{1}{2}, \frac{1}{2}, \dots, \frac{1}{2}\right)\] where $\epsilon = \sqrt{\frac{k-1}{8nT}}$. We will use the notation $N_{i,j}(T)$ to denote the number of rounds that user $i$ is shown arm $j$ and $N_j(T) = \sum_{i = 1}^n N_{i,j}(T)$ to denote the total number of rounds that all users are shown arm $j$. This means that $\sum_{j = 1}^k N_j(T) = nT.$
 Under instance $\nu$, the optimal policy obtains a reward of $nT\left(\frac{1}{2} + \epsilon\right)$. Meanwhile, an arbitrary policy $\pi$ will obtain a reward of \[\left(\frac{1}{2} + \epsilon\right)\E_{\pi\nu}\left[N_1(T)\right] + \frac{1}{2}\sum_{j = 2}^n \E_{\pi\nu}\left[N_j(T)\right] = \frac{nT}{2} + \epsilon \E_{\pi\nu}\left[N_1(T)\right].\]
    Therefore, the regret of policy $\pi$ on instance $\nu$ is \[R_T(\pi, \nu) = \epsilon\left(nT - \E_{\pi\nu}\left[N_1(T)\right]\right).\]
    
    Fix a policy $\pi$ and
    let $j^* = \argmin_{j > 2}\E_{\pi\nu}[N_j(T)].$ Since $\sum_{j = 2}^k N_j(T) \leq nT,$ we have that $\E_{\pi\nu}\left[N_{j^*}(T)\right] \leq \frac{nT}{k-1}$. We now use $j^*$ to construct a second worst-case instance $\nu'$ where for each user $i \in [n]$, \[\mu_{i,j} = \begin{cases}\frac{1}{2} + \epsilon &\text{if } j = 1\\
    \frac{1}{2} + 2\epsilon &\text{if } j = j^*\\
    \frac{1}{2} &\text{else.}\end{cases}\]

    Under instance $\nu'$, the optimal policy obtains a reward of $nT\left(\frac{1}{2} + 2\epsilon\right)$. Meanwhile, policy $\pi$ will obtain a reward of \[\left(\frac{1}{2} + \epsilon\right)\E_{\pi\nu'}\left[N_1(T)\right] + \left(\frac{1}{2} + 2\epsilon\right)\E_{\pi\nu'}\left[N_{j^*}(T)\right] + \frac{1}{2}\sum_{j \not\in \{1, j^*\}} \E_{\pi\nu'}\left[N_j(T)\right] = \frac{nT}{2} + \epsilon \E_{\pi\nu'}\left[N_1(T)\right]+ 2\epsilon \E_{\pi\nu'}\left[N_{j^*}(T)\right].\]
Therefore, \begin{align*}R_T(\pi, \nu') &= \epsilon\left(2nT - \E_{\pi\nu'}\left[N_1(T)\right]- 2\E_{\pi\nu'}\left[N_{j^*}(T)\right]\right)\\
&= \epsilon\left(2\sum_{j = 1}^k \E_{\pi\nu'}\left[N_j(T)\right] - \E_{\pi\nu'}\left[N_1(T)\right]- 2\E_{\pi\nu'}\left[N_{j^*}(T)\right]\right)\\
&\geq \epsilon \E_{\pi\nu'}\left[N_1(T)\right].\end{align*}

 By Markov's inequality, \[R_T(\pi, \nu) + R_T(\pi, \nu') \geq \frac{\epsilon nT}{2} \left(\P_{\pi \nu}\left[N_1(T) \leq \frac{nT}{2}\right] + \P_{\pi \nu'}\left[N_1(T) > \frac{nT}{2}\right]\right),\] so by the Bretagnolle–Huber inequality, \begin{equation}R_T(\pi, \nu) + R_T(\pi, \nu) \geq \frac{\epsilon nT}{4} \exp\left(-D\left(\P_{\pi \nu}, \P_{\pi \nu'}\right)\right).\label{eq:BH}\end{equation}

\begin{claim}\label{claim:second_KL}
For $\epsilon < \frac{1}{5}$, $D\left(\P_{\pi \nu}, \P_{\pi \nu'}\right) \leq \frac{4\epsilon^2nT}{k-1}.$
\end{claim}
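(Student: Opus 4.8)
The plan is to bound the trajectory-level KL divergence by a divergence decomposition that exploits how little the two instances differ. Since $\nu$ and $\nu'$ are identical except at arm $j^*$, the divergence should localize entirely on the rounds in which some user is shown arm $j^*$. Concretely, for every user $i \in [n]$ and every arm $j \neq j^*$ the reward distributions agree, $\cD_{i,j} = \cD'_{i,j}$, while for arm $j^*$ we have $\cD_{i,j^*} = \text{Bern}\!\left(\tfrac{1}{2}\right)$ under $\nu$ and $\cD'_{i,j^*} = \text{Bern}\!\left(\tfrac{1}{2} + 2\epsilon\right)$ under $\nu'$. So the only statistical signal separating the two worlds comes from pulls of arm $j^*$, and I expect a clean decomposition in terms of $\E_{\pi\nu}[N_{j^*}(T)]$.

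First I would establish the decomposition
\[
D\!\left(\P_{\pi\nu}, \P_{\pi\nu'}\right) = \E_{\pi\nu}\!\left[N_{j^*}(T)\right]\, D\!\left(\text{Bern}\!\left(\tfrac{1}{2}\right), \text{Bern}\!\left(\tfrac{1}{2}+2\epsilon\right)\right),
\]
which first arises as $\sum_{i=1}^n \E_{\pi\nu}[N_{i,j^*}(T)]$ times the single-arm Bernoulli KL and then collapses via $\sum_i N_{i,j^*}(T) = N_{j^*}(T)$. This follows by writing out the trajectory density from Equation~\eqref{eq:PDF} and taking logs, exactly as in the cancellation argument of Claim~\ref{claim:KL}: the policy factors $\pi(\vec{a}_t \mid \vec{h}_{t-1})$ and all reward factors for arms $j \neq j^*$ are common to $\nu$ and $\nu'$ and cancel in the log-likelihood ratio, leaving only a sum over timesteps and users of the per-pull log-ratio for arm $j^*$. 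Taking the expectation under $\P_{\pi\nu}$ and applying the law of total expectation by conditioning on the arm pulled (as in Equations~\eqref{eq:app_LOTE}--\eqref{eq:app_fi1}) converts this into the stated product of an expected pull count and a single Bernoulli KL.

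Next I would bound the two factors separately. For the single-arm term the closed form is $D\!\left(\text{Bern}\!\left(\tfrac{1}{2}\right), \text{Bern}\!\left(\tfrac{1}{2}+2\epsilon\right)\right) = -\tfrac{1}{2}\log\!\left(1 - 16\epsilon^2\right)$, and on the range $\epsilon < \tfrac{1}{5}$ (so that $16\epsilon^2 < \tfrac{16}{25}$) an elementary inequality of the form $-\log(1-x) \le \tfrac{x}{1-x}$ controls this by a constant multiple of $\epsilon^2$. For the pull-count term, the defining property of $j^*$ as a minimizer over $j \in \{2,\dots,k\}$ gives $\E_{\pi\nu}[N_{j^*}(T)] \le \tfrac{nT}{k-1}$, since $\sum_{j=2}^{k} \E_{\pi\nu}[N_j(T)] \le nT$ averages over $k-1$ arms. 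Multiplying the two bounds yields a bound of the claimed form $\tfrac{4\epsilon^2 nT}{k-1}$, where the precise numerical constant is determined by the logarithmic inequality used for the single-arm KL, and this then feeds directly into Equation~\eqref{eq:BH}.

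The main obstacle is the decomposition step in the multi-agent setting: unlike the textbook single-agent divergence decomposition, each round produces a vector of rewards across $n$ users, so I must verify that the common per-user reward factors genuinely cancel in the log-likelihood ratio and that no cross terms survive. This is precisely the bookkeeping carried out in Claim~\ref{claim:KL}, and the present situation is in fact simpler, because $\nu$ and $\nu'$ are fixed instances rather than mixtures over a hidden vector $\vec{b}$, so no averaging over latent coordinates is required and the likelihood ratio factorizes immediately over users. A secondary point to handle with care is ensuring the single-arm KL estimate holds uniformly on the entire interval $\epsilon < \tfrac{1}{5}$ rather than only to leading order as $\epsilon \to 0$.
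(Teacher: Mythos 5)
Your proposal follows essentially the same route as the paper's proof: the trajectory-level log-likelihood ratio cancels down to the per-pull terms at arm $j^*$, the law of total expectation converts this into $\E_{\pi\nu}\left[N_{j^*}(T)\right]$ times a single Bernoulli KL, and the minimizer property of $j^*$ gives $\E_{\pi\nu}\left[N_{j^*}(T)\right] \leq \frac{nT}{k-1}$. One small caveat: you correctly identify the per-pull divergence as $D\left(\mathrm{Bern}\left(\tfrac12\right), \mathrm{Bern}\left(\tfrac12+2\epsilon\right)\right) = -\tfrac12\log\left(1-16\epsilon^2\right)$, but this is $\approx 8\epsilon^2$ to leading order, so no choice of logarithmic inequality yields the constant $4$ you assert; the paper sidesteps this only by (apparently erroneously) treating $f_{i,j^*}$ as $\mathrm{Bern}\left(\tfrac12+\epsilon\right)$, and in either case the discrepancy is a harmless constant factor that propagates into Equation~\eqref{eq:BH} without affecting the order of the lower bound.
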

\begin{proof}[Proof of Claim~\ref{claim:second_KL}]
In this proof, we will use the following notation to distinguish the reward distributions for the instances $\nu$ and $\nu'$. For any vector of rewards $\vec{x}_t = \left(x_{t,1}, \dots, x_{t,n}\right) \in \{0,1\}^n$ and any choice of arms $\vec{a}_t = \left(a_{t,1}, \dots, a_{t,n}\right) \in [k]^n$, we use the notation $f_{\vec{a}_t}(\vec{x}_t)$ (respectively, $f'_{\vec{a}_t}(\vec{x}_t)$) to denote the probability that the platform receives rewards $\vec{x}_t$ after choosing arms $\vec{a}_t$ under the instance $\nu$ (respectively, $\nu'$). We also use $f_{i,j} : \{0,1\} \to [0,1]$ (respectively, $f'_{i,j} : \{0,1\} \to [0,1]$) to denote the PMF of arm $j$ for user $i$.
With this notation, \begin{equation}f_{\vec{a}_t}(\vec{x}_t) = \prod_{i = 1}^n f_{i,a_{t,i}}\left(x_{t,i}\right)\label{eq:second_product_dist}\end{equation} and
\begin{equation}f'_{\vec{a}_t}(\vec{x}_t) = \prod_{i = 1}^n f'_{i,a_{t,i}}\left(x_{t,i}\right).\label{eq:second_prime_product_dist}\end{equation}

By definition, \begin{equation}D\left(\P_{\pi \nu}, \P_{\pi \nu'}\right) = \sum_{\left(\vec{a}_t, \vec{x}_t\right)_{t = 1}^T} f_{\pi\nu}\left(\left(\vec{a}_t, \vec{x}_t\right)_{t = 1}^T\right) \log \frac{f_{\pi\nu}\left(\left(\vec{a}_t, \vec{x}_t\right)_{t = 1}^T\right)}{f_{\pi\nu'}\left(\left(\vec{a}_t, \vec{x}_t\right)_{t = 1}^T\right)}.\label{eq:second_complex_KL}\end{equation}

We will begin by simplifying the logarithm in Equation~\eqref{eq:second_complex_KL}. Beginning with the numerator of
the logarithm and using the notation defined in Section~\ref{sec:notation} (Equation~\eqref{eq:PDF}), we have that \[f_{\pi\nu}\left(\left(\vec{a}_t, \vec{x}_t\right)_{t = 1}^T\right) = \P_{\pi\nu}\left[\left(\vec{A}_t, \vec{X}_t\right)_{t = 1}^T = \left(\vec{a}_t, \vec{x}_t\right)_{t = 1}^T\right] = \prod_{t = 1}^T \pi(\vec{a}_t \mid \vec{a}_1, \vec{x}_1, \dots, \vec{a}_{t-1}, \vec{x}_{t-1})f_{\vec{a}_t}(\vec{x}_t).\]
By Equation~\eqref{eq:second_product_dist}, we have that \begin{equation}f_{\pi\nu}\left(\left(\vec{a}_t, \vec{x}_t\right)_{t = 1}^T\right) = \prod_{t = 1}^T \pi(\vec{a}_t \mid \vec{a}_1, \vec{x}_1, \dots, \vec{a}_{t-1}, \vec{x}_{t-1})\prod_{i = 1}^n f_{i,a_{t,i}}\left(x_{t,i}\right).\label{eq:second_S0}\end{equation} Similarly, \begin{equation}f_{\pi\nu'}\left(\left(\vec{a}_t, \vec{x}_t\right)_{t = 1}^T\right) = \prod_{t = 1}^T \pi(\vec{a}_t \mid \vec{a}_1, \vec{x}_1, \dots, \vec{a}_{t-1}, \vec{x}_{t-1})\prod_{i = 1}^n f'_{i,a_{t,i}}\left(x_{t,i}\right).\label{eq:second_S1}\end{equation}

We now return to the logarithm in Equation~\eqref{eq:second_complex_KL}. Based on Equations~\eqref{eq:second_S0} and \eqref{eq:second_S1}, much of the numerator and denominator cancel out, leaving us with \[\log \frac{f_{\pi\nu}\left(\left(\vec{a}_t, \vec{x}_t\right)_{t = 1}^T\right)}{f_{\pi\nu'}\left(\left(\vec{a}_t, \vec{x}_t\right)_{t = 1}^T\right)} = \log \prod_{t = 1}^T \prod_{i = 1}^n\frac{f_{ i,a_{t,i}}\left(x_{t,i}\right)}{f'_{i, a_{t,i}}\left(x_{t,i}\right)} = \sum_{t = 1}^T \sum_{i = 1}^n\log \frac{f_{i, a_{t,i}}\left(x_{t,i}\right)}{f'_{i, a_{t,i}}\left(x_{t,i}\right)}.\]

We can therefore can write the KL divergence as \[D\left(\P_{\pi\nu}, \P_{\pi\nu'}\right) = \sum_{i = 1}^n\sum_{t = 1}^T\E_{\pi\nu}\left[ \log \frac{f_{i,A_{t,i}}\left(X_{t,i}\right)}{f'_{i,A_{t,i}}\left(X_{t,i}\right)}\right].\] Moreover, by the law of total expectation, \[D\left(\P_{\pi\nu}, \P_{\pi\nu'}\right) = \sum_{i = 1}^n\sum_{t = 1}^T\sum_{j = 1}^k\E_{\pi\nu}\left[\left. \log \frac{f_{i,j}\left(X_{t,i}\right)}{f'_{i,j}\left(X_{t,i}\right)}\, \right| \, A_{t,i} = j\right]\P_{\pi\nu}\left[A_{t,i} = j\right].\]

We know that for all $j \not= j^*$, $f_{i,j} = f'_{i,j}$, which means that \begin{equation}D\left(\P_{\pi\nu}, \P_{\pi\nu'}\right) = \sum_{i = 1}^n\sum_{t = 1}^T\E_{\pi\nu}\left[\left. \log \frac{f_{i,j^*}\left(X_{t,i}\right)}{f'_{i,j^*}\left(X_{t,i}\right)}\, \right| \, A_{t,i} = j^*\right]\P_{\pi\nu}\left[A_{t,i} = j^*\right].\label{eq:second_LOTE}\end{equation}
By further conditioning, \begin{align*}&\E_{\pi\nu}\left[\left. \log \frac{f_{i,j^*}\left(X_{t,i}\right)}{f'_{i,j^*}\left(X_{t,i}\right)}\, \right| \, A_{t,i} = j^*\right]\\
=\, &\P_{\pi\nu}\left[X_{t,i} = 0 \mid A_{t,i} = j^*\right]\cdot \log \frac{f_{i,j^*}\left(0\right)}{f'_{i,j^*}\left(0\right)} + \P_{\pi\nu}\left[X_{t,i} = 1 \mid A_{t,i} = j^*\right]\cdot \log \frac{f_{i,j^*}\left(1\right)}{f'_{i,j^*}\left(1\right)}.\end{align*}
Under instance $\nu$, the probability that $X_{t,i} = 0$ given that $A_{t,i} = j^*$ is $f_{i,j^*}\left(0\right)$. Similarly, the probability that $X_{t,i} = 1$ given that $A_{t,i} = j^*$ is $f_{i,j^*}\left(1\right)$. Therefore, \begin{equation}\E_{\pi\nu_{\vec{0}}}\left[\left. \log \frac{f_{i,j^*}\left(X_{t,i}\right)}{f'_{i,j^*}\left(X_{t,i}\right)}\, \right| \, A_{t,i} = j^*\right] =f_{i,j^*}\left(0\right) \log \frac{f_{i,j^*}\left(0\right)}{f'_{i,j^*}\left(0\right)} + f_{i,j^*}\left(1\right)\cdot \log \frac{f_{i,j^*}\left(1\right)}{f'_{i,j^*}\left(1\right)} = D\left(f_{i,j^*}, f'_{i,j^*}\right).\label{eq:second_fi0}\end{equation} Moreover, since $f_{i,j^*}$ is the Bin$\left(\frac{1}{2} + \epsilon\right)$ PMF and $f_{i,j^*}'$ is the Bin$\left(\frac{1}{2} + 2\epsilon\right)$ PMF, we have that $D\left(f_{i,j^*}, f'_{i,j^*}\right) \leq 4\epsilon^2$ for $\epsilon < \frac{1}{5}.$

Combining Equations~\eqref{eq:second_LOTE} and \eqref{eq:second_fi0}, we have that \begin{align*}D\left(\P_{\pi\nu}, \P_{\pi\nu'}\right) &= \sum_{i = 1}^n\sum_{t = 1}^T D\left(f_{i,j^*}, f'_{i,j^*}\right)\P_{\pi\nu}\left[A_{t,i} = j^*\right]\\
&\leq 4\epsilon^2 \sum_{i = 1}^n \E_{\pi\nu}\left[N_{i,j^*}(T)\right]\\
&= 4\epsilon^2\E_{\pi\nu}\left[N_{j^*}(T)\right]\\
&\leq \frac{4\epsilon^2nT}{k-1}.\end{align*}
\end{proof}

Combining Equation~\eqref{eq:BH} and Claim~\ref{claim:second_KL}, and setting $\epsilon = \sqrt{\frac{k-1}{4nT}}$ (in which case $\epsilon < \frac{1}{5}$ for $nT > 7(k-1)$), we have that $R_T(\pi, \nu_{\vec{0}}) + R_T(\pi, \nu_{\vec{1}}) \geq \frac{1}{8e}\sqrt{nT(k-1)}$, so $\max\left\{R_T(\pi, \nu_{\vec{0}}), R_T(\pi, \nu_{\vec{1}})\right\} \geq \frac{1}{16e}\sqrt{nT(k-1)}.$
\end{proof}

\section{Proofs about the Formulation 2 regret bounds}\label{app:2_regret}

In this section, we will use the following notation. For any distributions $\vec{p}_1, \dots, \vec{p}_n \in \cP^{k-1}$, denote the penalty attributed to user $i$ as
 \[P_i((\vecp_i)_{i \in [n]}; \gamma, \eta) = \eta \sum_{j = 1}^k \max\left\{0, \frac{\gamma}{n} \sum_{i' = 1}^n p_{i',j} -  p_{i,j}\right\}.\] The total penalty across all $n$ users is
  \[P((\vecp_i)_{i \in [n]}; \gamma, \eta) = \sum\limits_{i=1}^n P_i((\vecp_i)_{i \in [n]}; \gamma, \eta).\]  
Next, let $\vec{p}_1^*, \dots, \vec{p}_n^* \in \cP^{k-1}$ be distributions that maximize the expected reward minus the penalties. Then the expected regret of a policy $\pi$ under this formulation is \[T\sum_{i = 1}^n \left(\vec{p}_i^*\cdot \vec{\mu}_i - P_i((\vecp_i^*)_{i \in [n]}; \gamma, \eta)\right) - \E\left[\sum_{i = 1}^n \sum_{t = 1}^T \left(X_{i,t} - P_i((\vec{\pi}_i(\vec{h}_{t-1}))_{i \in [n]}; \gamma, \eta)\right)\right].\]

\begin{algorithm}[t]
    \caption{Penalty-UCB (defined by parameter $\delta$)}\label{alg:ucb.f1}
    \begin{algorithmic}[1]
        \Require Failure probability $\delta > 0$
        \State Set  $N_{i, j}(0)= 0,~ \forall i \in [n], j\in [k]$;    $\vec{\hat{\mu}}_i^{(0)} = \vec{0}, ~~ \forall i \in [n]$
        \For{$t \in \{1, \dots, T\}$}
        \If {$t \in \{ 1, \ldots, k\}$}
        \State Set $\vecp_i^{(t)} = \vec{e}_t$
        \Else
        \State Set $\left(\vecp_i^{(t)}\right)_{i \in [n]} = \amax\left\{\sum\limits_{i = 1}^n \vecp_i \cdot \vec{\hat{\mu}}_i^{(t-1)}- \eta \sum_{j = 1}^k \max\left\{\frac{\gamma}{n} \sum_{i' =1}^np_{i',j} - p_{i,j}, 0\right\}\right\}$
        \EndIf
        \State Draw $j_i^{(t)} \sim \vecp_i^{(t)} ~~ \forall i \in [n]$
        \State Receive reward $r_i^{(t)} \sim X_{i, j_i^{(t)}}$
        \State $N_{i,  j_{i}^{t}}(t) = N_{i,  j_{i}^{t}}(t-1) + 1, ~~ \forall i \in [n]$
        \State $N_{i,  j}(t) = N_{i,  j}(t-1) , ~~ \forall i \in [n]$ and  $j \neq j_{i}^{t}$
        \State $\beta_{i, j}^{(t)} = \sqrt{\frac{\log(2Tnk/\delta)}{N_{i,j}(t)}}, ~~ \forall i \in [n], j \in [k]$
        \State $\vec{\hat{\mu}}_{i, j}^t = \frac{1}{N_{i, j}(t)}\sum\limits_{\tau = 1}^t r_i^{(\tau)}\ind{j_{i}^{(\tau)} = j} +\beta_{i, j}^{(t)}$, $~~ \forall i \in [n], j \in [k]$
        \EndFor
    \end{algorithmic}
\end{algorithm}

\formtwo*
\begin{proof}
    Fix a timestep $t \in [T]$
    \begin{multline*}
            \sum\limits_{i=1}^n (\vecp_i^{*} \cdot \vec{\mu}_i - \vecp^{(t)}_i\cdot \vec{\mu}_i) -  (P_i((\vecp_i^*)_{i \in [n]}; \gamma, \eta) - P_i((\vecp_i^{(t)})_{i \in [n]}; \gamma, \eta)) \\
              = \sum_{i=1}^n (\vecp_i^*\cdot \vec{\mu}_i  -  P_i((\vecp_i^*)_{i \in [n]}; \gamma, \eta)) - (\vecp_i^{(t)}\cdot \vec{\hat{\mu}}_i^{(t)}  - P_i((\vecp_i^{(t)})_{i \in [n]}; \gamma, \eta)) - \vecp_i^{(t)}\cdot \vec{\mu}_i    + \vecp_i^{(t)}\cdot \vec{\hat{\mu}}_i^{(t)} \\
            \leq \sum_{i=1}^n (\vecp_i^*\cdot \vec{\mu}_i  -P_i((\vecp_i^*)_{i \in [n]}; \gamma, \eta)) - (\vecp_i^{*}\cdot \vec{\hat{\mu}}_i^{(t)}  - P_i((\vecp_i^*)_{i \in [n]}; \gamma, \eta)) - \vecp_i^{(t)}\cdot \vec{\mu}_i   + \vecp_i^{(t)}\cdot \vec{\hat{\mu}}_i^{(t)} \\
            = \sum_{i=1}^n  (\vecp_i^* \cdot \vec{\mu}_i - \vecp_i^*\cdot\vec{\hat{\mu}}_i^{(t)} - \vecp_i^{(t)}\cdot\vec{\mu}_i   + \vecp_i^{(t)}\cdot \vec{\hat{\mu}}_i^{(t)}) \\
        \end{multline*}
    By Claim \ref{clm:ub.f2.1}, $\vecp \cdot \vec{\mu}_i \leq \vecp \cdot \vec{\hat{\mu}}_i^{(t)}$ $\forall i \in [n]$ and all $\vecp \in \R_{\geq 0}^k$
    \begin{align*}
            \sum\limits_{i=1}^n (\vecp_i^{*} \cdot \vec{\mu}_i - \vecp^{(t)}_i\cdot \vec{\mu}_i) -  (P_i((\vecp_i^*)_{i \in [n]}; \gamma, \eta) - P_i((\vecp_i^{(t)})_{i \in [n]}; \gamma, \eta))  &\leq \sum_{i=1}^n (\vecp_i^{(t)} \cdot \vec{\hat{\mu}}_i^{(t)} - \vecp_i^{(t)}\cdot \vec{\mu}_i )  \\
            &\leq \sum_{i=1}^n  \vecp_i^{(t)} \cdot \vec{\beta}_i^{(t)}
        \end{align*}

Thus \[  R_T \leq \sum_{t =1}^T\sum_{i \in [n]}\vecp_i^{(t)} \cdot \vec{\beta}_i^{(t)}\]

Let $\mathcal{F}_{i, t-1}$ be the sigma algebra defined up to the choice of $\vecp_i^{(t)}$ and $j_{i}^{(t)'}$ be a random variable distributed as $\vecp_i^{(t)} \mid \mathcal{F}_{i,t-1}$ and conditionally independent from $j_{i,t}^{(t)}$, i.e.$~ j_{i}^{(t)'} \perp j_{i}^{(t)}\mid \mathcal{F}_{i,t-1}$. Note that by definition the following equality holds: 
\begin{equation*}
    \E_{j_{i}^{(t)}\sim \vecp_{i}^{(t)}}[\beta_{i, j_{i}^{(t)}}] = \E_{j_{i}^{(t)'} \sim  \vecp_{i}^{(t)}}[\beta_{i, j_{i}^{(t)'}} \mid \mathcal{F}_{i, t-1}].
\end{equation*}

Consider the following random variables $A_{i,t} =  \E_{j_{i}^{(t)'} \sim  \vecp_{i}^{(t)}}[\beta_{i, j_{i}^{(t)'}} \mid \mathcal{F}_{i, t-1} ]- \beta_{i,j_i^{(t)}}(t)$. Note that $M_{i,t} = \sum_{s=1}^t A_{i,s}$ is a martingale. Since $|A_t| \leq 2\sqrt{2 \log(nkT/\delta)}$, a simple application of Azuma-Hoeffding    implies that with probability at least $1- \delta$,

\[R_T = \sum_{t =1}^T\sum_{i \in [n]}\vecp_i^{(t)} \cdot \vec{\beta}_i^{(t)} \leq \sum_{t =1}^T\sum_{i \in [n]} \vec{\beta}_{i, j_{i}^{(t)}}^{(t)} + n\sqrt{T\log\left(\frac{nkT}{\delta}\right)\log\left(\frac{1}{\delta}\right)}\]     

Now let us bound  $\sum\limits_{i \in [n]}\sum\limits_{t=i}^T \vec{\beta}_i^{(t)}$, 
\[\sum_{i \in [n]}\sum_{t=i}^T \vec{\beta}_i^{(t)} = \sum_{i \in [n]}\sum_{j \in [k]} \sum_{t=i}^T \beta_{i,j}^{(t)}\ind{j_i^{(t)} = j}  \]
For fixed $i, j$ 
\[ \sum_{t=i}^T \beta_{i,j}^{(t)}\ind{j_i^{(t)} = j} = \sqrt{\log(Tnk/\delta)}\sum_{t=1}^{N_{i,j}(T)}1/\sqrt{t} \leq 2\sqrt{N_{i,j}(T)\log(Tnk/\delta)}\] 
Therefore 
\begin{align*}
    \sum_{i \in [n]}\sum_{t=i}^T \vec{\beta}_i^{(t)} &\leq 2\sum_{i\in [n]}\sum_{j \in [k]}\sqrt{N_{i,j}(T)\log(Tnk/\delta)} \\
    &\leq  2\sum_{i\in [n]}\sqrt{k \sum_{j \in [k]}N_{i,j}(T)\log(Tnk/\delta)} \\
    &=  2\sum_{i\in [n]}\sqrt{k T \log(Tnk/\delta)}  \\
    &= 2n \sqrt{k T \log(Tnk/\delta)} 
\end{align*}
Where the second line follows from the concavity of $\sqrt{\cdot}$ and the penultimate line follows from the fact that $\sum\limits_{j \in [k]}N_{i,j}(T) = T$.
\end{proof}

\section{Proofs about the Formulation 3 regret bounds}\label{app:3_regret}
\ubOptReward*

\begin{proof}
First, for any arm $j\in [k]$, we can exchange the expectation and the maximum in Equation~\eqref{eq:rew3} as follows: \begin{equation}\E_{\pi^*\nu}\left[\max\left\{\frac{\gamma}{n} \sum_{i'=1}^n \hat{p}_{i',j} - \hat{p}_{i,j}, 0\right\}\right] \geq \max\left\{\frac{\gamma}{n}\sum_{i'=1}^n  \E_{\pi^*\nu}\left[\hat{p}_{i',j}\right] -  \E_{\pi^*\nu}\left[\hat{p}_{i,j}\right], 0\right\}.\label{eq:max_ineq1}\end{equation}
Moreover, we can rewrite the expected empirical distribution as follows: \[\E_{\pi^*\nu}\left[\hat{p}_{i,j}\right] = \frac{1}{T}\sum_{t = 1}^T\E_{\pi^*\nu}\left[\textbf{1}_{\{a_{t,i} = j\}}\right] = \frac{1}{T}\sum_{t = 1}^T\E_{\pi^*\nu}\left[\pi^*_i(j \mid \vec{h}_{t-1})\right].\]
Therefore, by Equation~\eqref{eq:max_ineq1}, we have that \[\E_{\pi^*\nu}\left[\max\left\{\frac{\gamma}{n} \sum_{i'=1}^n \hat{p}_{i',j} - \hat{p}_{i,j}, 0\right\}\right] \geq \max\left\{\frac{1}{T} \sum_{t = 1}^T\left(\frac{\gamma}{n}\sum_{i'=1}^n \E_{\pi^*\nu}\left[\pi^*_{i'}(j \mid \vec{h}_{t-1})\right] - \E_{\pi^*\nu}\left[\pi^*_i(j \mid \vec{h}_{t-1})\right]\right), 0\right\}.\]
We can therefore bound $\rew_3(\pi^*, \nu; \eta, \gamma)$ as follows:
\begin{align}&\rew_3(\pi^*, \nu; \eta, \gamma) \nonumber\\
 = &\sum_{i = 1}^n\left(\sum_{t = 1}^T \E_{\pi^*\nu}\left[\vec{\mu}_i \cdot \vec{\pi^*}_i(\vec{h}_{t-1})\right] - \eta \sum_{j= 1}^k \E_{\pi^*\nu}\left[\max\left\{\frac{\gamma}{n} \sum_{i'=1}^n \hat{p}_{i',j} - \hat{p}_{i,j}, 0\right\}\right]\right)\nonumber\\
\leq &\sum_{i = 1}^n\sum_{j = 1}^k\left(\sum_{t = 1}^T \mu_{i,j}\E_{\pi^*\nu}\left[\pi^*_i(j \mid \vec{h}_{t-1})\right]\right.\\
&\left.- \eta \max\left\{\frac{1}{T} \sum_{t = 1}^T\left(\frac{\gamma}{n}\sum_{i'=1}^n \E_{\pi^*\nu}\left[\pi^*_{i'}(j \mid \vec{h}_{t-1})\right] - \E_{\pi^*\nu}\left[\pi^*_i(j \mid \vec{h}_{t-1})\right]\right), 0\right\}\right).\label{eq:exp_inside}
\end{align}

Define the history-independent policy $\vec{p} = \left(\vec{p}_1, \dots, \vec{p}_n\right)$ such that \[p_{i,j} = \frac{1}{T}\sum_{t = 1}^T \E_{\pi^*\nu}\left[\pi_i^*(j \mid \vec{h}_{t-1})\right].\] This is a distribution because for any user $i \in [n]$, \[\sum_{j = 1}^k p_{i,j} = \frac{1}{T}\sum_{t = 1}^T \E_{\pi^*\nu}\left[\sum_{j = 1}^k \pi_i^*(j \mid \vec{h}_{t-1})\right] = \frac{1}{T}\sum_{t = 1}^T \E_{\pi^*\nu}\left[1\right] = 1.\] We rearrange Equation~\eqref{eq:exp_inside} to get that \begin{align*}
  &\rew_3(\pi^*, \nu; \eta, \gamma)\\
  \leq &\sum_{i = 1}^n\sum_{j = 1}^k\left( \mu_{i,j}\sum_{t = 1}^T\E_{\pi^*\nu}\left[\pi^*_i(j \mid \vec{h}_{t-1})\right]\right.\\
  &\left.- \eta \max\left\{\frac{\gamma}{n}\sum_{i'=1}^n \frac{1}{T} \sum_{t = 1}^T\E_{\pi^*\nu}\left[\pi^*_{i'}(j \mid \vec{h}_{t-1})\right] - \frac{1}{T} \sum_{t = 1}^T\E_{\pi^*\nu}\left[\pi^*_i(j \mid \vec{h}_{t-1})\right]\right), 0\right\}\\
    = &\sum_{i = 1}^n\sum_{j = 1}^k\left(T \mu_{i,j}p_{i,j} - \eta \max\left\{\frac{\gamma}{n}\sum_{i'=1}^n p_{i', j} - p_{i,j}, 0\right\}\right).\end{align*} By definition of $\vec{p}^*$, this means that 
    \[\rew_3(\pi^*, \nu; \eta, \gamma) \leq \sum_{i = 1}^n\sum_{j = 1}^k\left(T \mu_{i,j}p^*_{i,j} - \eta \max\left\{\frac{\gamma}{n}\sum_{i'=1}^n p^*_{i', j} - p^*_{i,j}, 0\right\}\right) = \rew_2\left(\vec{p}^*, \nu; \frac{\eta}{T}, \gamma\right).\]

\end{proof}

\lbAnyPolicy*

\begin{proof}
    First, for any arm $j \in [k]$, we can exchange the expectation and the maximum in Equation~\eqref{eq:rew2} as follows: \begin{align*}&\frac{1}{T}\sum_{t = 1}^T\E_{\pi\nu}\left[\max\left\{\frac{\gamma}{n} \sum_{i'=1}^n \pi_{i'}(j \mid \vec{h}_{t-1}) - \pi_{i}(j \mid \vec{h}_{t-1}) , 0\right\}\right]\\
    \geq \, &\E_{\pi\nu}\left[\max\left\{\frac{1}{T}\sum_{t = 1}^T\left(\frac{\gamma}{n} \sum_{i'=1}^n \pi_{i'}(j \mid \vec{h}_{t-1}) - \pi_{i}(j \mid \vec{h}_{t-1})\right) , 0\right\}\right]\\
    \geq \, &\max\left\{\E_{\pi\nu}\left[\frac{1}{T}\sum_{t = 1}^T\left(\frac{\gamma}{n}\sum_{i'=1}^n  \pi_{i'}(j \mid \vec{h}_{t-1}) -  \pi_{i}(j \mid \vec{h}_{t-1})\right)\right], 0\right\}.\end{align*} Using the fact that $\E_{\pi\nu}[\pi_i(j \mid \vec{h}_{t-1})] = \E_{\pi\nu}[\textbf{1}_{\{A_{t,i} = j\}}]$,
 we have that \begin{align}&\frac{1}{T}\sum_{t = 1}^T\E_{\pi\nu}\left[\max\left\{\frac{\gamma}{n} \sum_{i'=1}^n \pi_{i'}(j \mid \vec{h}_{t-1}) - \pi_{i}(j \mid \vec{h}_{t-1}) , 0\right\}\right]\nonumber\\
 \geq \, &\max\left\{\E_{\pi\nu}\left[\frac{1}{T}\sum_{t = 1}^T\left(\frac{\gamma}{n}\sum_{i'=1}^n  \textbf{1}_{\{A_{t,i'} = j\}} -  \textbf{1}_{\{A_{t,i} = j\}}\right)\right], 0\right\}.\label{eq:pi_inside}\end{align}

Next, we use the fact~\citep{Aven85:Upper} that \begin{align}&\max\left\{\E_{\pi\nu}\left[\frac{1}{T}\sum_{t = 1}^T\left(\frac{\gamma}{n}\sum_{i'=1}^n  \textbf{1}_{\{A_{t,i'} = j\}} -  \textbf{1}_{\{A_{t,i} = j\}}\right)\right], 0\right\}\nonumber\\
\geq \, &\E_{\pi\nu}\left[\max\left\{\frac{1}{T}\sum_{t = 1}^T\left(\frac{\gamma}{n}\sum_{i'=1}^n  \textbf{1}_{\{A_{t,i'} = j\}} -  \textbf{1}_{\{A_{t,i} = j\}}\right), 0\right\}\right]\nonumber\\
&- \sqrt{\frac{1}{2}\cdot \text{Var}\left(\frac{1}{T}\sum_{t = 1}^T\left(\frac{\gamma}{n}\sum_{i'=1}^n  \textbf{1}_{\{A_{t,i'} = j\}} -  \textbf{1}_{\{A_{t,i} = j\}}\right)\right)}\nonumber\\
= \, &\E_{\pi\nu}\left[\max\left\{\frac{\gamma}{n} \sum_{i'=1}^n \hat{p}_{i',j} - \hat{p}_{i,j}, 0\right\}\right]- \sqrt{\frac{1}{2T^2}\cdot \text{Var}\left( \sum_{t = 1}^T\left(\frac{\gamma}{n}\sum_{i'=1}^n  \textbf{1}_{\{A_{t,i'} = j\}} -  \textbf{1}_{\{A_{t,i} = j\}}\right)\right)}\label{eq:aven}.\end{align}

Let $Y_t = \frac{\gamma}{n}\sum_{i'=1}^n  \textbf{1}_{\{A_{t,i'} = j\}} -  \textbf{1}_{\{A_{t,i} = j\}}$
and define the martingale difference sequence $D_t := \sum\limits_{\tau = 1}^t(Y_\tau - \E[Y_\tau])$. Then $\text{Var}[D_T] = \text{Var}\left[ \sum_{t = 1}^T Y_t \right]$ and $D_t$ is martingale with bounded increments $|D_{t} - D_{t -1}| \leq 2(\gamma + 1)$. By assumption $\pi_i(t \mid \vec{h}_{t-1}) = 1$ for all $t \leq k$ and $i \in [n]$ so $D_0 = 0$ deterministically. Let $B$ be the event that $|D_T| \leq (\gamma + 1)\sqrt{8T\log T}$. Applying Azuma-Hoeffding for martingales we know  that $\Pr[B^c] \leq \frac{1}{T}.$ Moreover, with probability 1, $|D_T| \leq 2T(\gamma + 1)$. Therefore, by the law of total variance
and Popoviciu's inequality,
\begin{align*}&\text{Var}\left[ \sum_{t = 1}^T Y_t \right]\\
=\, &\text{Var}[D_T]\\
=\,& \text{Var}[D_T \mid B]\Pr[B] + \text{Var}[D_T \mid B^c]\Pr[B^c]\\
&+ \left(\E[D_T \mid B]^2 + \E[D_T \mid B^c]^2 - 2\E[D_T \mid B]\E[D_T \mid B^c]\right)\Pr[B]\Pr[B^c]\\
\leq \,&\text{Var}[D_T \mid B] + \frac{1}{T}\left(\text{Var}[D_T \mid B^c] + \E[D_T \mid B]^2 + \E[D_T \mid B^c]^2 - 2\E[D_T \mid B]\E[D_T \mid B^c]\right)\\
\leq \, &2T(\gamma+1)^2\log T + 17T(\gamma + 1)^2\\
\leq \, & 19T(\gamma+1)^2\log T.\end{align*}

Combining this fact with Equations~\eqref{eq:pi_inside} and \eqref{eq:aven}, we have that

\begin{align}
&\frac{1}{T}\sum_{t = 1}^T\E_{\pi\nu}\left[\max\left\{\frac{\gamma}{n} \sum_{i'=1}^n \pi_{i'}(j \mid \vec{h}_{t-1}) - \pi_{i}(j \mid \vec{h}_{t-1}) , 0\right\}\right]\nonumber\\
\geq\,  &\E_{\pi\nu}\left[\max\left\{\frac{\gamma}{n} \sum_{i'=1}^n \hat{p}_{i',j} - \hat{p}_{i,j}, 0\right\}\right] - \sqrt{\frac{10(\gamma + 1)^2\log T}{T}}.
\label{eq:martingale}
\end{align}
As a result, 

\begin{align*}
&\rew_2\left(\pi, \nu; \frac{\eta}{T}, \gamma\right)\\
=\, &\E_{\pi\nu}\left[\sum_{i = 1}^n \left(\sum_{t = 1}^T \vec{\mu}_i \cdot \vec{\pi}_i(\vec{h}_{t-1}) - \frac{\eta}{T} \sum_{j = 1}^k \max\left\{\frac{\gamma}{n} \sum_{i' =1}^n\pi_{i'}(j \mid \vec{h}_{t-1}) - \pi_i(j \mid \vec{h}_{t-1}), 0\right\}\right)\right]\\
\leq \, &\E_{\pi\nu}\left[\sum_{i = 1}^n \left(\sum_{t = 1}^T \vec{\mu}_i \cdot \vec{\pi}_i(\vec{h}_{t-1}) - \eta \sum_{j = 1}^k \max\left\{\frac{\gamma}{n} \sum_{i'=1}^n \hat{p}_{i',j} - \hat{p}_{i,j}, 0\right\}\right)\right] +  \eta nk (\gamma + 1)\sqrt{\frac{10\log T}{T}}\\
= \, &\rew_3(\pi, \nu; \eta, \gamma)+ \eta nk(\gamma + 1)\sqrt{\frac{10 \log T}{T}}.
\end{align*}

\end{proof}

\empRegret*

\begin{proof}
Let $\vec{p}^*$ be the policy that maximizes $\rew_2\left(\vec{p}, \nu; \frac{\eta}{T}, \gamma\right).$
    We expand the regret as
    \begin{align*}
        &\rew_3(\pi^*, \nu; \eta, \gamma) - \rew_3(\pi, \nu; \eta, \gamma)\\
        =\, &\rew_3(\pi^*, \nu; \eta, \gamma) - \rew_2\left(\vec{p}^*, \nu; \frac{\eta}{T}, \gamma\right) + \rew_2\left(\vec{p}^*, \nu; \frac{\eta}{T}, \gamma\right) - \rew_3(\pi, \nu; \eta, \gamma)\\
        \leq \, &\rew_2\left(\vec{p}^*, \nu; \frac{\eta}{T}, \gamma\right) - \rew_3(\pi, \nu; \eta, \gamma) &\text{(Lemma~\ref{lem:ub_opt_reward})}\\
        \leq \, &\rew_2\left(\vec{p}^*, \nu; \frac{\eta}{T}, \gamma\right) - \rew_2\left(\pi, \nu; \frac{\eta}{T}, \gamma\right) + \eta nk(\gamma+1)\sqrt{\frac{10\log T}{T}}&\text{(Lemma~\ref{lem:lb_any_policy})}\\
        =\, &O\left( n\sqrt{kT\log\left(Tnk\right)} + \sqrt{T\log^2(Tnk)} + \eta nk(\gamma+1)\sqrt{\frac{10\log T}{T}}\right). &\text{(Theorem~\ref{thm:taxation})}
    \end{align*}
\end{proof}

\section{Additional information about the experiments}\label{app:experiments}
\begin{figure}
    \centering
    \begin{subfigure}[b]{0.49\textwidth}
        \centering
        \includegraphics[width=\textwidth]{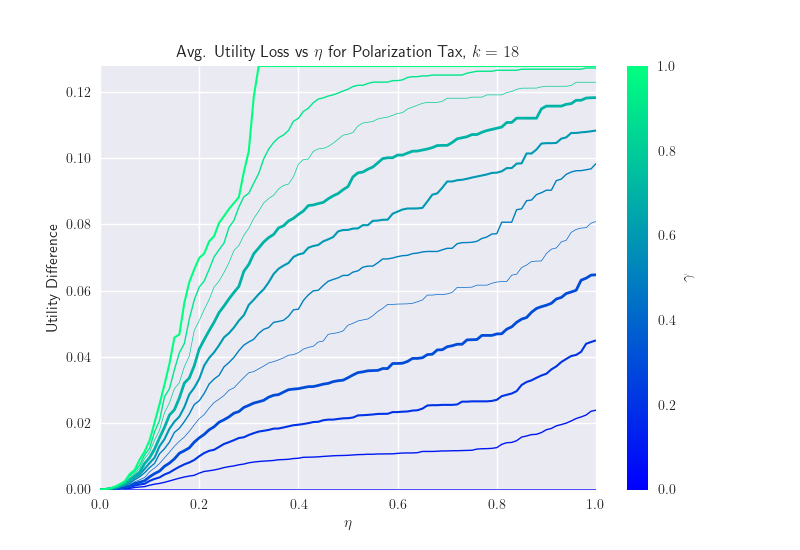}
        \caption{All User Types}
        \label{fig:ptax_util_diff_all}
    \end{subfigure}
    \begin{subfigure}[b]{0.49\textwidth}
        \centering
        \includegraphics[width=\textwidth]{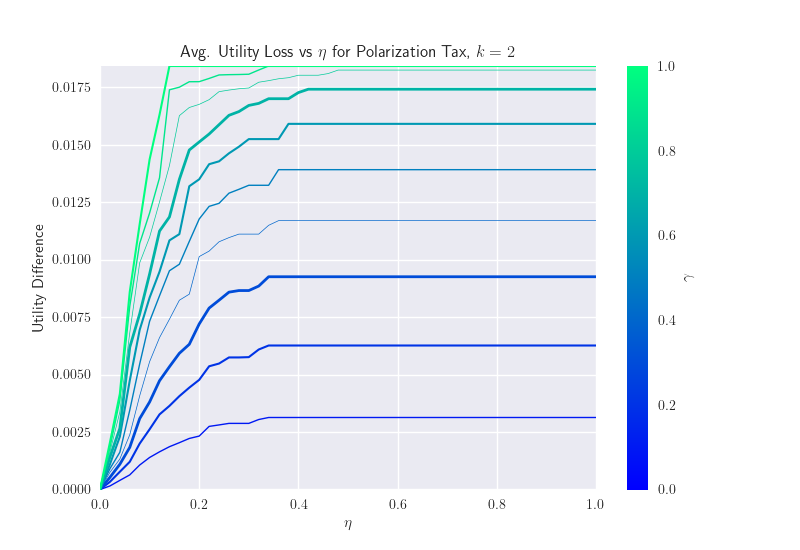}
        \caption{Action and Crime Content Lovers}
        \label{fig:ptax_util_diff_3}
    \end{subfigure}
    \caption{Polarization Tax: Utility Difference as function $\gamma$ and $\eta$}
    \label{fig:ptax_util_diff_diverse}
\end{figure}
Figure~\ref{fig:ptax_util_diff_diverse} plots the change in the additive utility loss \[\frac{1}{n}\sum_{i=1}^n \vec{\mu}_i \cdot \vec{p}_i^* - \frac{1}{n}\sum_{i=1}^n \vec{\mu}_i \cdot \vec{p}_i^{\gamma; \eta}\] for all genres and the entire population of users described in Section~\ref{sec:exp_setup}.

\end{document}